\documentclass[11pt]{article}

\usepackage[square,sort,comma,numbers]{natbib}
\usepackage{amsmath}
\usepackage{amsthm}
\usepackage{amssymb}

\usepackage{algorithm}

\usepackage{algpseudocode}

\usepackage{color}
\usepackage{xcolor}

\usepackage{babel}

\usepackage{graphicx}
\usepackage{caption}
\usepackage{subcaption}
\usepackage{tikz}
\usepackage{wrapfig,epsfig}
\usepackage{psfrag}
\usepackage{epstopdf}

\usepackage{comment}

\usepackage[margin=1in, letterpaper]{geometry}
\usepackage{setspace}

\usepackage{tabularx}
\usepackage{longtable}
\usepackage{multirow}
\usepackage{array}

\usepackage[bookmarksnumbered=true]{hyperref}

\newcommand{\wh}[1]{\widehat{#1}}

\newcommand{\dist}{\operatorname{dist}}

\DeclareMathOperator*{\E}{\mathbb{E}}

\newcommand{\supp}{\mathsf{supp}}

\newcommand{\rect}{\mathsf{rect}}
\newcommand{\sinc}{\mathsf{sinc}}
\newcommand{\comb}{\mathsf{Comb}}

\newcommand{\R}{\mathbb{R}}

\newcommand{\Z}{\mathbb{Z}}
\newcommand{\C}{\mathbb{C}}
\newcommand{\cC}{\mathcal{C}}

\newcommand{\bi}{\mathbf{i}}

\newtheorem{theorem}{Theorem}[section]

\newtheorem{fact}[theorem]{Fact}
\newtheorem{lemma}[theorem]{Lemma}
\newtheorem{claim}[theorem]{Claim}

\newtheorem{remark}[theorem]{Remark}

\newtheorem{corollary}[theorem]{Corollary}

\newenvironment{proofof}[1]{\bigskip \noindent {\it Proof of #1.}\quad }
{\qed\par\vskip 4mm\par}

\title{Learning Multiband Signals and Fourier-sparse Signals}
\author{Dongrun Cai\thanks{\tt{cdr@mail.ustc.edu.cn}, University of Science and Technology of China, Hefei 230026, China.} \and Xue Chen\thanks{\tt{xuechen1989@ustc.edu.cn}, University of Science and Technology of China, Hefei 230026, China and Hefei National Laboratory, Hefei 230088, China. Supported by NSFC 62372424 and Quantum Science and Technology-National Science and Technology Major Project 2021ZD0302901.}
\and Xiaowei Shao \thanks{\tt{shaoxiaowei@mail.ustc.edu.cn}, University of Science and Technology of China, Hefei 230026, China.}
}
\date{}

\begin{document}

\maketitle

\begin{abstract}
    We consider efficient algorithms to learn multiband signals and Fourier-sparse signals in this work. A mutliband signal has a Fourier transform supported by a bounded number of intervals, say $I_1 \cup I_2 \cdots \cup I_n$. There is a long line of research on multiband signals. In particular, Avron et al. \cite{AKMMVZ18} showed an efficient reconstructing algorithm whose sample complexity is almost optimal. However, all previous algorithms for multiband signals consider the reconstructing problem in which the locations of $I_1,\ldots,I_n$ are given as a priori knowledge. On the other hand, although the problem of learning Fourier-sparse signals with $k$ arbitrary frequencies dates at least to Prony in 1795, designing efficient and robust learning algorithms is still an open problem. The state-of-the-art is an efficient algorithm of $\tilde{O}(k^3)$ samples and $\tilde{O}(k^{3 \omega})$ time (for the matrix multiplication constant $\omega$) from the very recent work by Cai et al. \cite{CCSW}, while the statistical upper bound is $\tilde{O}(k^2)$ samples.

    Our main results are efficient learning algorithms for multiband signals and Fourier-sparse signals. Specifically, let $x$ be a multiband signal (whose support of $\wh{x}$ is $I_1 \cup \cdots \cup I_n$) or a $k$-Fourier-sparse signal like $x(t):=\sum_{j=1}^k \alpha_j \cdot e^{2 \pi \bi f_j t}$, and let $[-1,1]$ be the time window. The algorithmic problem is to learn $\wh{x}$ and interpolate $x$ in the time window  via noisy samples in $[-1,1]$.
    \begin{enumerate}
        \item We show an efficient algorithm to recover the locations of the bands $I_1,\ldots,I_n$ in $\wh{x}$ within $\tilde{O}(n+\sum_i |I_i|)$ samples and $\tilde{O}(n+\sum_i |I_i|)$ time. To the best of our knowledge, this is the first efficient algorithm to recover locations of $I_1,\ldots,I_n$. Furthermore, combining this with the reconstructing algorithm by Avron et al. \cite{AKMMVZ18} provides an efficient learning algorithm that uses $\tilde{O}(n+\sum_i |I_i|)$ samples and $\tilde{O}(n+\sum_i |I_i|)^{3}$ time to return an interpolation $\tilde{x}$ with $\tilde{x}(t)\approx x(t)$ in the time window.

        \item We show that every $k$-Fourier-sparse signal $x$ admits a multiband approximation $z$ whose Fourier transform is of support size $|\supp(\wh{z})|=\tilde{O}(k^2)$. This provides a new connection between multiband signals and Fourier-sparse signals. Based on this new connection and our algorithm for learning multiband signals, we show an interpolation algorithm for $k$-Fourier-sparse signals with $\tilde{O}(k^2)$ samples and $\tilde{O}(k^5)$ time. This matches the statistical upper bound \cite{CCSW} that was only known to be achieved by an exponential time algorithm.
    \end{enumerate}
\end{abstract}

\thispagestyle{empty}
\clearpage

\pagenumbering{arabic}

\section{Introduction}\label{sec:intro}
We study two classical algorithmic problems in the continuous Fourier transform - learning multiband signals and Fourier-sparse signals. In this work, we use $[-1,1]$ to denote the time window of signals\footnote{One can always shift and rescale any time window like $[0,T]$ to $[-1,1]$.} and $[-F,F]$ to denote the bandlimit of frequencies. This work focuses on continuous signals in the following two types.

\begin{enumerate}
    \item $x(t)$ is a multiband signal whose Fourier transform $\wh{x}$ is supported on a finite number of intervals, say $I_1 \cup I_2 \cup \cdots \cup I_n$ in the bandlimit $[-F,F]$. When $n=1$, we call $x$ a single-band signal. For convenience, we use $\supp(\wh{x})$ to denote these $n$ intervals $I_1,\ldots,I_n$ as the support of its Fourier spectrum.
    
    \item A $k$-Fourier-sparse signal $x(t):=\sum_{j=1}^k \alpha_j \cdot e^{2 \pi \bi f_j t}$ has $k$ arbitrary frequencies $f_1,\ldots,f_k$  and amplitudes $\alpha_1,\ldots,\alpha_k$ (equivalently, $\wh{x}$ is constituted by $k$ Dirac-delta functions). In particular, these $k$ frequencies could be arbitrary real numbers in the bandlimit $[-F,F]$.
\end{enumerate}

A basic problem in science and engineering \cite{Prony,Eldar_2015} is that given a noisy observation of $x$ in the above two types, say $y(t)=x(t)+\eta(t)$ where $\eta$ denotes the noise, how to learn $x$ via a set of samples $y(t_1),\ldots,y(t_m)$ in the time window efficiently? This work would measure efficiency by sample complexity $m$ and time complexity. For convenience, we call an algorithm efficient if its running time is polynomial in the number of bits to represent $\wh{x}$, i.e., $n \cdot \log F + \sum_i |I_i|$ for multiband signals and $k \cdot \log F$ for Fourier-sparse signals.

In this work, we study the adversarial noise setting where $\eta$ could be an arbitrary noise function with a bounded $\ell_2$ norm in the time window. Formally, let $\|z\|_{[-1,1]}:=(\int_{-1}^1 |z(t)|^2 \mathrm{d} t)^{1/2}$ for any continuous signal $z$. This work assumes $\|\eta\|_{[-1,1]}^2 \le \epsilon \cdot \|x\|_{[-1,1]}^2$ where the noise rate $\epsilon$ is a small constant. 

For multiband signals, previous works \cite{slepian1961prolate,landau1961prolate,landau1962prolate,xiao2001prolate,shkolnisky2006approximation,AKMMVZ18} have focused on the \emph{reconstruction} question --- how to reconstruct $x(t)$ in the time window efficiently given the locations of bands $I_1,\ldots,I_n$ in $\wh{x}$? Different from the discrete setting, given $\supp(\wh{x})$, one can not learn all Fourier coefficients $\wh{x}(f)$ in the continuous setting. So a natural goal is to output an interpolation $\tilde{x}$ such that $\tilde{x} \approx x$ in the time window. 

Seminal works by Slepian, Landau, Pollak, and Widom \cite{slepian1961prolate,landau1961prolate,landau1962prolate,Landau1980EigenvalueDO} showed a sequence of explicit basis functions, called prolate spheroidal wave functions, for interpolating multiband signals. Furthermore, Avron et al.~\cite{AKMMVZ18} provided an efficient reconstructing algorithm whose sample complexity $m:=\tilde{O}(n+|I_1|+\cdots+|I_n|)$ is optimal (up to constant) and time complexity is $m^{O(1)}$. However, all previous works assume the locations, equivalently the Fourier support $\supp(\wh{x})$, are known or given to the algorithm. In this work, we call such an algorithm a \emph{reconstructing} algorithm in order to distinguish it from our \emph{learning} problem where $\supp(\wh{x})$ is unknown. An exception is \cite{CP19_ICALP} that studies how to recover the Fourier support of a single-band signal efficiently. While multiband signals have a wide variety of applications in communication, imaging, and statistics \cite{Eldar_2015}, an important open question \cite{AKMMVZ18} is how to recover the locations of the bands in $\supp(\wh{x})$ efficiently?

For learning Fourier-sparse signals, although there is a long line of research on this problem since Prony's classical work in 1795 \cite{Prony} (to name a few \cite{m69,BM86,BCGLS,Moitra,PS15,CKPS17,SSWZ23,CCSW}), the optimal sample complexity is still wide open. The best known results\footnote{This work uses notation $\tilde{}$ to simplify terms like $(\log k)^{O(1)}$, $(\log F)^{O(1)}$, and $\log (n + \sum_{i} |I_i|)^{O(1)}$.} are $\tilde{O}(k^2)$ for the statistical upper bound (despite the running time) and $\tilde{O}(k^3)$ for efficient learning algorithms from the very recent work by Cai et al. \cite{CCSW}\footnote{While Theorem 1.4 of \cite{CCSW} showed this result under a conjecture about the growth rate of $|x(t)|$ outside the time window, Zhang provided a proof of this conjecture \cite{zhang2026optimalextrapolationboundssparse} very recently.}.
This leaves a big gap to the lower bound $\Omega(k \log F)$. On the other hand, for restricted settings such as the noiseless case and signals with well-separated frequencies $\min_{i \neq j} |f_i-f_j| = \Omega(1)$, there are efficient algorithms \cite{Prony,m69,BM86,BCGLS,PS15} with sample complexity $O(k)$ or time complexity $\tilde{O}(k)$.

A bottleneck of learning $k$-Fourier-sparse signals is that it is impossible to recover all frequencies accurately even under exponentially small noise \cite{Moitra}.
In fact, the error of frequency estimation is $\tilde{\Omega}(k^2)$ at worst \cite{CP19_ICALP} when the $k$ frequencies are arbitrary located. Previous learning algorithms \cite{CKPS17,SSWZ23,CCSW} consider a Taylor approximation of degree $\tilde{\Omega}(k^2)$ to fix the error of every frequency estimation. For $k$ frequencies, the summation of degrees over $k$ Taylor approximations would be $\tilde{O}(k^3)$, which leads to the sample complexity $\tilde{O}(k^3)$ shown in \cite{CCSW}. Because all previous algorithms are based on the error of frequency estimation (which is $\tilde{\Omega}(k^2)$ from \cite{CP19_ICALP}), this is a bottleneck of the current approach.

In this work, we study efficient learning algorithms for multiband signals and Fourier-sparse signals and the connections between these two types of signals. Our first result is learning algorithms for multiband signals under mild conditions. In particular, we show how to learn the locations of bands in $\wh{x}$ within an almost-optimal number of samples. This not only fills the gap between previous reconstructing algorithms and the learning problem but also provides an efficient learning algorithm with almost-optimal sample complexity. 

More importantly, we show a new connection between Fourier-sparse signals and multiband signals that every $k$-Fourier-sparse signal has a multiband approximation of total length $\tilde{O}(k^2)$. Via spectral sparsification techniques \cite{CMM17}, Avron et al. \cite{AKMMVZ18} have shown reductions from signals with simple Fourier spectra (including multibands) to Fourier-sparse signals and applied methods for Fourier-sparse signals to reconstructing signals with simple Fourier spectra. However, much less is known about the reverse direction. To the best of our knowledge, our work provides the first application of learning signals with simple Fourier spectra to learn Fourier-sparse signals.
This connection brings a new method to learning Fourier-sparse signal compared to the previous interpolation method \cite{CKPS17,SSWZ23,CCSW}. Consequently, it shows a learning algorithm for $k$-Fourier-sparse signals with $\tilde{O}(k^2)$ samples and $\tilde{O}(k^5)$ time. This matches the statistical upper bound shown in \cite{CCSW} but improves the running time from exponential in $k$ to $\tilde{O}(k^5)$. 

\subsection{Main Results}\label{sec:intro_results}
Our main results are several learning algorithms for multiband signals and Fourier-sparse signals. Similarly to classical works on compressed sensing and sparse Fourier transforms \cite{CRT06,TBSR,HIKP}, we focus on the sample complexity, denoted by $m$ for ease of exposition. This is because our algorithms take time $m^{O(1)}$ like many of the previous works; and in many applications of Fourier transforms, taking a sample is more expensive than computation.

Since the learning algorithm for Fourier-sparse signals is based on the learning algorithm for multiband signals, we show how to learn multiband signals at first.

\paragraph{Learning multiband signals.} 
We state our result for learning multiband signals as follows. While this algorithm will recover $\supp(\wh{x})$, it will take an upper bound on $n  +\sum_i |I_i|$ as an input parameter. Recall that $\|y\|_{I}:=(\int_I |y(t)|^2 \mathrm{d} t)^{1/2}$ for any signal $y$ in an interval $I$ and the noise rate $\epsilon$ is a small constant in this work.

\begin{theorem}\label{inform:learn_multiband}[Informal version of Theorem~\ref{thm:learn_multiband}]
    Given a small constant $\epsilon$, let $x$ be a multiband signal whose Fourier spectrum $\supp(\wh{x})=I_1 \cup I_2 \cup \cdots \cup I_n$ in $[-F,F]$ has $n$ \emph{unknown} intervals. In the time window $[-1,1]$, suppose that $x$ satisfies
    \begin{equation}\label{cond:multiband_recovery_intro} 
        \int_{-1}^1 |x(t)|^2 \mathrm{d} t \ge (1-\epsilon) \int |x(t)|^2 \mathrm{d} t.
    \end{equation} 
    
     Given any noisy observation $y(t)=x(t)+\eta(t)$ in $[-1,1]$ with $\|\eta\|_{[-1,1]}^2 \le \epsilon \cdot \|x\|_{[-1,1]}^2$ and an estimation of $n +\sum_i |I_i|$,       there exists an efficient algorithm that takes $\tilde{O}(\frac{n+\sum_i |I_i|}{\epsilon})$ samples and $\tilde{O}\big(n +\sum_i |I_i|\big)^3$ time to recover most intervals $I_i$ in $\supp(\wh{x})$ and return an interpolation $\tilde{x}$ with $\|x-\tilde{x}\|_{[-1,1]}^2 = O(\epsilon) \cdot \|x\|_{[-1,1]}^2$. 
\end{theorem}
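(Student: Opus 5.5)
The plan has two stages: first recover a small set of frequency intervals covering most of the energy of $\wh{x}$, then hand these intervals to the reconstructing algorithm of Avron et al.~\cite{AKMMVZ18}. Write $K := n+\sum_i |I_i|$. Condition~\eqref{cond:multiband_recovery_intro} says $x$ is almost time-limited to $[-1,1]$. By Plancherel, the windowed observation $y\cdot \mathbf{1}_{[-1,1]}$ then has a Fourier transform whose energy lies, up to an $O(\epsilon)$ fraction, on $\supp(\wh{x})$ blurred by a $\sinc$ kernel of width $O(1)$. So I will work with a smoothed spectrum. I will multiply by a smooth window $H$ of time length $\Theta(1)$, chosen so that $\wh{H}$ decays super-polynomially and is essentially supported on an interval of length $\tilde{O}(1)$. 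Then $\wh{y\cdot H} = \wh{x}*\wh{H} + \wh{\eta\cdot H}$. Its energy concentrates, up to $O(\epsilon)\|x\|_{[-1,1]}^2$, on the $\tilde{O}(1)$-neighborhood of $I_1\cup\cdots\cup I_n$, a set of total length $\tilde{O}(K)$.

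\textbf{Step 1: frequency-domain energy estimation.} For a candidate frequency block $B$ of length $\Theta(1)$, I want to estimate $\int_B |\wh{y\cdot H}(f)|^2\,\mathrm{d}f$. To do this I sample $y$ at random $t\in[-1,1]$ and use hashing/filtering as in sparse FFT. Concretely, I bucket $[-F,F]$ into $\tilde{O}(K)$ hash buckets by a random dilation-and-modulation filter, combining the standard HIKP/CKPS-style filter $G$ (compact in time, flat in frequency) with random scaling $\sigma$. Each bucket's energy is then estimated from $\tilde{O}(1)$ samples per bucket. Since at most $\tilde{O}(K)$ unit blocks carry energy, a random hash isolates most heavy blocks. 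Repeating $O(\log F)$ times and using the standard location-recovery-by-bits trick (as in the sparse recovery for $k$-sparse signals, where the role of $k$ is now played by the number $\tilde{O}(K)$ of occupied unit blocks) identifies the set $S$ of heavy blocks. The total cost is $\tilde{O}(K/\epsilon)$ samples. The $1/\epsilon$ comes from estimating energies to additive accuracy $\epsilon\|x\|^2$ against adversarial noise.

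\textbf{Step 2: merging and guarantee.} I take the union of blocks in $S$, enlarged by $\tilde{O}(1)$, and merge overlapping blocks into intervals $J_1,\dots,J_{n'}$ with $n'+\sum|J_j| = \tilde{O}(K)$. I then need to show that the projection of $x$ onto signals with spectrum in $\bigcup J_j$ approximates $x$ on $[-1,1]$ to error $O(\epsilon)\|x\|_{[-1,1]}^2$. This holds because the missed mass of $\wh{x}$ is controlled by the missed energy of $\wh{x\cdot H}$: by condition~\eqref{cond:multiband_recovery_intro}, $\|x\|^2$ and $\|x\|_{[-1,1]}^2$ agree, and $H\approx 1$ on most of the window. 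I will also argue that every original $I_i$ carrying non-negligible energy is covered, which is the sense in which the algorithm "recovers most intervals."

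\textbf{Step 3: reconstruction.} I run the algorithm of \cite{AKMMVZ18} with the support $\bigcup J_j$. It uses leverage-score sampling of $\tilde{O}(\sum|J_j|+n')=\tilde{O}(K)$ points and solves a regression in time $\tilde{O}(K)^3$. Its guarantee against the noise $\eta$ plus the out-of-band residual (energy $O(\epsilon)$) yields $\tilde{x}$ with $\|x-\tilde{x}\|_{[-1,1]}^2=O(\epsilon)\|x\|_{[-1,1]}^2$.

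The main obstacle I expect is Step 1. The difficulty is converting heavy-energy detection of a \emph{continuous} spectrum into block localization with only $\tilde{O}(K)$ samples, while controlling the leakage between buckets from filter tails and the adversarial noise, which can be placed in frequency to fool the detector. My first attempt will use a filter with $\ell_2$-flat passband, so that leakage is bounded by $\epsilon$ times the total energy. I will then charge false positives to the $\tilde{O}(K)$ budget via Markov's inequality on the noise energy per bucket.
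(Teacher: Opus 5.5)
Your architecture matches the paper's: hash with the filter $G$ under a random dilation $\sigma$ and shift $b$ into $B=O(n/(c\epsilon))$ buckets, locate each heavy band to within $O(R)$, then give the union of these neighborhoods to the reconstruction algorithm of \cite{AKMMVZ18}. The gap is that the step you call standard is not standard here; it is the main technical lemma behind band location (Lemma~\ref{lem:estimate_mid_J}).

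\textbf{Missing idea: phase estimation for a bucket holding a band.} Location by bits, with each occupied unit block playing the role of one of $k$ frequencies, needs a well-hashed bucket to behave like a single tone. Then bucket values at two time shifts differing by $\beta$ have ratio exactly $e^{2\pi\bi f\beta}$ up to noise. Here a well-hashed bucket holds a whole band. Its output $z^{(h_j)}(t)$ is a band-limited function of $t$ with spectral width $\Theta(1/(B\sigma))=\Theta(R/c)$, necessarily wider than the band so the band is isolated with good probability. So $|z^{(h_j)}(t)|$ genuinely varies.

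What one can prove is only an averaged statement. For $\beta\le c_0B\sigma$, the function $g(t):=e^{2\pi\bi\cdot mid(J_j)\beta}z^{(h_j)}(t)-z^{(h_j)}(t+\beta)$ satisfies $\|g\|_2^2=O(c_0^2+c)\|z^{(h_j)}\|_2^2$. A pointwise ratio at a uniformly random time does not inherit this. In the worst case $\sup_t|z^{(h_j)}(t)|^2$ is $\Theta(R/c)$ times its average, so a uniform $t$ lands where $|z^{(h_j)}|$ is non-negligible only with probability about $c/R$. Elsewhere $g(t)/z^{(h_j)}(t)$ is uncontrolled, and you cannot reach the success probability above $1/2$ that median amplification over the $O(\log F)$ search rounds needs.

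The paper's \textsc{GetSamples} supplies this ingredient:
\begin{enumerate}
\item Draw $m=\Theta(R)$ uniform times.
\item Use Chernoff, via the $O(R/c)$ sup-to-average bound for the in-band part, to show $\sum_i|z^{(h_j)}(t_i)|^2$ is representative.
\item Pick $s$ among the $t_i$ with probability proportional to $|z^{(h_j)}(t_i)|^2$. This makes the expected squared relative error exactly $\sum_i|g(t_i)|^2/\sum_i|z^{(h_j)}(t_i)|^2=O(c_0^2+c)$. The correlation estimate $\sum_i z^{(h_j)}(t_i+\beta)\overline{z^{(h_j)}(t_i)}/\sum_i|z^{(h_j)}(t_i)|^2$ would serve equally well.
\item Stop the search at $\beta\approx c_0B\sigma$. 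This is why bands are located only to within $O(R)$.
\end{enumerate}
Your remedy for Step~1 (an $\ell_2$-flat passband plus Markov on per-bucket noise) addresses leakage and adversarial noise, which is the easier part. That part is a Markov bound over $(\sigma,b)$ on the out-of-band energy $O(\|w\|_2^2/B)$ in the heavy band's bucket. The choice $B=O(n/(c\epsilon))$, not the precision of energy estimates, is where the $1/\epsilon$ in the sample count comes from.

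\textbf{Second issue: the smooth window $H$.} It is unnecessary, and your Step~2 justification for it is wrong. As you describe it, $H$ tapers inside the window ($H\approx 1$ only on most of it). That does not give $\|xH\|\approx\|x\|_{[-1,1]}$. A multiband signal with large $\sum_i|I_i|$ can satisfy the concentration condition while placing almost all its energy within distance $O(1/\sum_i|I_i|)$ of $t=1$, for example a smooth-spectrum pulse centered just inside the endpoint. There, any $H$ vanishing outside $[-1,1]$ with $\tilde{O}(1)$ spectral spread is nearly $0$. Then $\|xH\|^2$ can fall below the noise level $\epsilon\|x\|_{[-1,1]}^2$, and the heavy bands of $\widehat{x}$ are not heavy in $\widehat{yH}$.

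The paper simply uses the rectangular window: $w=y$ on $[-1,1]$ and $w=0$ outside. The concentration condition then gives $\|w-x\|_2^2\le 2\epsilon\|x\|_2^2$ on all of $\mathbb{R}$. So $\widehat{w}$ is $\widehat{x}$ plus noise of relative energy $O(\epsilon)$, with no $\sinc$ blurring to analyze. Your heavy/non-heavy accounting in Step~2 and the reconstruction in Step~3 then go through as in the paper.
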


Here are a few remarks. 
Previous reconstructing algorithms for multiband signals (including \cite{slepian1961prolate,landau1961prolate,landau1962prolate,Landau1980EigenvalueDO,AKMMVZ18}) assume that $\supp(\wh{x})$ is given as a priori knowledge to interpolate $x$ efficiently. Hence our main contribution in Theorem~\ref{inform:learn_multiband} is an efficient algorithm (see Theorem~\ref{thm:locations_multiband}) that takes $\tilde{O}(n + \sum_{i=1}^n |I_i|)$ time and $\tilde{O}(n + \sum_{i=1}^n |I_i|)$ samples to recover the locations of these bands in $\supp(\wh{x})$. In particular, this sample complexity matches the sample complexity to reconstruct $x(t)$.
However, different from the discrete setting, recovering $\supp(\wh{x})$ does not provide an efficient method to reconstruct the continuous Fourier spectrum $\wh{x}$. So our algorithm runs in time $\tilde{O}(n + \sum_{i=1}^n |I_i|)^{3}$ to find an interpolation $\tilde{x}$ like previous reconstructing algorithms.

Condition~\ref{cond:multiband_recovery_intro} is necessary to guarantee that the algorithms can recover $\supp(\wh{x})$. Here is a counter-example. For any $M = \Omega(\sqrt{1/\epsilon})$ and any $\delta \le \epsilon/M$, let us consider two shifted and rotated box functions $g(t)=e^{2 \pi \bi (t - 1)M}$ for $t \in [1 - \delta, 3-\delta]$ and $h(t)=e^{-2 \pi \bi (t - 1) M}$ for $t \in [1 - \delta, 3-\delta]$. In the time domain, they look almost the same in $[-1,1]$ since $\delta M \le \epsilon$.
In the Fourier domain, $\wh{g}$ and $\wh{h}$ are concentrated around $[M - O(\sqrt{\frac{1}{\epsilon}}),M + O(\sqrt{\frac{1}{\epsilon}})]$ and $[-M - O(\sqrt{\frac{1}{\epsilon}}), -M + O(\sqrt{\frac{1}{\epsilon}})]$. After truncating $\wh{g}$ and $\wh{h}$ into single-bands, it is still impossible to distinguish $g$ and $h$ in the time window under adversarial noise. Thus, $\supp(\wh{g})$ and $\supp(\wh{h})$ can not be distinguished and recovered.

An intriguing question is to output an interpolation $\tilde{x} \approx x$ without Condition~\ref{cond:multiband_recovery_intro}.  Chen and Price \cite{CP19_ICALP} proposed two quantities to characterize the recovery of a single-band signal $x$. The first quantity is the ratio of the maximum value of $x$ to its average: $\kappa:=\frac{\underset{s \in [-1,1]}{\max} |x(s)|^2}{\underset{s \in [-1,1]}{\E} |x(s)|^2}$. This ratio determines the concentration of $|x(t)|^2$ in the time window, which affects the ``effective" length of the time window under noise, as the above example of $g$ and $h$. The second quantity is the growth rate of $x$ outside the time window --- let $\tau$ be a parameter satisfying $|x(t)| \le \kappa^{O(1)} \cdot |t|^\tau \cdot \max_{s \in [-1,1]} |x(s)|$ for any $t \notin [-1,1]$. Basically, $\tau$ characterizes the trouble that a degree $(k-1)$ polynomial in the time window could either be the Taylor expansion of $e^{2 \pi \bi f t}$ for $k \ge C \cdot f$ or a $(k-1)$-Fourier-sparse signal whose frequencies are around 0. Combining the method of Chen and Price \cite{CP19_ICALP} and our algorithm provides an interpolation algorithm for multiband signals whose $\kappa$ and $\tau$ are bounded.

\begin{corollary}\label{inform:interpolating_multiband}
        Given parameters $F$, $\tau$, $\kappa$, $R$, and $n$, let $x$ be a multiband signal such that $\supp(\wh{x})=I_1 \cup I_2 \cup \cdots \cup I_n$ in $[-F,F]$ has $n$ unknown intervals of length at most $R$. Suppose $x$ satisfies
        \begin{equation}
            \frac{\underset{s \in [-1,1]}{\max} |x(s)|}{\underset{s \sim [-1,1]}{\E} |x(s)|^2} \le \kappa \quad \text{ and } \quad |x(t)| \le \kappa^{O(1)} \cdot |t|^{\tau} \cdot (\max_{s \in [-1,1]} 
            |x(s)|) \text{ for any } t \notin [-1,1].
        \end{equation}        
                        Given any noisy observation $y=x+\eta$ in the time window $[-1,1]$ with $\|\eta\|_{[-1,1]}^2 \le \epsilon \cdot \|x\|_{[-1,1]}^2$,  for $R':=R+O(\tau \log \tau + \frac{\kappa}{\epsilon} \log \frac{\kappa}{\epsilon})$, there exists an efficient algorithm that takes $\tilde{O}(n \cdot R')$ samples and $\tilde{O}(n \cdot R')^{3}$ time to output $\tilde{x}$ such that $\|x-\tilde{x}\|_{[-1,1]}^2 = O(\epsilon) \cdot \|x\|_{[-1,1]}^2$.         
\end{corollary}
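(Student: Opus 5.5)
The plan is to reduce Corollary~\ref{inform:interpolating_multiband} to Theorem~\ref{inform:learn_multiband}. Concretely, I would build from $x$ an auxiliary multiband signal $z$ that (i) agrees with $x$ up to $O(\epsilon)\|x\|_{[-1,1]}^2$ in the time window, and (ii) satisfies the concentration Condition~\eqref{cond:multiband_recovery_intro}. Following Chen and Price \cite{CP19_ICALP}, the natural choice is $z := x\cdot H$, where $H$ is a filter that is essentially $1$ on most of $[-1,1]$ and decays very fast outside it. Its Fourier transform $\wh{H}$ should be (approximately) supported on an interval of length $\Delta = O(\tau\log\tau + \frac{\kappa}{\epsilon}\log\frac{\kappa}{\epsilon})$. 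Then $\wh{z} = \wh{x} * \wh{H}$ is supported on $\bigcup_i (I_i + [-\Delta/2,\Delta/2])$. This is a union of at most $n$ intervals, each of length at most $R + \Delta = R'$, so the input estimate for Theorem~\ref{inform:learn_multiband} is $n + \sum_i |I_i'| \le n(1+R') = O(nR')$.

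The steps, in order:
\begin{enumerate}
\item Construct $H$ as in \cite{CP19_ICALP}: a $\rect$ function on a slightly shrunk window $[-1+\delta, 1-\delta]$ convolved with a power of $\sinc$ (or a Gaussian-type kernel). I will then truncate $\wh{H}$ to the band $[-\Delta/2,\Delta/2]$ and absorb the truncation error into the $O(\epsilon)$ budget.
\item Show $\|x - z\|_{[-1,1]}^2 = O(\epsilon)\|x\|^2_{[-1,1]}$. The loss comes only from the boundary strips of width $\delta$, where $|x|^2 \le \kappa \cdot \E|x|^2$. So the choice $\delta \approx \epsilon/\kappa$ makes this loss $O(\epsilon)$, and this dictates the $\frac{\kappa}{\epsilon}\log\frac{\kappa}{\epsilon}$ term in $\Delta$ (the sharpness $1/\delta$ of the cutoff times a log for tail decay).
\item Show $\int_{|t|>1}|z|^2 \le \epsilon \int |z|^2$. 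Outside the window, $|H(t)|$ decays like $(C/|t|)^{\Theta(\Delta)}$. Using the growth bound $|x(t)| \le \kappa^{O(1)}|t|^\tau \max_{[-1,1]}|x|$ and $\max|x|^2 \le \kappa\,\E|x|^2$, the product $|x H|^2$ outside $[-1,1]$ is bounded by $\kappa^{O(1)}|t|^{2\tau}(C/|t|)^{\Theta(\Delta)}\|x\|^2_{[-1,1]}$. This is integrable and at most $\epsilon\|x\|^2_{[-1,1]}$ once $\Delta \gtrsim \tau\log\tau + \log(\kappa/\epsilon)$.
\item Treat the observation $y$ as a noisy sample of $z$ with noise $\eta + (x - z)$, whose energy is $O(\epsilon)\|z\|^2_{[-1,1]}$. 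Apply Theorem~\ref{inform:learn_multiband} to get $\tilde{x}$ with $\|\tilde{x} - z\|_{[-1,1]}^2 = O(\epsilon)\|z\|_{[-1,1]}^2$. The triangle inequality then finishes, with sample and time complexity $\tilde{O}(nR')$ and $\tilde{O}(nR')^3$.
\end{enumerate}

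The main obstacle is step 3 together with the truncation in step 1. The growth bound only gives polynomial control $|t|^\tau$, so $H$ must decay faster than any relevant polynomial while its spectrum stays within width $\Delta$. These two demands trade off against each other, and $\wh{H}$ is never exactly compactly supported together with fast time decay. I would first try $H = \rect_{[-1+\delta,1-\delta]} * (\text{normalized } \sinc^{d}(\cdot\, d/\delta))$ with $d = \Theta(\tau + \log(\kappa/\epsilon))$. This $H$ has exactly compact Fourier support of width $O(d^2/\delta)$-type, which may be lossy. If that loses too much, I would switch to the filter and bounds already established in \cite{CP19_ICALP} for single-band signals, which are stated in exactly these parameters. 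A secondary subtlety is that the noise in step 4 scales with $\|x\|$ rather than $\|z\|$; step 2 shows these norms are within a $1\pm O(\epsilon)$ factor.
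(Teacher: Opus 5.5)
Your proposal is correct and is essentially the paper's proof: the paper also takes the filter $(H,\wh{H})$ directly from Theorem 4.2 of \cite{CP19_ICALP}, applies Theorem~\ref{thm:learn_multiband} to the multiband signal $H\cdot x$, and finishes with the triangle inequality. That filter is your fallback, and it is genuinely needed: your first-try $\sinc$-power filter has spectral width of order $d^2/\delta=\Theta\big((\tau+\log\frac{\kappa}{\epsilon})^2\frac{\kappa}{\epsilon}\big)$, which exceeds the claimed $R'-R$.

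The only other difference is cosmetic. The paper feeds $H\cdot y$ (set to $0$ outside $[-1,1]$) to the learner and controls $\|H\eta\|_{[-1,1]}$ via $H\le 1+\epsilon$ on the window. You instead feed $y$ itself and absorb $(1-H)x$ into the noise, which works equally well.
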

In Corollary~\ref{inform:interpolating_multiband}, we state $R$ as an upper bound of $\max_i |I_i|$ such that the sample complexity and the time complexity depend on $n \cdot R$ instead of $\sum_i |I_i|$. We remark that in many applications, one could pick a proper $R$ and set $n$ to be an upper bound of $\sum_i \lceil \frac{|I_i|}{R} \rceil$.

\paragraph{Learning Fourier-sparse signals.} Our second result is a learning algorithm for $k$-Fourier-sparse signals with $\tilde{O}(k^2)$ samples. Its sample complexity matches the statistical upper bound $\tilde{O}(k^2)$ in \cite{CCSW}; but it improves the time complexity from exponential in $k$ to $(k \log F)^{O(1)}$. The key of this algorithm is that every $k$-Fourier-sparse signal admits a multiband approximation whose Fourier support is of size $\tilde{O}(k^2)$. We state this approximation and its application together.

\begin{theorem}\label{thm:learn_Fourier_sparse}
    Given any constant $\epsilon>0$ and $k$, for any $x(t):=\sum_{j=1}^k \alpha_j e^{2 \pi \bi f_j t}$ with $k$ arbitrary frequencies $f_1,\ldots,f_k \in [-F,F]$, there exists a multiband signal $z$ such that (a) $|\supp(\wh{z})|=\tilde{O}(k^2/\epsilon)$; (b) $\|z-x\|_{[-1,1]}^2 \le \epsilon \cdot \|x\|_{[-1,1]}^2$; (c) $z$ satisfies \eqref{cond:multiband_recovery_intro}.
     
    Moreover, for any bandlimit $F$ and any observation $y(t):=x(t)+\eta(t)$ with $\|\eta(t)\|_{[-1,1]}^2 \le \epsilon \cdot \|x(t)\|_{[-1,1]}^2$, there exists an efficient algorithm that takes $m=\tilde{O}(k^{2})$ samples and $\tilde{O}(k^{5})$ time to return $\tilde{x}$ with $\|\tilde{x}-x\|_{[-1,1]}^2 \le O(\epsilon) \cdot \|x\|_{[-1,1]}^2$.
\end{theorem}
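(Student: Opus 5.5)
The plan has two parts: construct $z$ explicitly from $x$, then feed $z$ into the multiband learner of Theorem~\ref{inform:learn_multiband}, treating $x-z$ as extra noise.

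\textbf{Construction of $z$.} I would multiply $x$ by a smooth time-domain window $w$ and set $z := x\cdot w$. The window should satisfy three things: $w \approx 1$ on most of $[-1,1]$; $w$ decays fast outside a slightly larger interval $[-1-\delta',1+\delta']$; and $\wh{w}$ is essentially supported on an interval of length $\Delta=\tilde{O}(k^2/\epsilon)$. A natural choice is the filter of Chen et al.\ \cite{CKPS17}, a $\rect$ convolved with a high power of $\sinc$, truncated in frequency. Then $\wh{z}=\sum_j \alpha_j \wh{w}(f-f_j)$ is supported on $\bigcup_j [f_j-\Delta/2, f_j+\Delta/2]$. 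This is a union of at most $k$ intervals of total length $k\Delta$.

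\textbf{Controlling the length of $\supp(\wh{z})$.} The bound $k\Delta=\tilde{O}(k^3/\epsilon)$ is too large, so the length must be reduced. I would use the known concentration facts for $k$-Fourier-sparse signals: $\max_{[-1,1]}|x(t)|^2 \le \tilde{O}(k^2)\,\|x\|^2_{[-1,1]}/2$, and the bound from \cite{CP19_ICALP,CCSW} that $|x|$ cannot vary much on a sub-interval of length $1/\tilde{O}(k^2)$ near the boundary. From these, a window $w$ that cuts off only an $\epsilon/\tilde{O}(k^2)$ fraction at each end loses at most $\epsilon\|x\|^2$ energy, which gives (b). The frequency width of such a window, however, is $\tilde{O}(k^2/\epsilon)$ per frequency.

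To reach total length $\tilde{O}(k^2/\epsilon)$, I would group the frequencies. Frequencies lying within $\Delta$ of each other share one merged band, and there are few distinct clusters. Alternatively, the per-frequency width can be made adaptive: $\Delta_j\approx k\log(\cdot)/\epsilon$ in total. This comes from a sharper argument that a window needs bandwidth only $\tilde{O}(k)$ per frequency, via a Turán-type bound where $k$ appears linearly, combined with $k$ bands. This step is the main obstacle. I would first try to prove that $x$ restricted to a window of width $1/\tilde{O}(k)$ at the boundary carries at most $\tilde{O}(\epsilon)$ energy after an averaging or shifting argument: choose a random effective window $[-1+\gamma,1-\gamma]$ so that the average boundary energy is small. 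With this, $\Delta=\tilde{O}(k/\epsilon)$ per frequency and the total over $k$ bands is $\tilde{O}(k^2/\epsilon)$.

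\textbf{Condition (c) and the algorithm.} Property (c) follows because $w$ decays rapidly outside $[-1,1]$; this needs the window to be essentially supported inside $[-1,1]$ with tail energy $\le \epsilon$. I would rescale so that its time support is strictly inside $[-1,1]$ and the tails are polynomially small, and bound $|x|$ outside using the growth bound of \cite{zhang2026optimalextrapolationboundssparse}. For the algorithm, $y=z+(x-z)+\eta$ has effective noise of energy $O(\epsilon)\|z\|^2$. Applying Theorem~\ref{inform:learn_multiband} with parameter $n+\sum|I_i|=\tilde{O}(k^2)$ then gives $\tilde{O}(k^2)$ samples and outputs $\tilde{x}\approx z\approx x$. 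The running time comes from the reconstruction step on $\tilde{O}(k^2)$ samples. A direct cubic bound would give $k^6$; to reach $\tilde{O}(k^5)$, I would exploit that only $k$ bands are present, which gives a block structure in the least-squares system.
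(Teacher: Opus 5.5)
Your handling of the noise and of the algorithm is fine in spirit; the paper gets $\tilde{O}(k^5)$ by running Theorem~\ref{thm:learn_multiband} with $n\le 2k$ bands of length $R=\tilde{\Theta}(k/\epsilon)$. The gap is in the step that reduces the total band length from $\tilde{O}(k^3/\epsilon)$ to $\tilde{O}(k^2/\epsilon)$, which is the whole content of part (a).

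Merging frequencies that lie within $\Delta$ of each other saves nothing when the $f_j$ are pairwise more than $\Delta=\tilde{\Theta}(k^2/\epsilon)$ apart: you still get $k$ bands of width $\Delta$.

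A single window with per-frequency width $\tilde{O}(k/\epsilon)$ is impossible, with or without a random shift $\gamma$. Take $x$ close on $[-1,1]$ to the reproducing kernel $p$ of degree-$(k-1)$ polynomials at $t=1$. This is possible because $k$-sparse signals whose frequencies tend to $0$ approximate any degree-$(k-1)$ polynomial. Then $\|p\|_\infty^2=|p(1)|^2=\Theta(k^2)\|p\|_{[-1,1]}^2$. The Markov inequality $\|p'\|_\infty\le (k-1)^2\|p\|_\infty$ then puts a constant fraction of $\|p\|_{[-1,1]}^2$ inside $[1-\frac{1}{2k^2},1]$, so averaging over cut-offs at scale $1/\tilde{O}(k)$ cannot avoid this mass.

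Consequently any $z$ satisfying (b) and (c) for this $x$ has $\sup_t|z(t)|^2=\Omega(k^2)\|z\|_2^2$. Since $\sup_t|z(t)|^2\le|\supp(\wh{z})|\cdot\|z\|_2^2$, this forces $|\supp(\wh{z})|=\Omega(k^2)$. Your construction would instead produce a single merged band of length $\tilde{O}(k/\epsilon)$ here. This is exactly why $M_{k,\epsilon}$ needs bandwidth $\tilde{\Theta}(k^2/\epsilon)$.

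The missing idea is that the bandwidth must adapt to clusters, and that this only works because of near-orthogonality between clusters.
\begin{itemize}
\item The paper runs Algorithm~\ref{alg:cluster}, merging until every pair satisfies $\dist(\cC_i,\cC_j)>\min\{d_{min}\min\{|\cC_i|,|\cC_j|\},2\Delta_{k,\epsilon}\}$ with $d_{min}=\tilde{\Theta}(k/\epsilon)$.
\item The sharp bound of Lemma~\ref{lem:almost_orthogonal_clusters} says separation $\tilde{\Omega}(\ell r/\delta)$ gives correlation at most $\delta$. Transferring it to filtered signals (Corollary~\ref{cor:filter_functions}) and bounding the operator norm of the correlation matrix yields $\|M_{k,\epsilon}x\|_2^2=(1\pm O(\epsilon))\sum_i\|M_{k,\epsilon}x_{\cC_i}\|_2^2$ (Lemma~\ref{lem:total_energy}).
\item Only then can each cluster $\cC$ of size below $\theta_s=\Theta(k)$ be filtered by $M_{|\cC|,\epsilon|\cC|/k}$. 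This filter has bandwidth $\tilde{O}(k|\cC|/\epsilon)$ and squared error $O(\epsilon|\cC|/k)\,\|M_{k,\epsilon}x_{\cC}\|_2^2$. Cauchy--Schwarz plus near-additivity sum these errors to $O(\epsilon)\|M_{k,\epsilon}x\|_2^2$.
\item The total length is then the sum of cluster ranges, which is $\tilde{O}(k^2/\epsilon)$ by Claim~\ref{clm:bound_range_clusters}, plus $\sum_i\tilde{O}(k|\cC_i|/\epsilon)=\tilde{O}(k^2/\epsilon)$. The $O(1)$ clusters of size at least $\theta_s$ add $O(\Delta_{k,\epsilon})$.
\end{itemize}

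Your ``$\tilde{O}(k/\epsilon)$ per frequency'' is correct only for well-separated singleton clusters. Without near-orthogonality, $\sum_i\|M_{k,\epsilon}x_{\cC_i}\|_2^2$ can be far larger than $\|M_{k,\epsilon}x\|_2^2$ because clusters can cancel. Per-cluster relative errors of size $\epsilon|\cC|/k$ then need not add up to $O(\epsilon)\|x\|_{[-1,1]}^2$.
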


Previous results \cite{erdelyi2016inequalities, CP19_ICALP} imply a multiband approximation of the support size $\tilde{O}(k^3)$. So the key of Theorem~\ref{thm:learn_Fourier_sparse} is to reduce the support size to $\tilde{O}(k^2)$. One remark is that $\tilde{O}(k^2)$ is almost tight --- a lower bound $\tilde{\Omega}(k^2)$ comes from counter-examples \cite{CP19_ICALP} based on extreme properties of the Chebyshev polynomials. The basic idea of the improvement is to partition the $k$ frequencies in $\wh{x}$ into clusters (which are almost orthogonal in the time window) and replace each cluster $\cC$ by a band of length proportional to the number of frequencies in $\cC$. An important ingredient is an almost-tight characterization on the orthogonality between two Fourier-sparse signals, which may be of independent interest.

In this work, let $\langle x,y\rangle_{[-1,1]}:=\int_{-1}^1 x(t) \overline{y(t)} \mathrm{d} t$ denote the inner product of two signals $x$ and $y$ in $[-1,1]$ and $\langle x,y\rangle=\int \overline{y(t)} \mathrm{d} t$ denote the inner product in $\mathbb{R}$. This ingredient characterizes the frequency separation between two Fourier-sparse signals in order to guarantee that they are almost orthogonal in the time window. 

\begin{lemma}
\label{inform:almost_orthogonal_clusters}[Informal version of Lemma~\ref{lem:almost_orthogonal_clusters}]
Let $w(t)$ be a $\ell$-Fourier-sparse signal with frequencies $f_1,\ldots,f_{\ell}$ and $z(t)$ be a $r$-Fourier-sparse signal with frequencies $f'_1,\ldots,f'_r$. Given any correlation $\delta$, if the \emph{frequency separation} $\min_{i \in [1,\ell],j \in [1,r]} |f_i-f'_{j}|=\Omega(\frac{\ell r}{\delta} \cdot \log \frac{\ell r}{\delta})$, then

\begin{equation}
\label{eq:orthogonal_windows_intro}
|\langle w,z\rangle_{[-1,1]}|
    \le \delta \cdot \|w\|_{[-1,1]} \cdot \|z\|_{[-1,1]}.
\end{equation}
\end{lemma}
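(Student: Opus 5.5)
The plan is to use a smooth window whose Fourier transform has small support. Let $\Delta$ be the frequency separation $\min_{i,j}|f_i-f'_j|$. Write $w(t)=\sum_{i}\alpha_i e^{2\pi\bi f_i t}$ and $z(t)=\sum_j \beta_j e^{2\pi \bi f'_j t}$. For any integrable $G$ with $\supp(\wh{G})\subseteq(-\Delta,\Delta)$,
\[
\int_{\R} w(t)\overline{z(t)}\,G(t)\,\mathrm{d}t=\sum_{i,j}\alpha_i\overline{\beta_j}\,\wh{G}(f'_j-f_i)=0 .
\]
(The sign convention of $\wh{G}$ is irrelevant, since $\supp(\wh G)$ is symmetric.) Hence
\[
\langle w,z\rangle_{[-1,1]}=\int_{\R} w\overline{z}\,\bigl(\mathbf{1}_{[-1,1]}-G\bigr)\,\mathrm{d}t,
\]
and it suffices to construct $G$ that approximates $\mathbf{1}_{[-1,1]}$ well, measured against the products $|w||z|$. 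I would take $G=\mathbf{1}_{[-1+s,1-s]} * K$. Here $K$ is a nonnegative kernel of unit mass whose Fourier transform is supported in $[-\Delta,\Delta]$, for instance a normalized power $\sinc(\Delta t/d)^{d}$. This $K$ has most of its mass within distance $O(s)$ of $0$ once $\Delta\gtrsim d/s$, and it decays like $(\Delta |t|/d)^{-d}$ far away. Then $0\le G\le 1$, $G\approx 1$ on $[-1+2s,1-2s]$, and $G$ decays polynomially with degree $d$ outside $[-1,1]$.

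The error splits into three pieces. First, the interior region $[-1+2s,1-2s]$, where $|1-G|$ is tiny; this piece is bounded by Cauchy--Schwarz times $\sup|1-G|$. Second, the two boundary strips of width $2s$ inside $[-1,1]$, where $|1-G|\le 1$. Here I would use the energy bound for Fourier-sparse signals near the endpoints of the window. A pointwise bound of the form $|w(t)|^2\lesssim \frac{\ell\log \ell}{1-|t|}\,\|w\|^2_{[-1,1]}$, as in the Erdélyi / Chen--Price type estimates, integrates to $\int_{\text{strip}}|w|^2\lesssim \ell s\log(\ell/s)\cdot\|w\|^2_{[-1,1]}$. This needs some care near $|t|=1$, where the pointwise bound saturates at the global $\max|w|^2=\mathrm{poly}(\ell)\,\|w\|^2$. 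Applying Cauchy--Schwarz on the strip gives a contribution $\lesssim \sqrt{\ell r}\,s\log(\ell r/s)\,\|w\|\|z\|$. Third, the region outside $[-1,1]$. There I would use the growth bound for $k$-sparse signals outside the window, e.g. $|w(t)|\le (O(\ell)|t|)^{O(\ell)}\max_{[-1,1]}|w|$ (or the sharper bound of Zhang cited earlier). This is paid for by the $d$-th power decay of $G$, taking $d=\Theta(\ell+r)$ plus a $\log(1/\delta)$ term.

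Making the strip contribution at most $\delta/2$ forces $s\approx \delta/(\sqrt{\ell r}\log(\ell r/\delta))$. The kernel then needs bandwidth $\Delta\approx d/s$. The main obstacle is the bookkeeping here: with the naive choice $d=\Theta(\ell+r)$ one gets $\Delta\approx(\ell+r)\sqrt{\ell r}/\delta$, which is worse than the claimed $\frac{\ell r}{\delta}\log\frac{\ell r}{\delta}$ when $\ell\ne r$.

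To fix this I would not use a single kernel against the crude exterior growth bound. Instead I would treat the exterior with a separately weighted estimate. The decay only needs to beat the growth of the product $|w\overline{z}|$ at scale just beyond $1+s$, and there the growth bound $|w(1+u)|\le (1+O(\ell u))^{O(\ell)}$-type (which is essentially $e^{O(\ell^2 u)}$ for small $u$) behaves well. Alternatively I could rebalance the strip estimate as $\sqrt{\ell s}\cdot\sqrt{r s}$ with asymmetric widths. I expect the right choice to give $\Delta=O(\frac{\ell r}{\delta}\log\frac{\ell r}{\delta})$, and pinning the exponents down to exactly this bound is the step I expect to require the most work.
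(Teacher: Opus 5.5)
\textbf{There is a genuine gap: the quantitative core of the argument is missing, and one estimate it relies on is false.} The frame is the paper's: a window whose Fourier transform lies inside the separation, so that $\int_{\R} w\bar z\,G=0$, then an interior/boundary-strip/exterior split controlled by Erd\'elyi's and Zhang's bounds. (The paper uses $M^2$ rather than $G$; that difference is immaterial.)

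\textbf{The strip estimate fails.} Integrating $|w(t)|^2\lesssim\frac{\ell\log\ell}{1-|t|}\|w\|^2_{[-1,1]}$ over a boundary strip does not give $O(\ell s\log(\ell/s))$. The profile is not integrable at $|t|=1$. After capping it by $\sup|w|^2=O(\ell^2)\|w\|^2_{[-1,1]}$, you get $O(\ell^2 s)$ for $s\lesssim\log\ell/\ell$, and order $\ell\log\ell$ times a logarithm beyond that, with no factor $s$ at all. The claimed bound is false outright. Take $p=\sum_{j<\ell}(j+\frac12)P_j$ with $P_j$ the Legendre polynomials; $p$ is a limit of $\ell$-sparse signals. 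It has $\|p\|^2_{[-1,1]}=p(1)=\max_{[-1,1]}|p|=\ell^2/2$. Markov's inequality gives $p\ge p(1)/2$ on $[1-\frac{1}{2\ell^2},1]$, so at least $1/16$ of its energy lies in a strip of width $\frac{1}{2\ell^2}$, whereas your bound gives $O(\frac{\log\ell}{\ell})$.

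What is actually available is $\|w\|^2_{\text{strip}}\le 2s\sup_{[-1,1]}|w|^2\le\frac{\pi^2}{2}\ell^2 s\,\|w\|^2_{[-1,1]}$. With it, the strip term is controlled only once $s\approx 1/h$, where $h=\frac{\ell}{\delta}\min\{r,\frac{\ell}{\delta}\}$:
\begin{itemize}
\item when $r\le\ell/\delta$, use sup bounds on both signals, which needs $s\ell r\lesssim\delta$;
\item when $r>\ell/\delta$, use the sup bound on $w$ and Cauchy--Schwarz for $z$, which needs $\ell\sqrt s\lesssim\delta$.
\end{itemize}
This is exactly the paper's plateau width.

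\textbf{With $s\approx 1/h$, the bandwidth problem is not bookkeeping.} A single kernel $\sinc(\Delta t/d)^d$ needs $d\gtrsim\ell+r$ to beat the polynomial growth of $|w\bar z|$ far from the window. It also needs $\Delta\gtrsim d/s$ for the plateau. So it yields a separation of order $(\ell+r)h$, too large by a factor $\ell+r$.

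Your proposed remedies do not fix this. Beating the growth only just beyond $1+s$ is not enough: $|w\bar z|(1+u)$ can be $e^{\Theta((\ell+r)\sqrt u)}\|w\|_{[-1,1]}\|z\|_{[-1,1]}$ at every scale $u\in(0,1]$. Chebyshev-type signals, one of them frequency-shifted, attain Zhang's bound up to constants in the exponent. An $e^{O(\ell^2 u)}$ growth bound is weaker than Zhang's once $u>1/\ell^2$.

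\textbf{The missing idea is the multiscale kernel} \eqref{eq:sqrt-localizer-kernel}. It is a plateau factor $\sinc(Cht)^{C\lceil\log(r/\delta)\rceil}$ of bandwidth $O(h\log\frac r\delta)$, times factors $\sinc\bigl(\frac{Ct}{h^{-1}+4^i/r^2}\bigr)^{C2^i}$ for $0\le i\le\lceil\log r\rceil$. The $i$-th factor contributes decay $\pi^{-C2^i}$ beyond distance $\frac1C(h^{-1}+4^i/r^2)$. Together they give:
\begin{itemize}
\item $M(1+u)\le(\delta/r)^{C/2}e^{-\frac C8 r\sqrt u}$ for $u\le 1/C$;
\item $M(t)\le(\delta/r)^{C/2}|\pi t/2|^{-Cr/4}$ beyond that.
\end{itemize}
This matches $|w(1+u)|\le 192\,\ell\,e^{3\sqrt2\,\ell\sqrt u}\|w\|_{[-1,1]}$, and the same bound for $z$, scale by scale. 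The extra bandwidth is only $\sum_i\min\{2^ih,\frac{r^2}{2^i}\}=O(r\sqrt h)=O(\frac{\ell r}{\delta})$, since $h\le\ell^2/\delta^2$.

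Without this construction or an equivalent one, your route stops at roughly $(\ell+r)h$, not $\frac{\ell r}{\delta}\log\frac{\ell r}{\delta}$.
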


In particular, a separation $\Omega(\frac{\ell r}{\delta})$ is necessary to guarantee \eqref{eq:orthogonal_windows_intro}. Because $k$-Fourier-sparse signals can approximate degree-$(k-1)$ polynomials arbitrarily close, extremal polynomials like the Chebyshev polynomials and the Legendre polynomials provide an example\footnote{One example  identified by GPT 5.6 sol was available at \url{https://vespri3.github.io/multiband_spectral_improvement.pdf}.} with two polynomials $p(t)$ of degree $\ell$ and $q(t)$ of degree $r$ such that for any $D>\ell r$, $\int_{[-1,1]} p(t) q(t) e^{-2 \pi \bi \frac{\ell r}{\delta} t} \mathrm{d} t= \Omega(\frac{\ell r}{D}) \cdot \|p\|_{[-1,1]} \cdot \|q\|_{[-1,1]}$.
This demonstrates the frequency separation has to be $\Omega(\frac{\ell r}{\delta})$.

A separation $\tilde{\Omega}(\frac{\max\{\ell,r\}^2}{\delta})$ was indicated by Chen and Price's analysis \cite{CP19_ICALP} and properties of Fourier-sparse signals \cite{erdelyi2016inequalities}. Very recently, Cai et al. \cite{CCSW} (i.e., Claim 5.9) shows a separation $\tilde{\Omega}(\frac{\ell r}{\delta}+\frac{\min\{\ell,r\}^2}{\delta^2})$. We improve the analysis in \cite{CCSW} and remove the extra term $\frac{\min\{\ell,r\}^2}{\delta^2}$ to obtain a clean and tight characterization on the frequency separation. An important ingredient in this improvement is an extrapolation bound about Fourier-sparse signals proved by Zhang \cite{zhang2026optimalextrapolationboundssparse}.

Finally, we compare our main results Theorem~\ref{inform:learn_multiband} and Theorem~\ref{thm:learn_Fourier_sparse} with previous results in Table~\ref{tab:results}.

{\renewcommand{\arraystretch}{1.3}\begin{table}[h]
\centering
\begin{tabular}{|m{4em}|m{16.5em}|m{8em}|m{13em}|}
     \hline
     & Results &  sample complexity & time complexity\\  
     \hline
     \multirow{3}{4em}{multiband signals} & \cite{AKMMVZ18}: reconstructing --- assume $\supp(\wh{x})=I_1 \cup \cdots \cup I_n$ is given & $\tilde{O}(n+ \sum_i |I_i|)$  & $O(\sum_i |I_i| + n)^{\omega} + n \cdot O(\sum_i |I_i|)^{2}$\\
     \cline{2-4}
     & Theorem~\ref{inform:learn_multiband} & $\tilde{O}(n+ \sum_i |I_i|)$ & $O(\sum_i |I_i| + n)^{3}$\\
     \hline
     \multirow{5}{4em}{Fourier-sparse signals} & \cite{CP19_colt} &  $\tilde{O}(k^4)$ & $ (k^{O(k^2)} \cdot FT)^{O(k)}$ \\
     \cline{2-4}
     & \cite{SSWZ23} & $\tilde{O}(k^4)$ & $\tilde{O}(k^{4\omega})$  \\
     \cline{2-4}
          & \cite{CCSW,zhang2026optimalextrapolationboundssparse}  & $\tilde{O}(k^{3})$ & $\tilde{O}(k^{3 \omega})$ \\
     \cline{2-4}
     & \cite{CCSW} & $\tilde{O}(k^2)$ & $(k \cdot FT)^{O(k)}$ \\
     \cline{2-4}
     & Theorem~\ref{thm:learn_Fourier_sparse} & $\tilde{O}(k^2)$ & $\tilde{O}(k^5)$ \\
     \hline
\end{tabular}
\caption{Summary on the sample complexity and time complexity of learning multiband signals and $k$-Fourier-sparse signals.}\label{tab:results}
\end{table}}

\subsection{Related Works}
\paragraph{Multiband signals.}  Let $|\supp(\wh{x})|$ denote the size of the support of $\wh{x}$ for convenience. For reconstructing single-band signals, early works by Nyquist, Shannon, and others \cite{wmttaker1915functions,nyquist1928certain,kotel1933carrying,shannon1949communication} showed that $O(|\supp(\wh{x})|)$ samples suffice. However, this does not provide a reconstructing algorithm directly. The seminal works by Slepian, Landau, and Pollak \cite{slepian1961prolate,landau1961prolate,landau1962prolate} showed an interpolation approach via $O(|\supp(\wh{x})|)$ prolate spheroidal wave functions. Subsequent works \cite{Landau1980EigenvalueDO,xiao2001prolate,shkolnisky2006approximation,OR14} provided efficient reconstructing algorithms with optimal sample complexity $O(|\supp(\wh{x})|)$. However, all these works assume $\supp(\wh{x})$ is known or given as a priori. One exception is \cite{CP19_ICALP} that proposed the two conditions $\kappa$ and $\tau$ discussed in Corollary~\ref{inform:interpolating_multiband} to recover the location of $\supp(\wh{x})$.

For reconstructing multiband signals, Landau and Widom \cite{Landau1980EigenvalueDO} proved that $O(n \log \frac{1}{\epsilon} + \sum_{i=1}^n |I_i|)$ samples suffice. Later on, Avron et al. \cite{AKMMVZ18} provided an efficient reconstructing algorithm with $m=O(n \log \frac{1}{\epsilon} + \sum_{i=1}^n |I_i|)$ samples and $m^{O(1)}$ time. Actually, Avron et al. showed that their approach works for any signal with a simple Fourier spectrum (including the Gaussian density and the Cauchy-Lorentz density). However, this work \cite{AKMMVZ18} still required $\supp(\wh{x})$ as a priori knowledge and left the recovery of $\supp(\wh{x})$ as an important open question.

\paragraph{Fourier-sparse signals} The discrete problem, also called sparse Fourier transforms, has a large literature with rich connections to cryptography \cite{GL89} and coding theory \cite{AGS}. There are two lines of research. The first line carefully chooses samples to allow sublinear time recovery (to name a few \cite{GGIMS,GMS,HIKP,Iw13,IKP,K16}). The best known results achieve $O(k\log N)$ samples \cite{IK} or $O(k\log^2 N)$ time \cite{HIKP} separately. The second line uses randomly samples and generic recovery algorithms such as $\ell_1$ minimization under the restricted isometry property \cite{RV,HR16}. While the first line has better sample complexity and running time, the second line provides stronger guarantees. 

A line of research \cite{BCGLS,HK15,PS15,SSWZ22} generalized the discrete algorithms to the continuous setting. These algorithms learn all frequencies and their amplitudes in time $k \cdot (\log k F)^{O(1)}$ like the discrete algorithms. However, all these algorithms require that the $k$ frequencies $f_1,\ldots,f_k$ in $x$ have a \emph{frequency gap} $\min_{i \neq j}|f_i-f_j| = (\log k)^{\Omega(1)}$. In general, algorithms in the discrete setting cannot be applied to the continuous problem studied in this work \cite{CKPS17}. 

A fundamental problem in imaging, called super-resolution, studies how to learn frequencies and amplitudes in Fourier-sparse signals. A variety of methods have been proposed in the noiseless setting for $m=k$ samples, including Prony's method \cite{Prony}, Reed-Solomon decoding \cite{m69}, and the matrix pencil method \cite{BM86} (see more references in \cite{KATZ2024101687}). However, Moitra \cite{Moitra} pointed out a strong lower bound: when the frequency gap $\min_{i \neq j} |f_i-f_j|<\frac{1}{2}$,  it is impossible to recover all frequencies accurately even under exponentially small noise. At the same time, when the frequency gap is at least $\frac{1}{2}$, Moitra provided an algorithm with $O(F)$ samples that tolerates polynomially small noise. Several algorithms based on convex optimization and compressed sensing have been proposed (including \cite{FL12,TBSR,CF14,YX15}). However, all these super-resolution algorithms require the frequency gap to be at least $\frac{1}{2}$ in order to recover frequencies.

Chen, Kane, Price, and Song \cite{CKPS17} showed that the gap between frequencies is not necessary to learn the whole signal. Their algorithm runs in $(k \log F)^{O(1)}$ time and takes $(k \log F)^{O(1)}$ samples to estimate each frequency within error $\mathsf{Err}:=(k \log F)^{O(1)}$ and output an interpolation $\tilde{x}(t):=\sum_i e^{2 \pi \bi \tilde{f}_i t} \cdot p_i(t)$ with polynomials $p_i(t)$. In particular, the degree $d$ of these polynomials has to be $\Omega(\mathsf{Err})$ in order to guarantee $e^{2 \pi \bi \tilde{f}_j t} \cdot p_i(t) \approx e^{2 \pi \bi f_j t}$ when $|\tilde{f}_j-f_j|=\mathsf{Err}$. Then the sample complexity is at least $\Omega(k \cdot \mathsf{Err})$ (the total number of coefficients in $\tilde{x}(t)$). This framework has been significantly improved by subsequent works \cite{CP19_colt,CP19_ICALP,SSWZ23,CCSW}. Specifically, Chen and Price proposed a weighted sampling method to reduce sample complexity \cite{CP19_colt} and improved the frequency estimation error $\mathsf{Err}$ in \cite{CP19_ICALP}. Song, Sun, Weinstein, and Zhang \cite{SSWZ23} extended these techniques to provide an efficient interpolation within the $m=\tilde{O}(k^{4})$ samples. Very recent work by Cai et al. \cite{CCSW} proved a statistical upper bound $\tilde{O}(k^2)$ and showed an efficient algorithm within $\tilde{O}(k^3)$ samples. In particular, Cai et al. showed how to obtain frequency estimation $\mathsf{Err}=\tilde{O}(k^2)$ based on an extrapolation bound of Fourier-sparse signals by Zhang \cite{zhang2026optimalextrapolationboundssparse}. 

Two recent works \cite{LLM22, CDHNSY25} studied different approaches to interpolate Fourier-sparse signals. Li, Liu, and Moitra \cite{LLM22} showed how to post-process the interpolation $\tilde{x}(t):=\sum_i e^{2 \pi \bi \tilde{f}_i} \cdot p_i(t)$ from the above framework into a $\tilde{O}(k)$-Fourier-sparse signal. While this improves the interpolation sparsity significantly, it only guarantee $\tilde{x} \approx x$ in a slightly smaller interval $[-1+c,1-c]$ instead of the time window $[-1,1]$. The authors of \cite{CDHNSY25} studied information-theoretic bounds for reconstruction of $\wh{x}$ under the Wasserstein distance and the heavy hitter distance.

Finally, we remark that prior to this work, all efficient learning algorithms for Fourier-sparse signals were based on the framework of \cite{CKPS17}. However, a bottleneck of this framework is the frequency estimation $\mathsf{Err}=\tilde{\Omega}(k^2)$ pointed out in \cite{CP19_ICALP} such that the sample complexity is $\tilde{\Omega}(k^3)$. Moreover, although Avron et al.~\cite{AKMMVZ18} have shown applications of Fourier-sparse signals to signals with simple Fourier spectra based on matrix sparsification \cite{CMM17}, it was open whether methods of other types of signals could help the problem of learning Fourier-sparse signals or not. Our conceptual contribution is an application of learning  multband signals to learning Fourier-sparse signals. This provides a new efficient algorithm for the latter and improves the state-of-the-art of efficient algorithms from $\tilde{O}(k^3)$ samples to $\tilde{O}(k^2)$.

\paragraph{Organization.} In Section~\ref{sec:prelim}, we show preliminaries about the continuous Fourier transform. In Section~\ref{sec:multiband_recovery}, we show how to recover locations of bands for learning multiband signals. In Section~\ref{sec:proof_main}, we prove our main result about multiband signals --- Theorem~\ref{inform:learn_multiband} and Corollary~\ref{inform:interpolating_multiband}. In Section~\ref{sec:orthogonal}, we prove Lemma~\ref{lem:almost_orthogonal_clusters}. In Section~\ref{sec:multiband_approx}, we prove our main result about Fourier-sparse signal --- Theorem~\ref{thm:learn_Fourier_sparse}.

\section{Preliminaries}\label{sec:prelim}
For any $n > 0$, let $[n]:=\{0,1,\ldots,n-1\}$ in this work. For an interval $J:=[a,b]$ in $\mathbb{R}$, let $mid(J)=\frac{a+b}{2}$ denote its middle point for convenience.

\paragraph{Fourier transforms.} Let $\|g\|_2:=(\int |g(t)|^2 \mathrm{d} t)^{1/2}$ and $\|g\|_{I}:=(\int_I |g(t)|^2 \mathrm{d} t)^{1/2}$ for an interval $I$.

The Fourier transform $\wh{g}(f)$ of an integrable function $g$ is 
\[
\wh{g}(f)=\int_{-\infty}^{+\infty} g(t) e^{-2\pi \bi ft} \mathrm{d} t \text{ and } g(t)=\int_{-\infty}^{+\infty} \wh{g}(f) e^{2\pi \bi ft} \mathrm{d} f.
\]
The discrete Fourier transform of a vector $u \in \mathbb{C}^B$ is \[
\wh{u}[f]=\sum_{t \in [B]} u[t] \cdot e^{-2 \pi \bi ft/B} \text{ and } u[t]=\frac{1}{B} \sum_{f \in [B]} \wh{u}[f] \cdot e^{2 \pi \bi ft/B}
\]

For an integrable signal $z$, let $\|z\|_2:=(\int |z(t)|^2 \mathrm{d} t)^{1/2}$. For any signal $z$ and any interval $I$, $\|z\|_I:=(\int_I |z(t)|^2 \mathrm{d} t)^{1/2}$. For convenience, we call $\|z\|_2^2$ the energy of $z$ and $\|z\|_{I}^2$ the energy of $z$ in $I$. For two integrable functions $x$ and $y$, we define the corresponding inner product $\langle x,y \rangle := \int x(t) \bar{y}(t) \mathrm{d} t$. Then let $\langle x , y \rangle_I := \int_I x(t) \bar{y}(t) \mathrm{d} t$ for any two signals $x$ and $y$. 

Recall the classical Plancherel and Parseval identities for the inner product $\langle\cdot,\cdot\rangle$.
\begin{theorem}\label{thm:Plancherel_Parseval}
    For any integrable function $x$, $\|x\|_2=\|\wh{x}\|_2$. For any two integrable functions $x$ and $y$, $\langle x,y \rangle=\langle \wh{x}, \wh{y} \rangle$.
\end{theorem}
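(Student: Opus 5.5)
The statement immediately above is the classical Plancherel--Parseval theorem. I will prove the polarized form $\langle x,y\rangle=\langle \wh{x},\wh{y}\rangle$ and then take $y=x$ to get $\|x\|_2=\|\wh{x}\|_2$. Throughout, ``integrable'' is read as $x,y\in L^1(\mathbb{R})\cap L^2(\mathbb{R})$, which is the setting in which both sides are defined.

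The approach is to reduce the statement to a Gaussian-regularized computation where Fubini applies, and then remove the regularization by a limiting argument. Concretely:

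\begin{enumerate}
\item Fix the Gaussian $g_\lambda(f)=e^{-\pi\lambda^2 f^2}$. Its Fourier transform is $\wh{g_\lambda}(t)=\lambda^{-1}e^{-\pi t^2/\lambda^2}$, an approximate identity as $\lambda\to 0$. This is the one explicit computation needed, done by completing the square.

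\item For $x,y\in L^1$, the integrand $x(t)\overline{y(s)}e^{-2\pi\bi f(t-s)}g_\lambda(f)$ is absolutely integrable in $(t,s,f)$. By Fubini,
\[
\int \wh{x}(f)\overline{\wh{y}(f)}\,g_\lambda(f)\,\mathrm{d} f=\iint x(t)\overline{y(s)}\,\wh{g_\lambda}(t-s)\,\mathrm{d} t\,\mathrm{d} s=\langle x, y*\wh{g_\lambda}\rangle,
\]
using that $\wh{g_\lambda}$ is real and even.

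\item Let $\lambda\to 0$ on each side. On the right, $y*\wh{g_\lambda}\to y$ in $L^2$ because $y\in L^2$ and $\wh{g_\lambda}$ is an approximate identity; Cauchy--Schwarz then gives convergence to $\langle x,y\rangle$. On the left, set $y=x$. The integrand $|\wh{x}|^2 g_\lambda$ is nonnegative and increases to $|\wh{x}|^2$, so monotone convergence gives $\|\wh{x}\|_2^2=\|x\|_2^2<\infty$. Hence $\wh{x}\in L^2$.

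\item With $\wh{x},\wh{y}\in L^2$, the product $\wh{x}\,\overline{\wh{y}}$ is in $L^1$, and dominated convergence turns the left side into $\langle\wh{x},\wh{y}\rangle$ for general $x,y$. Alternatively, apply the norm identity to $x\pm y$ and $x\pm\bi y$ and use the polarization identity.
\end{enumerate}

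The main obstacle is the limiting step: justifying $y*\wh{g_\lambda}\to y$ in $L^2$, and showing a priori that $\wh{x}\in L^2$. The first follows from continuity of translation in $L^2$ via Minkowski's integral inequality, which gives
\[
\|y*\wh{g_\lambda}-y\|_2\le\int \wh{g_\lambda}(u)\,\|y(\cdot-u)-y\|_2\,\mathrm{d} u\to 0.
\]
The second is handled by the monotone convergence argument in step 3, which yields finiteness of $\|\wh{x}\|_2$ for free. If one wants the statement for general $L^2$ functions, the identity on the dense subspace $L^1\cap L^2$ extends by continuity, defining $\wh{\cdot}$ as an isometry on $L^2$.
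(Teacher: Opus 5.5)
Your proof is correct. The paper gives no proof of this statement; it simply recalls it as the classical Plancherel--Parseval theorem, so there is no competing argument to compare against. Your Gaussian-regularization route is the standard textbook proof, and each step holds: Fubini on the damped integral $\int \wh{x}\,\overline{\wh{y}}\,g_\lambda$, then the approximate identity $\wh{g_\lambda}$ to recover $\langle x,y\rangle$, then monotone convergence to get $\wh{x}\in L^2$, then dominated convergence or polarization. This includes the Minkowski-integral-inequality bound showing $\|y*\wh{g_\lambda}-y\|_2\to 0$.

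One remark concerns the final density step. Your reading of ``integrable'' as $L^1\cap L^2$ is the right one for the statement as written. However, the paper also applies the identity to signals that are only in $L^2$. For instance, the proof of Theorem~\ref{thm:learn_multiband} uses $\int|\wh{\eta}(f)|^2\,\mathrm{d}f\le 2\epsilon\|x\|_2^2$ with $\eta=w-x$. There $x$ is a multiband signal, which can be a $\sinc$-type function not in $L^1$. So your last sentence, defining $\wh{\cdot}$ on $L^2$ as the unique isometric extension from the dense subspace $L^1\cap L^2$, is not optional. It is the version of the theorem the paper actually relies on.
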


\begin{fact}\label{fact:FFT_DFT}
    Let $\rect_s(t)$ denote the box function of width $s$: $\rect_s(t)=1/s$ for $|t|\leq s/2$ and $0$ otherwise; let $\sinc(sf)$ denote its Fourier transform $\frac{\sin(\pi s f)}{\pi s f}$.    
    
    Let $\comb_s(t):=\sum_{j \in \mathbb{Z}}\delta_{js}(t)$ for the Dirac-delta function $\delta$.
    
    Let $w:\mathbb{R} \to \mathbb{C}$ be a signal. Define $A:\mathbb{Z} \to \mathbb{C}$ as $A[i]:=w(i)$ and $B:[n] \to \mathbb{C}$ as $B[i]:=\sum_{j \in \mathbb{Z}} A[i+jn]$. Then
    \[
    DTFT~\wh{A}(f)=\sum_{i \in \mathbb{Z}} \wh{w}(f+i) \text{ for } f \in [0,1) \text{ and } DFT~\wh{B}[j]=\sum_{i \in \mathbb{Z}} \wh{w}(j/n+i) \text{ for } j \in [n].
    \]
\end{fact}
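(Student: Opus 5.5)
We prove Fact~\ref{fact:FFT_DFT} by Poisson summation, for $w$ with enough decay that the sums converge absolutely, e.g.\ $w$ and $\wh{w}$ both $O((1+|\cdot|)^{-1-c})$. Band-limited signals multiplied by a smooth window, which is where the fact will be used, satisfy this.

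\textbf{DTFT identity.} The DTFT of $A$ is $\wh{A}(f)=\sum_{i\in\Z} w(i)e^{-2\pi\bi f i}$. We view this as the continuous Fourier transform of the sampled signal $w\cdot\comb_1=\sum_i w(i)\delta_i$. The Fourier transform of $\comb_1$ is $\comb_1$ (Poisson summation). By the convolution theorem,
\[
\wh{w\cdot \comb_1}=\wh{w}*\comb_1=\sum_{i\in\Z}\wh{w}(\cdot+i).
\]
To make this rigorous without distributions, we apply Poisson summation to $g(t):=w(t)e^{-2\pi\bi f t}$. Its Fourier transform is $\wh{g}(\xi)=\wh{w}(\xi+f)$. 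Poisson summation gives $\sum_{i}g(i)=\sum_{i}\wh{g}(i)$, hence
\[
\wh{A}(f)=\sum_{i\in\Z}\wh{w}(f+i).
\]
The decay assumption guarantees both sides converge absolutely.

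\textbf{DFT identity.} We compute directly. For $j\in[n]$,
\[
\wh{B}[j]=\sum_{t\in[n]}\sum_{l\in\Z}A[t+ln]\,e^{-2\pi\bi jt/n}.
\]
Since $e^{-2\pi\bi j(t+ln)/n}=e^{-2\pi\bi jt/n}$, we may substitute $s=t+ln$. As $(t,l)$ ranges over $[n]\times\Z$, $s$ ranges over $\Z$ exactly once, so
\[
\wh{B}[j]=\sum_{s\in\Z}A[s]e^{-2\pi\bi (j/n)s}=\wh{A}(j/n).
\]
Plugging $f=j/n\in[0,1)$ into the DTFT identity gives $\wh{B}[j]=\sum_i \wh{w}(j/n+i)$.

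\textbf{Main obstacle.} The only delicate step is justifying Poisson summation and the interchange of sums, i.e.\ absolute convergence of $\sum_i |w(i)|$ and $\sum_i|\wh{w}(f+i)|$. We handle this with the stated decay hypothesis, noting that it holds for all signals to which the fact is later applied. If a signal lacks such decay, we read the identity in the distributional sense via $\wh{\comb_1}=\comb_1$.
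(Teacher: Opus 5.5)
Your argument is correct for continuous $w$ with the stated decay. Poisson summation for $g(t)=w(t)e^{-2\pi\bi ft}$ gives the DTFT identity, and the reindexing $s=t+ln$ gives $\wh{B}[j]=\wh{A}(j/n)$. The paper states this Fact without proof. The closest it comes is Section~\ref{sec:hashtobins}, where the aliasing identity behind \textsc{HashToBins} is re-derived in the opposite direction:
\begin{itemize}
\item each sample $s(\sigma(j+iB-a))$ is expanded as $\int\wh{s}(f)e^{2\pi\bi f\sigma(j+iB-a)}\mathrm{d}f$;
\item the sum over $i$ is moved inside the integral and replaced by $\frac{1}{\sigma B}\sum_k\delta(f-\frac{k}{\sigma B})$, using the Fourier series of the Dirac comb;
\item $\wh{u}[r]$ is identified by writing $k=qB+r$ and matching against the inverse DFT.
\end{itemize}
That synthesis computation lands directly on the form the algorithm needs, but it is formal: no function class is fixed and the sum--integral exchange is not justified. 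Your analysis-direction proof separates the ingredients. The DFT part is exact algebra needing only $\sum_s|A[s]|<\infty$, and all the analysis sits in one pointwise Poisson summation under explicit hypotheses.

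There are two corrections concerning those hypotheses.

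\textbf{Continuity.} State continuity of $w$ explicitly: changing $w$ at the integers changes $A$ but not $\wh{w}$.

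\textbf{Where the hypothesis holds.} Some hypothesis is indeed unavoidable. For $w=\mathbf{1}_{[0,1/2]}$ one has $\wh{A}\equiv 1$, while $\wh{w}(0)=1/2$, $\wh{w}(i)=0$ for even $i\neq 0$, and $\wh{w}(-i)=-\wh{w}(i)$ for odd $i$. So at $f=0$ the symmetrically summed series $\sum_i\wh{w}(i)$ equals $1/2$. However, your remark that the decay condition holds wherever the paper uses the Fact is wrong.

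In \textsc{HashToBins} the identity is applied to $s(t)=w(t)e^{2\pi\bi b(t+\sigma a)}G(\frac{t+\sigma a}{\sigma})$, where $w$ is the noisy observation set to $0$ outside $[-1,1]$. This $s$ generally has jumps at $\pm1$ and an arbitrary $L^2$ noise component, so $\wh{s}$ need not be absolutely summable along the points $\frac{j/B+i}{\sigma}$. The distributional fallback does not rescue this either. The product $w\cdot\comb_1$ is not defined for such $w$, and an identity between distributions in $f$ says nothing about the values at $f=j/n$.

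What actually makes the paper's use sound is your DFT step together with applying the Fact only to the filter. Because $G$ has compact support,
\[
\wh{u}[j]=\sum_{m\in\Z}w(\sigma(m-a))e^{2\pi\bi b\sigma m}G(m)e^{-2\pi\bi jm/B}=z^{(j)}(-\sigma a)
\]
is a finite reindexing valid for every $w$. Poisson summation is then needed only to identify the Fourier transform of the weighted comb $G^{(j)}_{\sigma,b}$ with $\wh{G}^{(j)}_{\sigma,b}$, which is your DTFT identity applied to $G$. Since $G$ is continuous, compactly supported, and satisfies $|\wh{G}(f)|=O(|f|^{-\ell})$ by Property~5 of $(G,\wh{G})$, your hypothesis does hold there.
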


Formally, a \emph{Fourier-sparse} signal $x(t)=\sum_{j=1}^k \alpha_j e^{2 \pi \bi f_j t}$ has its Fourier transform $\wh{x}(f)=\sum_{j=1}^k \alpha_j \delta_{f_j}(f)$ with the Dirac-delta function $\delta$. We defer more discussion about Fourier-sparse signals to Section~\ref{sec:orthogonal}.

\paragraph{Multiband Signals.} For any function $x:\mathbb{R} \rightarrow \mathbb{C}$, $\supp(x):=\{t:x(t) \neq 0\}$ such that $\supp(\wh{x})=\{f:\wh{x}(f) \neq 0\}$. A multiband signal $x(t)$ has a Fourier transform supported by a finite number of intervals, say $\supp(\wh{x})=I_1\cup\cdots\cup I_n$. Given the multiband $I_1 \cup \ldots \cup I_n$, classical results by Slepian, Landau, and Pollak have shown that space of multiband signals of duration $2T$ and Fourier support $I_1 \cup I_2 \cdots \cup I_n$ is essentially $O(T \cdot \sum_{j=1}^n |I_j|)$-dimensional.

\begin{theorem}\label{thm:multiband_interpolate}{\cite{slepian1961prolate,landau1961prolate,landau1962prolate,AKMMVZ18}}
    Given any multiband $I_1,\ldots,I_n$, let $\mathcal{F}$ denote the linear operator from $I_1 \cup I_2 \cdots \cup I_n$ to $[-T,T]$ as $\mathcal{F}[\wh{x}](t)=\int_{I_1 \cup \cdots \cup I_n} \wh{x}(f) e^{2 \pi i f t} \mathrm{d} f$. Then for any $\epsilon>0$, this linear operator $\mathcal{F}$ has $s_\epsilon:=O(T \sum_{j=1}^n |I_j| + n \log \frac{1}{\epsilon})$ eigenfunctions with an eigenvalue $\ge \epsilon$. Moreover, there are efficient procedures to compute these eigenfunctions.

    In particular, given the multiband $I_1,\ldots,I_n$ of $x$ and a noisy observation $y=x+\eta$ in the time window $[-T,T]$, for any $\epsilon$, there exists an efficient interpolation algorithm that takes $O(s_\epsilon \log^2 s_\epsilon)$ samples and $\tilde{O}(s_\epsilon^{\omega}+s_\epsilon^2 \cdot n)$ time to output $\tilde{x}$ with $\|x-\tilde{x}\|_{[-T,T]}^2 \le \epsilon \cdot \|x\|_{[-T,T]}^2 + O(\|\eta\|_{[-T,T]}^2)$.
\end{theorem}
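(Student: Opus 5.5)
Since this theorem is imported from the literature, I would prove it by reassembling the classical spectral analysis of the time--frequency limiting operator with the randomized sampling framework of \cite{AKMMVZ18}. Write $S:=I_1\cup\cdots\cup I_n$ and consider the positive semidefinite operator $\mathcal{K}=\mathcal{F}\mathcal{F}^*$ on $L_2[-T,T]$. Its kernel is
\[
K(t,s)=\sum_{j=1}^n \int_{I_j} e^{2\pi \bi f(t-s)}\,\mathrm{d} f,
\]
so $\mathcal{K}=\sum_{j}\mathcal{K}_j$ is a sum of $n$ single-band operators $\mathcal{K}_j$. Each $\mathcal{K}_j$ has eigenvalues in $[0,1]$ and trace $2T|I_j|$.

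\textbf{Step 1 (eigenvalue count).} For a single band, I would invoke the Landau--Widom asymptotics \cite{Landau1980EigenvalueDO}, in the non-asymptotic form used in \cite{AKMMVZ18}. These bound the number of eigenvalues of $\mathcal{K}_j$ that are at least $\epsilon'$ by $2T|I_j|+O(\log(T|I_j|+2)\cdot\log\frac{1}{\epsilon'})$.

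To pass to the sum, I would use Weyl's inequality:
\[
\lambda_{i_1+\cdots+i_n-n+1}\Big(\sum_j \mathcal{K}_j\Big)\le \sum_j \lambda_{i_j}(\mathcal{K}_j).
\]
Choosing each $i_j$ so that $\lambda_{i_j}(\mathcal{K}_j)<\epsilon/n$ gives $s_\epsilon\le \sum_j\big(2T|I_j|+O(\log\frac{n}{\epsilon})\big)$. The extra $\log n$ and $\log(T|I_j|)$ factors are absorbed either into the $\tilde{O}$ or into $O(T\sum_j|I_j|)$, since $\log(TR)\le TR$ whenever $TR\ge 1$. An alternative route counts eigenvalues in $[\epsilon,1-\epsilon]$ via $(\operatorname{tr}\mathcal{K}-\operatorname{tr}\mathcal{K}^2)/(\epsilon(1-\epsilon))$, but this loses a factor $1/\epsilon$, so I would use Weyl.

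Computing the eigenfunctions efficiently follows from discretizing $\mathcal{K}$ on a fine grid, where the kernel is an explicit sum of sinc-type terms, or from the commuting differential operator in the single-band case. A standard perturbation bound then controls the discretization error.

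\textbf{Step 2 (interpolation).} I would follow \cite{AKMMVZ18} and reduce the problem to ridge regression onto the range of $\mathcal{F}$:
\[
\min_{g}\ \|\mathcal{F}g-y\|_{[-T,T]}^2+\epsilon\|g\|_2^2.
\]
The key quantity is the ridge leverage function
\[
\tau_\epsilon(t)=\sup_{g}\frac{|\mathcal{F}g(t)|^2}{\|\mathcal{F}g\|_{[-T,T]}^2+\epsilon\|g\|_2^2},
\]
whose integral equals the statistical dimension $\sum_i \frac{\lambda_i}{\lambda_i+\epsilon}=O(s_\epsilon)$ by Step 1.

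\textbf{Main obstacle.} The hardest step is producing an explicit, samplable upper bound $\bar\tau(t)\ge\tau_\epsilon(t)$ whose integral is still $\tilde{O}(s_\epsilon)$. I would first try the approach of \cite{AKMMVZ18}: approximate each band-limited $\mathcal{F}g$ by a Fourier-sparse signal, then apply the known leverage bound for Fourier-sparse signals, roughly $\frac{s}{\min(T-|t|,\,T)}$ up to logarithmic factors. Summing these bounds should yield a density proportional to $\frac{s_\epsilon}{T}\log$-weighted toward the endpoints of $[-T,T]$.

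Given $\bar\tau$, drawing $O(s_\epsilon\log s_\epsilon)$ samples from $\bar\tau/\int\bar\tau$ yields a subspace embedding by matrix Chernoff. Solving the reweighted regularized least-squares problem in the top-$s_\epsilon$ eigenbasis then gives $\|x-\tilde{x}\|_{[-T,T]}^2\le\epsilon\|x\|_{[-T,T]}^2+O(\|\eta\|_{[-T,T]}^2)$. The running time is dominated by the $s_\epsilon\times s_\epsilon$ solve, costing $s_\epsilon^{\omega}$, plus evaluating the $s_\epsilon$ eigenfunctions at the sampled points, costing $\tilde{O}(s_\epsilon^2 n)$ because each kernel evaluation costs $O(n)$.
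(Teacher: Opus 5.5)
\textbf{Main gap: the error guarantee.} Your route is the one the paper imports: Landau--Widom eigenvalue counts for $\mathcal{F}\mathcal{F}^*$, then the leverage-score ridge regression of \cite{AKMMVZ18}. Your final inference, however, does not hold. Compare the ridge objective at its minimizer $g_\epsilon$ with its value at $g^*=\wh{x}$. This only gives $\|\mathcal{F}g_\epsilon-x\|_{[-T,T]}^2\le O(\epsilon\|\wh{x}\|_2^2+\|\eta\|_{[-T,T]}^2)$. Here $\|\wh{x}\|_2^2=\|x\|_2^2$ is the energy of $x$ on all of $\R$, which can exceed $\|x\|_{[-T,T]}^2$ by an arbitrary factor.

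This is not slack in your analysis. Let $\psi$ be an eigenfunction of $\mathcal{F}^*\mathcal{F}$ with eigenvalue $\lambda\ll\epsilon$, and set $x(t):=\int_{I_1\cup\cdots\cup I_n}\psi(f)e^{2\pi\bi ft}\,\mathrm{d}f$ for $t\in\R$. Then:
\begin{itemize}
\item $\|x\|_{[-T,T]}^2=\lambda\|x\|_2^2$.
\item The restriction of $x$ to $[-T,T]$ is $L_2[-T,T]$-orthogonal to every eigenfunction of $\mathcal{F}\mathcal{F}^*$ with eigenvalue $\ge\epsilon$, so its projection onto your top-$s_\epsilon$ eigenbasis is $0$.
\item Even the population ridge solution is $\frac{\lambda}{\lambda+\epsilon}\psi$. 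Its windowed error is $\frac{\epsilon}{\lambda+\epsilon}\|x\|_{[-T,T]}$, a relative error near $1$ with $\eta=0$ and unlimited samples.
\end{itemize}
So a bound of the form $\epsilon\|x\|_{[-T,T]}^2$ needs the concentration hypothesis $\|x\|_{[-T,T]}^2\ge(1-\epsilon)\|x\|_2^2$, i.e., Condition~\eqref{cond:multiband_recovery}. Without it, the correct conclusion is $\|x-\tilde x\|_{[-T,T]}^2\le\epsilon\|\wh{x}\|_2^2+O(\|\eta\|_{[-T,T]}^2)$, which is the form of the guarantee in \cite{AKMMVZ18}. This does not hurt the paper, which invokes the theorem only in the proof of Theorem~\ref{thm:learn_multiband}, where Condition~\eqref{cond:multiband_recovery} makes $\|x\|_2^2$ and $\|x\|_{[-1,1]}^2$ comparable. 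Your proof should state and use the $\|\wh{x}\|_2^2$ form, or add the hypothesis explicitly.

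\textbf{Two smaller points.} First, Step 1: the $\epsilon/n$ split with Landau--Widom gives $\sum_j\big(2T|I_j|+O(\log(T|I_j|+2)\log\frac{n}{\epsilon})\big)$. Neither the $n\log n$ term nor the product $\log(T|I_j|)\log\frac1\epsilon$ is absorbed by ``$\log(TR)\le TR$''; take $\log\frac1\epsilon\approx T|I_j|$. So you only establish $s_\epsilon$ up to logarithmic factors, not the stated $O(\cdot)$. The clean bound needs sharper single-band information, such as the super-exponential eigenvalue decay past the plunge region, and a less wasteful way to combine bands.

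Second, Step 2: the $\tilde O(s_\epsilon^{\omega}+s_\epsilon^2 n)$ time in \cite{AKMMVZ18} comes from \emph{kernel} ridge regression, and no eigenfunctions are computed. One builds the $m\times m$ matrix with entries $\sum_{i}\int_{I_i}e^{2\pi\bi f(t_a-t_b)}\,\mathrm{d}f$ in $O(m^2 n)$ time, solves one linear system, and outputs $\tilde x$ as a weighted sum of kernel translates. Your ``solve in the top-$s_\epsilon$ eigenbasis'' would first need those eigenfunctions, from a grid discretization of unspecified size. Evaluating them at the sample points is not a kernel evaluation, so the running time you quote does not follow from the route you describe. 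Switching to the kernel formulation fixes this.
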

We remark the time complexity in Theorem~\ref{thm:multiband_interpolate} comes from Theorem 3 of \cite{AKMMVZ18}.

\paragraph{Properties of $(G,\wh{G})$.} Classical works in sparse Fourier transforms \cite{HIKP12,PS15,K16,CKPS17} rely on a pair of filter functions $G$  and $\wh{G}$ such that $G$ is compact and $\wh{G}$ looks like a box function. We state its constructions and properties in the continuous setting here from \cite{CKPS17}.

Given parameters $B=O(n)$, error $\delta>0$, a constant $\alpha < 0.1$, $\ell$, $s$, and a normalizer factor $b_0$ for $\wh{G}(0)=1$, let
\begin{align}
    G(t) & := b_0 \bigg(\rect_{\frac{B}{\alpha}}(t) \bigg)^{* \ell} \cdot \sinc(\frac{t}{2B \cdot s}); \label{def:G} \\
    \wh{G}(f) & := b_0 \cdot \bigg(\sinc(\frac{B}{\alpha}f) \bigg)^{\ell} * \rect_{\frac{1}{2B \cdot s}}(f). \label{def:hatG} \end{align} By setting $\ell:=\Theta(\log (n/\delta))$, $s:=1+O(\alpha/B)$, and $b_0:=\Theta(\sqrt{\ell}/\alpha)$, $(G,\wh{G})$ have the following properties. 
\begin{enumerate}
    \item $supp(G) \subset [\frac{\ell}{2} \cdot \frac{-B}{\alpha},\frac{\ell}{2} \cdot \frac{B}{\alpha}]$ and $G(t) \le b_0 = O(\sqrt{\ell}/\alpha)$.     \item $|\wh{G}(f)| \in [1-(\delta/n),1]$ for $|f| \le (1-\alpha)\frac{1}{2B}$.     \item $|\wh{G}(f)| \in [0,1]$ for $|f| \in \big( (1-\alpha)\frac{1}{2B}, \frac{1}{2B} \big)$.
    \item $|\wh{G}(f)| \le \delta/n$ for $|f| > \frac{1}{2B}$.     \item $|\wh{G}(f)| \le b_0 (\frac{2 \alpha}{\pi B |f|})^{\ell}$ for $|f| \ge 1/2$.
\end{enumerate}

\section{Multiband Recovery}\label{sec:multiband_recovery}
In this section,
we show how to recover locations of bands in $\wh{x}$ for Theorem~\ref{inform:learn_multiband}. For ease of exposition, we consider the problem that the $n$ intervals in $\wh{x}$ are of length at most $R$ for a given parameter $R$. Recall $mid(J)=\frac{a+b}{2}$ denotes the middle point of an interval $J=[a,b]$. For the noisy observation $w=x+\eta$, we call an interval $J$ in $\wh{w}$ heavy only if
\begin{equation}\label{eq:cond_heavy}    
\int_{J} |\wh{w}(f)|^2 \mathrm{d} f \ge \frac{\epsilon}{4n} \cdot \|w\|_2^2.
\end{equation}

\begin{theorem}\label{thm:locations_multiband}
Given parameters $F$, $R$, $n$, and $\epsilon$, let $x$ be a multiband signal whose Fourier spectrum $\supp(\wh{x})=J_1 \cup J_2 \cup \cdots \cup J_n$ in $[-F,F]$ has $n$ \emph{unknown} intervals of length at most $R$. Given the time window $[-1,1]$, suppose $x$ satisfies                \begin{equation}\label{cond:multiband_recovery_intro2} 
        \int_{-1}^1 |x(t)|^2 \mathrm{d} t \ge (1-\epsilon) \int |x(t)|^2 \mathrm{d} t.
    \end{equation} 
    
     For any noisy observation $w(t)=x(t)+\eta(t)$ in $[-1,1]$ with $\|\eta\|_{[-1,1]}^2 \le \epsilon \cdot \|x\|_{[-1,1]}^2$,     there exists an efficient algorithm to return $\ell=\tilde{O}(n)$ frequencies $\tilde{f}_1,\ldots,\tilde{f}_\ell$ in $\tilde{O}(n \cdot R)$ samples and $\tilde{O}(n \cdot R)$ time such that with probability 0.99, for any heavy $J_j$ satisfying \eqref{eq:cond_heavy}, $\exists \tilde{f}_i$ with $|\tilde{f}_i - mid(J_j)| = O(R)$.
\end{theorem}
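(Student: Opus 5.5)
The approach is to adapt the hashing-and-filtering framework for continuous sparse Fourier transforms from \cite{PS15,CKPS17} to bands instead of single frequencies. Since $x$ concentrates in $[-1,1]$ by \eqref{cond:multiband_recovery_intro2}, we regard $w$ as the truncation of $x+\eta$ to $[-1,1]$ (zero outside). Then $\|w-x\|_2^2 \le \|\eta\|_{[-1,1]}^2 + \epsilon\|x\|_2^2 = O(\epsilon)\|x\|_2^2$, so $\wh{w}$ has all but an $O(\epsilon)$ fraction of its energy on $J_1\cup\cdots\cup J_n$. We then view $\wh{w}$ at the coarse resolution $\Theta(R)$. The frequency axis is partitioned into grid cells of width $\Theta(R)$. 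Each $J_j$ meets at most two cells, so at most $2n$ cells carry the energy of heavy intervals. Each heavy $J_j$ has a cell containing at least half its energy, hence at least $\frac{\epsilon}{8n}\|w\|_2^2$. It then suffices to locate, up to $O(R)$, every cell with energy $\gtrsim \frac{\epsilon}{n}\|w\|_2^2$.

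\textbf{Hashing.} We rescale time by $1/R$, so that bands have width $O(1)$. This is implemented through the filter $(G,\wh{G})$ with $B=\Theta(n)$ buckets and bucket width $\Theta(R)$ in frequency; $G$ has time support of length $O(\ell B/(\alpha R))$. Its samples fit into $[-1,1]$ after choosing the sampling step appropriately, and we must check that the window length suffices. We apply a random dilation $\sigma$ and shift $b$ of the spectrum, realized by sampling $w$ at $\sigma(t\cdot\Theta(1/R)+a)$. We form $u[i]=\sum_j w(\sigma(i+jB)\cdot\Theta(1/R)+a)\,G(i+jB)$ and take a length-$B$ DFT. By Fact~\ref{fact:FFT_DFT}, bucket $\wh{u}[j]$ equals the energy of $\wh{w}\cdot\wh{G}$ over the frequencies hashing to $j$, modulated by $e^{2\pi\bi f\sigma a}$.

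The standard analysis then applies. For a random $a$, $\E_a|\wh{u}[j]|^2 \approx \int |\wh{w}(f)\wh{G}_{\sigma,b}(f)|^2\mathrm{d}f$. Since a band of width $O(R)$ may straddle a bucket boundary with probability $O(\alpha)$ over $b$, we replace ``isolated frequency'' by ``isolated cell''. For a fixed heavy cell, with constant probability over $(\sigma,b)$, the other $O(n)$ heavy cells avoid its bucket and the light energy (total $O(\epsilon)\|w\|_2^2$) contributes only $O(\epsilon/B)\|w\|_2^2$ to it. Each hash costs $O(B\ell)=\tilde{O}(n)$ samples and time.

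\textbf{Location within a bucket.} Inside a bucket we estimate the location of the dominating cell to accuracy $O(R)$ by comparing $\wh{u}_{a}[j]$ and $\wh{u}_{a+\beta}[j]$ for varying $\beta$, i.e.\ phase estimation on the modulation $e^{2\pi\bi f\sigma\beta}$. Since $f$ ranges over an interval of width $O(R)$ rather than a point, the phase $e^{2\pi\bi f\sigma\beta}$ is only coherent when $\beta\sigma R \ll 1$. This limits the per-round resolution: each round reduces the search range by a constant factor while keeping $\beta\sigma = O(1/R)$ at the finest level, exactly as in the $\Theta(R)$-resolution search of \cite{PS15}. A binary search over $\log(F/R)$ levels, with $O(\log\log)$ repetitions and majority voting per level, yields an estimate within $O(R)$ with constant probability. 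This uses $\tilde{O}(n\log(F/R))$ samples per hash. Repeating over $O(\log n)$ independent hashes and outputting all candidates gives $\ell=\tilde{O}(n)$ frequencies, and a union bound over the $\le 2n$ heavy cells achieves success probability $0.99$.

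\textbf{Main obstacle.} I expect the main obstacle to be the energy-to-amplitude conversion for a band. For a single tone, $|\wh{u}[j]|$ is deterministic. For a band, $\wh{u}_a[j]=\int \wh{w}(f)\wh{G}(f)e^{2\pi\bi f\sigma a}\mathrm{d}f$ can cancel for particular $a$. The fix is to average over random $a$ in a time window of length $\Theta(1/R)$. By Plancherel (Theorem~\ref{thm:Plancherel_Parseval}) applied to the windowed signal, $\E_a|\wh{u}_a[j]|^2$ approximates the bucket energy. The phase comparisons then need the ratio $\wh{u}_{a+\beta}/\wh{u}_a$ to reflect the cell's center, which holds on average when $|f-mid(J)|\beta\sigma$ is small. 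Controlling the variance here, so that the statistic is informative with constant probability, is the delicate step. Separately, we must bound the sample count of $\tilde{O}(nR)$ (rather than $\tilde{O}(n)$): it comes from needing enough time resolution, about $R$ samples per unit time, to alias-free sample bands of width $R$ with $B=n$ buckets.
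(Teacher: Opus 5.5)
\textbf{Genuine gap: the phase estimator for a band is not supplied.} Your hashing step matches the paper's \textsc{HashToBins}: filter $(G,\wh G)$, random $\sigma,b$, and $\wh u[j]=z^{(j)}(-\sigma a)$ for the bucket signal $\wh{z^{(j)}}=\wh w\cdot\wh G^{(j)}_{\sigma,b}$. Your outer search is also the paper's. However, the step you call ``delicate'' is the actual content of the theorem (Lemma~\ref{lem:estimate_mid_J}), and your proposal does not supply it.

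Because $z^{(h_j)}$ has bandwidth about $1/(B\sigma)=\Theta(R)$ rather than being a single tone, the ratio $z^{(h_j)}(t+\beta)/z^{(h_j)}(t)$ is unbounded near zeros of $z^{(h_j)}$. So neither a fixed shift nor a uniform average of ratios is controlled. Knowing that $\E_a|\wh u_a[j]|^2$ approximates the bucket energy says nothing yet about the phase. Moreover, as written you average over a window of length $\Theta(1/R)$, which cannot recover the bucket energy. Plancherel applies to $\int|z^{(h_j)}(t)|^2\,\mathrm{d}t$ over the whole time support of $z^{(h_j)}$ (essentially $[-1.1,1.1]$), and a bandwidth-$R$ signal can carry almost no energy on a given subwindow of length $1/R$.

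\textbf{The missing idea is \textsc{GetSamples}.} Draw $m=C_R R$ uniform points $t_i\in[-1.1,1.1]$. Obtain $z^{(j)}(t_i)$ and $z^{(j)}(t_i+\beta)$ from \textsc{HashToBins} with $a=-t_i/\sigma$ and $a=-(t_i+\beta)/\sigma$. Pick $s$ among the $t_i$ with probability proportional to $|z^{(j)}(t_i)|^2$, and output the ratio at $s$.

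With $g(t)=z^{(h_j)}(t)e^{2\pi\bi\,mid(J_j)\beta}-z^{(h_j)}(t+\beta)$, the weighting cancels the small denominators:
$\E_s\big[|g(s)|^2/|z^{(h_j)}(s)|^2\big]=\sum_i|g(t_i)|^2/\sum_i|z^{(h_j)}(t_i)|^2$.
\begin{itemize}
\item The numerator is $O(c_0^2+c)\cdot m\|z^{(h_j)}\|_2^2$ in expectation, by Plancherel. On the passband, $|e^{2\pi\bi\,mid(J_j)\beta}-e^{2\pi\bi f\beta}|=O(c_0)$ once $\beta\le c_0B\sigma$, and \eqref{eq:good_condition_2} bounds the off-band part.
\item The denominator is $\Omega(m\|z^{(h_j)}\|_2^2)$ by a Chernoff bound. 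The passband part $h$ satisfies $\sup_t|h(t)|^2\le\frac{1}{B\sigma}\|h\|_2^2=O(R/c)\|h\|_2^2$, and this is exactly why $m=\Theta(R)$ points are needed.
\end{itemize}
Equivalently, you could output $\sum_i z(t_i+\beta)\overline{z(t_i)}/\sum_i|z(t_i)|^2$. By Cauchy--Schwarz its error is at most $\big(\sum_i|g(t_i)|^2/\sum_i|z(t_i)|^2\big)^{1/2}$.

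\textbf{This also fixes your sample count.} Each of the $\Theta(R)$ points costs one \textsc{HashToBins} call of $O(B\ell)$ samples, so each search step costs $\tilde O(nR)$ samples. This contradicts the $\tilde O(n\log(F/R))$ per hash that you state. The factor $R$ comes from this concentration requirement over the full window, not from a Nyquist-type heuristic.
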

We remark that Condition \eqref{eq:cond_heavy} is necessary to recover $J_j$ because the adversarial noise $\eta$ may erase an interval $J_j$ in $\wh{x}$ (and $\wh{w}$). Moreover, our algorithm works for the case when $\supp(\wh{x})=I_1\cup \cdots \cup I_n$ has some intervals of length more than $R$. Because this algorithm only needs $R$ and $n$, given a proper parameter $R$ in this case, one partitions $I_i$ into $\lceil \frac{|I_i|}{R} \rceil$ intervals of length $R$ and sets $n$ to be a upper bound on $\sum_{I_i} \lceil \frac{|I_i|}{R} \rceil$.

The key of Theorem~\ref{thm:locations_multiband} are two procedures: \textsc{HashToBins} and \textsc{GetSamples} in Algorithm~\ref{alg:recover_cluster1}. Basically, \textsc{HashToBins} uses $G$ and $\wh{G}$ defined in Section~\ref{sec:prelim} to hash all bands of $\wh{x}$ into $O(n)$ buckets; and \textsc{GetSamples} provides an efficient method to estimate the band in every bucket under this hash.
In the rest of this section, we analyze \textsc{HashToBins} and \textsc{GetSamples} before proving Theorem~\ref{thm:locations_multiband}. Recall that $w(t)=x(t)+\eta(t)$ with noise. Algorithm~\ref{alg:recover_cluster1} assume $w(t)=0$ for $t \notin [-1,1]$ --- this is because Condition~\eqref{cond:multiband_recovery_intro2}
only increases $\|\eta\|_2^2 \le 2\epsilon \cdot \|x\|_2^2$. 

We set the parameters as follows: in $(G,\wh{G})$, $c<1$ is a small constant, $\alpha:=0.01$, $\delta := O(c\epsilon / n)$,
$B := O(n/c \epsilon)$, $\ell := O(\log (n/\delta))$. Then hash parameters
$\sigma \sim \left[ \frac{c}{20BR},\frac{c}{10BR} \right]$ and
$b \sim [0, 1/\sigma B]$ are sampled uniformly at random. The parameter $a$ in \textsc{HashToBins} will be chosen in \textsc{GetSamples}. Finally, we describe how to choose parameter $\beta$ of \textsc{GetSamples} later (in Lemma~\ref{lem:estimate_mid_J}).

\begin{algorithm} [ht]
    \caption{Recover Heavy Intervals\label{alg:recover_cluster1}} 
    \begin{algorithmic}[1]
        \Procedure{HashToBins}{$\sigma,b,a$} 
        \State Compute $G(t)$ from \eqref{def:G} for $t \in [\frac{- \ell B}{2 \alpha},\frac{\ell B}{2 \alpha}] \cap \Z$ with parameters $\alpha=0.01$, $\delta = O(c\epsilon / n)$,
$B = O(n/c \epsilon)$, $\ell = O(\log (n/\delta)$.
        \State Compute a vector in $\mathbb{C}^B$: $u[j]:=\underset{i \in \mathbb{Z}: |j+iB| \le \frac{\ell B}{2 \alpha}}{\sum}  w(\sigma(j+iB-a)) \cdot e^{2 \pi \bi b\sigma (j+iB)} \cdot G(j+iB)$
        \State Apply DFT to compute $\wh{u}[j]:=\underset{r \in [B]}{\sum} u[r] \cdot e^{-2 \pi \bi \frac{j \cdot r}{B}}$
        \State Return $\wh{u}[j]$
        \EndProcedure
        
        \Procedure{GetSamples}{$\sigma,b,\beta$}
            \Comment{This procedure replaces Procedure~\textsc{GetLegalKSample} in Algorithm 6 of \cite{CKPS17}}
            \State $m:=C_R \cdot R$ for a  large constant $C_R$
            \Comment{ Since $\sigma=\Theta(\frac{1}{BR})$, $m=\frac{O(1)}{B \sigma}$}
            \For{$i \in [m]$}
            \State Uniformly sample $t_i \in [-1.1,1.1]$ 
            \State Call Procedure~\textsc{HashToBins}($\sigma,b,a:=-t_i/\sigma$) to obtain $z^{(j)}(t_i)=\hat{u}[j]$            
            \State Call Procedure~\textsc{HashToBins}($\sigma,b,a:=-(t_i+\beta)/\sigma$) to obtain $z^{(j)}(t_i+\beta)=\hat{u}[j]$      \EndFor
            \For{$j \in [B]$}
                \State Set a distribution $D_j$ over $t_1,\ldots,t_m$ s.t. $D_j(t_i)$ is proportional to $|z^{(j)}(t_i)|^2$
                \State Sample $s_j \sim D_j$
            \EndFor
            \State Return $B$ complex numbers $\Big(\frac{z^{(0)}(s_0+\beta)}{z^{(0)}(s_0)},\ldots,\frac{z^{(B-1)}(s_{B-1}+\beta)}{z^{(B-1)}(s_{B-1})}\Big)$
                    \EndProcedure
    \end{algorithmic}
\end{algorithm}

\paragraph{HashToBins.} 

Procedure~\textsc{HashToBins} in Algorithm~\ref{alg:recover_cluster1} computes $u[j]$ and $\wh{u}[j]$, whose main properties are stated as follows. Since $G$'s support is $[-\frac{\ell B}{2 \alpha},\frac{\ell B}{2 \alpha}]$, it takes $O(\ell B+ B\log B)$ time to compute $u$ (in line 3) and its discrete Fourier transform $\wh{u}$ (in line 4). More importantly, vector $u$ is equivalent to
\begin{equation}\label{eq:def_u}
u[j]:=\sum_{i \in \Z} w(\sigma(j+iB-a)) \cdot e^{2 \pi \bi b\sigma (j+iB)} \cdot G(j+iB).
\end{equation}
This implies the following fact of Procedure~\textsc{HashToBins}~\cite{HIKP,CKPS17,SSWZ23}: given hash parameters $\sigma$ and $b$, all frequencies in $\wh{w}$ are mapped into $B$ buckets $\wh{u}[0],\ldots,\wh{u}[B-1]$ such that $\wh{u}[j]$ provides a way to estimate frequencies in the $j$th bucket of this hash. Specifically, this is achieved via $B$ filter functions $\wh{G}^{(j)}_{\sigma,b}$ for $j \in [B]$ that are based on $\wh{G}$. We provide a formal statement here, whose proof is deferred to Section~\ref{sec:hashtobins}. 

\begin{lemma}\label{lem:property_hash_bins}
     Given $j \in [B]$, $\sigma>0$, and $b$, let $\wh{G}^{(j)}_{\sigma,b}(f):=\sum_{i \in \Z} \wh{G}(i + j/B -\sigma f - \sigma b)$ such that $G^{(j)}_{\sigma,b}(t)= G(-t/\sigma)\cdot e^{2 \pi \bi (j/B-\sigma b)\cdot t/\sigma} \cdot \comb_{\sigma}(t)$.      
     Let $z^{(0)},\ldots,z^{(B-1)}$ denote $B$ signals defined as: $\wh{z^{(j)}}:=\wh{w} \cdot \wh{G}^{(j)}_{\sigma, b}$ and $z^{(j)}:=w*G^{(j)}_{\sigma,b}$. $u \in \C^B$ defined in \eqref{eq:def_u} satisfies $\wh{u}[j]=z^{(j)}(-\sigma a)$.
\end{lemma}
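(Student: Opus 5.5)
The plan is a direct computation in the time domain. The identity $\wh{u}[j]=z^{(j)}(-\sigma a)$ only needs the time-domain formula for $G^{(j)}_{\sigma,b}$. The stated Fourier transform $\wh{G}^{(j)}_{\sigma,b}$ is then checked separately with standard transform rules.

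\emph{Step 1 (the identity).} Because $\comb_\sigma$ is a sum of Dirac deltas at the points $k\sigma$, $k\in\Z$, the convolution becomes a sum:
\[
z^{(j)}(-\sigma a)=\int w(-\sigma a-s)\,G^{(j)}_{\sigma,b}(s)\,\mathrm{d}s=\sum_{k\in\Z} w(-\sigma a-\sigma k)\,G(-k)\,e^{2\pi\bi (j/B-\sigma b)k}.
\]
Substitute $r=-k$. Then
\[
z^{(j)}(-\sigma a)=\sum_{r\in\Z} w(\sigma(r-a))\,G(r)\,e^{2\pi\bi b\sigma r}\,e^{-2\pi\bi jr/B}.
\]
Next write every $r\in\Z$ uniquely as $r=r_0+iB$ with $r_0\in[B]$ and $i\in\Z$. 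Since $j$ is an integer, $e^{-2\pi\bi jr/B}=e^{-2\pi\bi jr_0/B}$, so this factor depends only on $r_0$. For fixed $r_0$, the inner sum over $i$ is exactly $u[r_0]$ as defined in \eqref{eq:def_u}. Hence $z^{(j)}(-\sigma a)=\sum_{r_0\in[B]}u[r_0]e^{-2\pi\bi jr_0/B}=\wh{u}[j]$.

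Two details need to be stated. First, the truncated sum in line 3 of Algorithm~\ref{alg:recover_cluster1} equals \eqref{eq:def_u}, because $\supp(G)\subset[-\frac{\ell B}{2\alpha},\frac{\ell B}{2\alpha}]$ (property 1 of $(G,\wh{G})$). Second, all sums are finite, which justifies the rearrangement.

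\emph{Step 2 (frequency description).} I would verify that $G^{(j)}_{\sigma,b}$ has Fourier transform $\sum_i \wh{G}(i+j/B-\sigma f-\sigma b)$. This uses three rules.
\begin{itemize}
\item Dilation and reflection: the transform of $G(-t/\sigma)$ is $\sigma\,\wh{G}(-\sigma f)$.
\item Modulation: multiplying by $e^{2\pi\bi (j/B-\sigma b)t/\sigma}$ shifts $f$ by $(j/B-\sigma b)/\sigma$, which gives $\sigma\,\wh{G}(j/B-\sigma b-\sigma f)$.
\item Sampling: multiplying by $\comb_\sigma$ convolves with $\frac1\sigma\comb_{1/\sigma}$. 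This periodizes the previous expression with period $1/\sigma$ in $f$, i.e.\ period $1$ in the argument $\sigma f$.
\end{itemize}
The factors $\sigma$ and $1/\sigma$ cancel, leaving $\sum_{i\in\Z}\wh{G}(i+j/B-\sigma f-\sigma b)$. The same conclusion also follows from Fact~\ref{fact:FFT_DFT} applied to $t\mapsto G(-t)e^{2\pi\bi(j/B-\sigma b)t}$. The identity $\wh{z^{(j)}}=\wh{w}\cdot\wh{G}^{(j)}_{\sigma,b}$ is then the convolution theorem.

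The main obstacle is only bookkeeping. The sign conventions (the reflection $-t/\sigma$ and the sign of the modulation) must match the stated form $i+j/B-\sigma f-\sigma b$. The convolution theorem must also be applied to the distribution $\comb_\sigma$ rather than to an integrable function, which is harmless here because $G$ is compactly supported.
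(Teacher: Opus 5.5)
Your proof is correct, but it works in the opposite domain from the paper's.

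The paper proves $\wh{u}[j]=z^{(j)}(-\sigma a)$ in frequency, in four moves:
\begin{enumerate}
\item It sets $s(t)=w(t)e^{2\pi\bi b(t+\sigma a)}G(\frac{t+\sigma a}{\sigma})$, so that $u[j]=\sum_i s(\sigma(j+iB-a))$, and applies Poisson summation to write $u[j]$ through the samples $\wh{s}(\frac{k}{\sigma B})$.
\item Matching with the inverse DFT, it reads off $\wh{u}[j]=\frac{1}{\sigma}\sum_{i}\wh{s}(\frac{j/B+i}{\sigma})e^{-2\pi\bi(j/B+i)a}$.
\item It expands $\wh{s}$ as $\wh{w}$ convolved with $\sigma e^{2\pi\bi\sigma fa}\wh{G}(\sigma f-\sigma b)$.
\item It recognizes $\int\wh{w}(s)\wh{G}^{(j)}_{\sigma,b}(s)e^{-2\pi\bi\sigma sa}\,\mathrm{d}s$ as the inverse transform of $\wh{z^{(j)}}$ at $-\sigma a$.
\end{enumerate}

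You instead expand the convolution against the comb as a finite sum, substitute $r=-k$, and fold $r$ modulo $B$. Fourier analysis then enters only through your Step 2. That step matches the paper's separate claim identifying $G^{(j)}_{\sigma,b}$, which the paper proves by substitution in the inverse transform plus $\sum_i e^{2\pi\bi it/\sigma}=\sigma\comb_\sigma(t)$; this is equivalent to your dilation, modulation and sampling rules.

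Your route gives an exact finite identity that holds for any pointwise-defined $w$. This includes the truncated noisy observation, which can be discontinuous. The paper's chain applies Poisson summation to $s$ and pointwise Fourier inversion to $z^{(j)}$, and these tacitly assume regularity of $w$ that it need not have.

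The paper's route, in exchange, shows $\wh{u}[j]$ directly as $\wh{w}$ filtered by $\wh{G}^{(j)}_{\sigma,b}$, which is the viewpoint the later bucket analysis uses. In your version that viewpoint comes from Step 2 together with the convolution theorem.
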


\paragraph{Procedure~\textsc{GetSamples}.} The main technical lemma of this section proves that \textsc{GetSamples} estimates $e^{2 \pi \bi \cdot mid(J_j) \beta}$ of each well-hashed interval $J_j$ efficiently. This would provide an efficient search algorithm to estimate $mid(J_j)$ from previous works \cite{CKPS17,CP19_ICALP,SSWZ23}.

Here are some notations in its analysis. Let $round(x)$ denote the nearest integer of $x$ (break the ties arbitrarily when $x \in 0.5 + \mathbb{Z}$). After fixing $\sigma$ and $b$, for a frequency $f$, let 
\[
h_{\sigma,b}(f):=round\bigg( \big( (\sigma f + \sigma b) \mod 1 \big) \cdot B \bigg) \mod B
\] denote the bucket in $[B]$ containing $f$. For any frequency interval $J:=[f_1,f_2]$, let 
\[
h_{\sigma,b}(J):=h_{\sigma,b}(mid(J))=h_{\sigma,b}(\frac{f_1+f_2}{2})
\]
be its bucket in this hash. For each heavy interval $J_j$ in $\wh{x}$, let $h_j:=h_{\sigma,b}(J_j)$ and $i_j$ denote the closest integer to $\sigma \cdot mid(J_j) + \sigma b - \frac{h_j}{B}$ such that $i_j+h_j/B-\sigma b - \sigma \cdot mid(J_j) \in [-1/2B,1/2B]$ --- the main energy part of $\wh{G}$.
we call $J_j$ well-hashed given $\sigma$ and $b$ only if $\sigma$ and $b$ satisfy the following two conditions for some small constant $c<1$:
\begin{align}
    i_j + h_j/B - \sigma b - \sigma J_j \subset [-(1-\alpha)/2B,(1-\alpha)/2B] \label{eq:good_condition_1} \\
    \int^{\frac{i_j+h_j/B+1/2B-\sigma b}{\sigma}}_{\frac{i_j+h_j/B-1/2B-\sigma b}{\sigma}} |\wh{z^{(h_j)}}(f)|^2 \mathrm{d} f \ge (1-c) \int |\wh{z^{(h_j)}}(f)|^2 \mathrm{d} f  \label{eq:good_condition_2} 
\end{align}

\begin{lemma}\label{lem:estimate_mid_J}
    For some fixed small constant $c$, let $B=O(\frac{n}{c \cdot \epsilon})$, $\sigma \sim [\frac{c}{20 B \cdot R},\frac{c}{10 B \cdot R}]$, and $b \sim [0,\frac{1}{B \sigma}]$. Then each heavy interval $J_j$ in $\wh{w}$ is well-hashed (satisfying \eqref{eq:good_condition_1} and \eqref{eq:good_condition_2}) with probability $0.9$ over $\sigma$ and $b$.

    Conditioned on that $J_j$ is well-hashed, for any $\beta \le c_0 \cdot B \sigma$ (with some universal constant $c_0<1$), 
    \[
    |\frac{z^{(h_j)}(s_{h_j}+\beta)}{z^{(h_j)}(s_{h_j})} - e^{2 \pi \bi \cdot mid(J_j) \beta}| \le 0.3 \text{ with probability at least $0.6$ (over $t_1,\ldots,t_m$ and $s_{h_j} \sim D_j$).}\]
\end{lemma}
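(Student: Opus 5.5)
The proof has two parts: a probability bound over the hash parameters $(\sigma,b)$, and a sampling argument over $t_1,\ldots,t_m$ and $s_{h_j}$ for a fixed good hash.

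\textbf{Part 1: $J_j$ is well-hashed with probability $0.9$.}

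For \eqref{eq:good_condition_1}, observe that $\sigma|J_j|\le \sigma R\le \frac{c}{10B}$. Hence $\sigma J_j$ has length at most a $c/10$ fraction of a bucket of width $1/B$. Since $b$ is uniform on $[0,1/(B\sigma)]$, the shift $\sigma b$ is uniform over one bucket width. So the image of $mid(J_j)$ lands uniformly inside its bucket. The interval then fails to sit inside the inner $(1-\alpha)$ part only if the midpoint falls within $\alpha/2B+c/(20B)$ of a bucket boundary. This happens with probability $O(\alpha+c)\le 0.05$.

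For \eqref{eq:good_condition_2}, split $\wh{z^{(h_j)}}=\wh{w}\cdot \wh{G}^{(h_j)}_{\sigma,b}$. Inside the main window, the mass is at least $(1-\delta/n)^2\int_{J_j}|\wh{w}|^2\ge \frac{\epsilon}{8n}\|w\|_2^2$, using heaviness and the fact that $\wh{G}\ge 1-\delta/n$ on the inner part. Outside the window, the mass comes from two sources:
\begin{itemize}
\item frequencies whose image under $f\mapsto \sigma f+\sigma b \bmod 1$ lands in the same bucket, i.e.\ collisions;
\item the tails where $|\wh{G}|\le \delta/n$ (property 4).
\end{itemize}
The tails contribute at most $(\delta/n)^2\|w\|_2^2$ in total, summing over the $i\in\Z$ aliases with the decay of property 5.

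For collisions, I use the standard random-$\sigma$ argument of \cite{HIKP,CKPS17}. For a fixed frequency $f$ at distance $\Delta\ge R$ from $mid(J_j)$, the probability over $\sigma\sim[\frac{c}{20BR},\frac{c}{10BR}]$ that $\sigma\Delta \bmod 1$ lies within $O(1/B)$ of $0$ is $O(1/B)$. For small $\Delta$ (with $\sigma\Delta<1/B$ impossible to avoid) I rely instead on the window: mass at distance $<1/(2B\sigma)=\Theta(R/c)$ already lies in the integration window, so it does not count against the condition.

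Taking expectations, the energy outside the window is $O(\|w\|_2^2/B)=O(c\epsilon/n)\|w\|_2^2$. This is at most $c$ times the inside energy. Markov's inequality finishes Part 1, with failure probability $0.05$. The noise is already absorbed in $\wh{w}$, and $\|w\|_2^2\le(1+O(\epsilon))\|x\|_2^2$.

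\textbf{Part 2: the ratio estimate for a fixed well-hashed $J_j$.}

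Write $z:=z^{(h_j)}$. By Lemma~\ref{lem:property_hash_bins}, line 9 of \textsc{GetSamples} indeed returns $z(t_i)$ and $z(t_i+\beta)$. Decompose $z=z_0+e$, where $\wh{z_0}$ is $\wh z$ restricted to the window $W$ of width $1/(B\sigma)=\Theta(R/c)$ centered near $mid(J_j)$. By \eqref{eq:good_condition_2}, $\|e\|_2^2\le c\|z\|_2^2$.

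For $z_0$, with $f_0:=mid(J_j)$,
\[
z_0(t+\beta)-e^{2\pi\bi f_0\beta}z_0(t)=\int_W \wh{z_0}(f)e^{2\pi\bi ft}\big(e^{2\pi\bi f\beta}-e^{2\pi\bi f_0\beta}\big)\,\mathrm{d}f .
\]
Here $|f-f_0|\beta\le \frac{1}{B\sigma}\cdot c_0B\sigma=c_0$, so the $L^2(\R)$ norm of this difference is at most $2\pi c_0\|z_0\|_2$. Hence the total bad part $g(t):=z(t+\beta)-e^{2\pi\bi f_0\beta}z(t)$ satisfies $\|g\|_2^2\le O(c_0^2+c)\|z\|_2^2$.

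Next I pass from $L^2(\R)$ to $[-1.1,1.1]$. Since $G$ has support of width $\ell B/\alpha$ in units of $\sigma$, $z=w*G^{(h_j)}_{\sigma,b}$ is $w$ smoothed over a time scale $\sigma\ell B/\alpha=O(c\ell/R)$. The energy of $z$ therefore essentially lives in $[-1.1,1.1]$, and I need $\|z\|_{[-1.1,1.1]}^2\gtrsim\|z\|_2^2$. I expect this to be the main obstacle: relating the energy of $z$ in the window to its total energy, and making the $m=C_R R$ uniform samples behave like the integral. Only $R$ samples are used while $z$ has bandwidth $\Theta(R/c)$, so the number of samples is proportional to the bandwidth. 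The concentration step I would try first is the reduction to a \emph{fixed} comparison: I would show that $\E_i|z(t_i)|^2\approx \frac{1}{2.2}\|z\|_{[-1.1,1.1]}^2$ and $\E_i|g(t_i)|^2\le O(c_0^2+c)\frac{1}{2.2}\|z\|_{[-1.1,1.1]}^2$ in expectation. That only needs one-sided Markov bounds, plus a lower-tail bound for $\sum_i|z(t_i)|^2$. I would get the lower-tail bound from the bounded ratio $\max|z|^2/\E|z|^2=O(\text{bandwidth})=O(R)$ of band-limited signals, which yields constant-probability concentration with $C_RR$ samples.

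Finally, with $s\sim D_j$, I have $\E_s\frac{|g(s)|^2}{|z(s)|^2}=\frac{\sum_i|g(t_i)|^2}{\sum_i|z(t_i)|^2}\le O(c_0^2+c)$. By Markov, $\big|\frac{z(s+\beta)}{z(s)}-e^{2\pi\bi f_0\beta}\big|=\frac{|g(s)|}{|z(s)|}\le 0.3$ with probability at least $0.8$, once $c$ and $c_0$ are small. Combined with the $0.9$ sampling event from the previous step, this gives success probability at least $0.6$.
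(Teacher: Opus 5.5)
Your plan follows the paper's proof almost step for step. It uses the same argument over $(\sigma,b)$ (Claims~\ref{clm:condition1_holds} and~\ref{clm:probability_condition_freq}), the same split $z=z_0+e$ at the window, the same bound $\|g\|_2^2\le O(c_0^2+c)\|z\|_2^2$ (Claim~\ref{clm:diff_beta}), and the same identity $\E_{s\sim D_j}\frac{|g(s)|^2}{|z(s)|^2}=\frac{\sum_i|g(t_i)|^2}{\sum_i|z(t_i)|^2}$ followed by Markov.

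One step, however, fails as you state it. The lower tail for $\sum_i|z(t_i)|^2$ cannot come from a bound $\max|z|^2/\E|z|^2=O(R)$, because $z$ is not band-limited. Its transform $\wh{z}=\wh{w}\cdot\wh{G}^{(h_j)}_{\sigma,b}$ carries the periodized filter, and in time $z$ is a finite weighted sum of shifts of $w$. So a narrow spike in the adversarial noise $\eta$, which is only bounded in $L^2$, survives in $z$ and makes that ratio arbitrarily large. The ratio bound holds only for the in-window part: $\sup_t|z_0(t)|^2\le\frac{1}{B\sigma}\|z_0\|_2^2=O(R/c)\cdot\|z_0\|_2^2$, by Cauchy--Schwarz on $\wh{z_0}$.

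The repair is exactly Claim~\ref{clm:guarantees_samples1}:
\begin{enumerate}
\item $z$ vanishes outside $[-1.1,1.1]$ (Claim~\ref{clm:outside_time_window}). This is exact, not approximate, because $w\equiv 0$ off $[-1,1]$ and $G^{(h_j)}_{\sigma,b}$ is a finite comb of width $O(c\ell/R)$. Hence $\|z_0\|_{[-1.1,1.1]}\ge\|z\|_2-\|e\|_2\ge(1-\sqrt{c})\|z\|_2$.
\item Apply Chernoff only to $\sum_i|z_0(t_i)|^2$.
\item Bound $\sum_i|e(t_i)|^2\le O(mc)\cdot\|z\|_2^2$ by Markov.
\item Combine the two with Cauchy--Schwarz.
\end{enumerate}
This also settles the ``main obstacle'' you anticipated, since no approximate localization is needed.

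There is a smaller point in Part~1. You claim a frequency at distance $\Delta\ge R$ from $mid(J_j)$ collides with probability $O(1/B)$. This is false for $R\le\Delta<10R/c$, where $\sigma\Delta<1/B$ for every admissible $\sigma$. Also, mass within $1/(2B\sigma)$ of $mid(J_j)$ need not lie in the window, because the window is not centered at $mid(J_j)$.

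The clean split, as in the paper, goes as follows. A frequency $f$ outside the window gets non-negligible weight from $\wh{G}^{(h_j)}_{\sigma,b}$ only if it aliases into bucket $h_j$ through an integer other than $i_j$. That forces $\sigma|f-mid(J_j)|\ge 1-1/B$, and such a collision has probability $O(1/B)$ over $\sigma$. Every other $f$ outside the window is suppressed to $3\delta/n$ by Lemma~\ref{lem:properties_hatG_periodized}, which your tail bullet already covers.
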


Finally, we defer the proof of Lemma~\ref{lem:estimate_mid_J} to Section~\ref{sec:proof_lem_estimate_J} and finish the proof of Theorem~\ref{thm:locations_multiband} here.

\begin{proofof}{Theorem~\ref{thm:locations_multiband}}
    Plugging Lemma~\ref{lem:estimate_mid_J} into previous algorithms \cite{CKPS17,SSWZ23} provides an efficient algorithm to estimate $J_j$ within error $O(R)$. Specifically, one could use the framework of \cite{CKPS17} but replace line 17 of Procedure~\textsc{LocateKInner} in Algorithm 6 calling \textsc{GetLegalKSample} (to obtain $\hat{u}[j]/\hat{u'}[j]$) by \textsc{GetSamples} (or use Procedure~\textsc{ArySearch} in Algorithm 3 of \cite{SSWZ23} but replace line 13 by \textsc{GetSamples}). 
    
    For completeness, we show an outline of the algorithm and its analysis in this proof: after sampling $\sigma$ and $b$, the algorithm calls Procedure~\textsc{GetSamples} with various parameters of $\beta$ to search well-hashed $J_j$ satisfying \eqref{eq:good_condition_1} and \eqref{eq:good_condition_2}. Specifically, let $L=[-F,F]$     be the current search range of $J_j$ that will be cut by a constant factor in every round. Setting $\beta:=\frac{1}{2\cdot |L|}$ will guarantee $e^{2 \pi \bi f \beta}$ would map $L$ to half of the unit circle $S^1$ in the complex plane. Now Lemma~\ref{lem:estimate_mid_J} shows \[
    |\frac{z^{(h_j)}(s_{h_j}+\beta)}{z^{(h_j)}(s_{h_j})} - e^{2 \pi \bi \cdot mid(J_j) \beta}| \le 0.3 \text{ with probability at least $0.6$.}
    \]
    
    This provides an estimation of $mid(J_j) \cdot \beta$ within error $\frac{0.4}{\pi} \cdot |L|$ on the unit cycle $S^1$. Moreover, one could repeat this sampling $O(\log (B \log F))$ rounds and take the median to amplify the success probability to $1-(B \log F)^{-\Omega(1)}$. After amplification, via the estimation of $mid(J_j) \cdot \beta$ on the unit cycle $S^1$, the algorithm reduces the length of $L$ by a $0.8/\pi$ factor in each round until Lemma~\ref{lem:estimate_mid_J} stops at $\beta > c_0 \cdot B \sigma$. This stop condition implies the final error of estimating $mid(J_j)$ is $O(\frac{1}{c_0 \cdot B \sigma})=O(R)$.

    To make sure that we will recover all heavy intervals, we repeat the above procedure $O(\log n)$     times and take the union bound because each time it recover one heavy interval $J_j$ with probability 0.9 (over $\sigma$ and $b$).
\end{proofof}

\subsection{Proof of Lemma~\ref{lem:estimate_mid_J}}\label{sec:proof_lem_estimate_J}
We finish the proof of Lemma~\ref{lem:estimate_mid_J}  in the rest of this section. Recall that Procedure \textsc{HashToBins} provides $\wh{u}[j]=z^{(j)}(-\sigma a)$ for $\wh{z^{(j)}}:=\wh{w} \cdot \wh{G}^{(j)}_{\sigma,b}$.

We bound the probability that $\sigma$ and $b$ satisfy \eqref{eq:good_condition_1} and  \eqref{eq:good_condition_2} separately.

\begin{claim}\label{clm:condition1_holds}
\eqref{eq:good_condition_1} holds with probability $1-\alpha - \frac{c}{5}$ over $\sigma$ and $b$.
\end{claim}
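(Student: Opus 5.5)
We need to bound the probability of \eqref{eq:good_condition_1}: the shifted image $i_j + h_j/B - \sigma b - \sigma J_j$ must lie in $[-(1-\alpha)/2B,(1-\alpha)/2B]$. The plan is to fix $\sigma$ and use only the randomness of $b$, then handle the dependence of the image length on $\sigma$ through its worst-case value.

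First, fix $\sigma \in [\frac{c}{20BR},\frac{c}{10BR}]$. The quantity $\theta:=(\sigma\cdot mid(J_j)+\sigma b) \bmod \frac{1}{B}$ is what matters, since $h_j$ and $i_j$ are chosen so that $i_j+h_j/B-\sigma b-\sigma\, mid(J_j)$ is the signed distance from $\sigma(mid(J_j)+b)$ to the nearest point of the grid $\frac{1}{B}\Z$. This distance lies in $[-1/2B,1/2B]$.

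Second, use the distribution of $b$. Since $b\sim[0,\frac{1}{B\sigma}]$ uniformly, $\sigma b$ is uniform on $[0,1/B]$, a full period of this grid. Hence the signed offset $d:=i_j+h_j/B-\sigma b-\sigma\,mid(J_j)$ is uniform on $[-1/2B,1/2B]$. This holds for every fixed $\sigma$; the rounding tie case has measure zero.

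Third, account for the width of the image. The set $i_j+h_j/B-\sigma b-\sigma J_j$ is an interval centered at $d$ of half-length $\sigma|J_j|/2\le \sigma R/2\le \frac{c}{20B}$. It lies inside $[-(1-\alpha)/2B,(1-\alpha)/2B]$ whenever $|d|\le \frac{1-\alpha}{2B}-\frac{c}{20B}$. Since $d$ is uniform on an interval of length $1/B$, the failure probability is at most
\[
\frac{1}{1/B}\Big(\frac{1}{B}-\frac{1-\alpha}{B}+\frac{c}{10B}\Big)=\alpha+\frac{c}{10}\le \alpha+\frac{c}{5}.
\]
Averaging over $\sigma$ preserves this bound, so \eqref{eq:good_condition_1} holds with probability at least $1-\alpha-\frac{c}{5}$.

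The only delicate point is the second step. One must check that $\sigma b \bmod \frac1B$ is exactly uniform, which it is because the range of $b$ is precisely one period $\frac{1}{B\sigma}$. One must also verify that the definitions of $h_j$ and $i_j$ make $d$ the centered offset and not something off by a rounding artifact. Both hold by the definition of $round$ and the choice of $i_j$.
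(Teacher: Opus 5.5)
Your proof is correct and follows the paper's argument. You condition on $\sigma$, use that $\sigma b$ is uniform over exactly one period $[0,\frac{1}{B}]$ so the offset is uniform on $[-\frac{1}{2B},\frac{1}{2B}]$, and bound the failure set by measure $\frac{\alpha}{B}+\sigma|J_j|$. Centering at $mid(J_j)$ rather than at the left endpoint, as the paper does, matches the definitions of $h_j,i_j$ exactly, and it gives the slightly sharper failure bound $\alpha+\frac{c}{10}$.
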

\begin{proof}
    Consider the left endpoint $f$ of this interval $J_j$. Since $\sigma b \sim [0,\frac{1}{B}]$ by the definition $b \sim [0,\frac{1}{B \sigma}]$, $i_j + \frac{h_j}{B} - \sigma b - \sigma f$ is a uniformly random variable in $[-\frac{1}{2B},\frac{1}{2B}]$. Condition~\eqref{eq:good_condition_1} holds iff 
    \[
    i_j + \frac{h_j}{B} - \sigma b - \sigma f \notin \left[ -\frac{1}{2B},-\frac{1-\alpha}{2B} \right] \cup \left[\frac{1-\alpha}{2B} - \sigma \cdot |J_j|,\frac{1}{2B} \right].\] 
    
    Since $\sigma \sim [\frac{c}{20 B \cdot R},\frac{c}{10 B \cdot R}]$,
\begin{equation}\label{eq:prob_condition_1}
    \Pr\left[ i_j + h_j/B - \sigma b - \sigma J_j \subset [-\frac{1-\alpha}{2B},\frac{1-\alpha}{2B}] \right] \ge 1 - \alpha - \frac{\sigma |J_j|}{1/2B} = 1 - \alpha - \frac{c}{5}.
\end{equation}

\end{proof}

\begin{claim}\label{clm:probability_condition_freq}    
    \eqref{eq:good_condition_2} holds with probaility $0.95$ over $\sigma$ and $b$.
\end{claim}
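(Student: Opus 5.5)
We bound the energy of $\wh{z^{(h_j)}}$ lying outside the main window $W_j:=\big[\frac{i_j+h_j/B-1/2B-\sigma b}{\sigma},\frac{i_j+h_j/B+1/2B-\sigma b}{\sigma}\big]$. We then compare its expectation, over $\sigma$ and $b$, with the energy inside $W_j$, which contains $J_j$ up to a factor of order $1$. By construction $\wh{z^{(h_j)}}=\wh{w}\cdot \wh{G}^{(h_j)}_{\sigma,b}$, and $\wh{G}^{(h_j)}_{\sigma,b}(f)=\sum_i \wh{G}(i+h_j/B-\sigma f-\sigma b)$. Outside $W_j$, a frequency $f$ contributes only if $\sigma f+\sigma b$ lands within $1/2B$ (mod $1$) of $h_j/B$ but in a different period, or if it lands in the tail of $\wh{G}$. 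By property 4 of $(G,\wh{G})$, and property 5 for far frequencies, the tail contributes at most $(\delta/n)^2$ per unit of energy, up to a summable factor. The total tail contribution is therefore at most $O(\delta^2/n^2)\cdot\|w\|_2^2$.

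Step one is the lower bound on the inside energy. If \eqref{eq:good_condition_1} holds (Claim~\ref{clm:condition1_holds}), then $J_j\subset W_j$ and property 2 gives $|\wh{G}^{(h_j)}_{\sigma,b}|\ge 1-\delta/n$ on $J_j$. Heaviness \eqref{eq:cond_heavy} then yields an inside energy of at least $(1-\delta/n)^2\frac{\epsilon}{4n}\|w\|_2^2$. Note that \eqref{eq:cond_heavy} is stated for $\wh{w}$, and we work directly with $w$, so the noise in $\wh{w}$ needs no separate treatment.

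Step two bounds the expected energy of $\wh{w}$ aliasing into bucket $h_j$ from frequencies $f$ outside $W_j$. This is the standard collision bound for the hash $f\mapsto \sigma f+\sigma b \bmod 1$. Fix $f$ with $|f-mid(J_j)|\ge \Omega(1/\sigma B)$. Because $\sigma$ is uniform on an interval of relative width $\Theta(1)$, the quantity $\sigma(f-mid(J_j))\bmod 1$ is nearly uniform once $|f-mid(J_j)|\gtrsim 1/\sigma_{\min}$. It lands within $1/B$ of $0$ with probability $O(1/B)$. For intermediate distances, $\frac{1}{\sigma B}\lesssim |f-mid(J_j)|\lesssim\frac{1}{\sigma}$, the quantity $\sigma(f-mid(J_j))$ already lies in $[1/2B,1/2]$ without wrapping, so the frequency falls outside the main lobe and its weight is at most $\delta/n$. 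Combining these, $\E_{\sigma,b}\big[\int_{f\notin W_j}|\wh{z^{(h_j)}}(f)|^2\big]\le O(1/B+\delta/n)\|\wh{w}\|_2^2$.

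The main obstacle is the uniformity estimate for $\sigma\Delta \bmod 1$ over a random $\sigma$ from an interval of constant ratio. It needs $|\Delta|\cdot(\sigma_{\max}-\sigma_{\min})\ge 1$, i.e.\ $|\Delta|\gtrsim BR/c$. For $\Delta$ between $1/\sigma B$ and that threshold, no wrap-around is possible, because $\sigma\Delta\le 1/2$ when $\Delta\le 1/(2\sigma)$ and $1/(2\sigma)\approx 5BR/c$; this regime is covered by the tail bound. Some care is needed at the crossover, but a constant-factor loss is harmless there.

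Finally, Markov's inequality shows that with probability $0.95$ the outside energy is at most $20\cdot O(1/B+\delta/n)\|w\|_2^2$. With $B=O(n/c\epsilon)$ and $\delta=O(c\epsilon/n)$ chosen with suitable constants, this is at most $c\cdot\frac{\epsilon}{4n}\|w\|_2^2\cdot\frac12$. That quantity is $\le \frac{c}{1-c}$ times the inside energy, which gives \eqref{eq:good_condition_2}. The inside lower bound used the intersection with \eqref{eq:good_condition_1}. To keep the claim standalone, one can instead lower bound the inside energy by the part of $J_j$ where $|\wh{G}|\ge 1/2$, which holds with constant probability and suffices after adjusting constants.
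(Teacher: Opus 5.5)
Your proposal is correct and follows the paper's proof. Under \eqref{eq:good_condition_1}, heaviness gives $E_{in}\ge(1-\delta/n)^2\frac{\epsilon}{4n}\|w\|_2^2$; an out-of-window frequency $f$ gets non-negligible weight in $\wh{G}^{(h_j)}_{\sigma,b}$ only if $\sigma(f-mid(J_j))$ is within $1/B$ of a nonzero integer, which has probability $O(1/B)$ over $\sigma$, so $\E_{\sigma,b}[E_{out}]=O(1/B)\|w\|_2^2$; Markov and a union bound finish. The paper avoids your regime split by noting that a collision forces $2A\ge 1-1/B$, where $\sigma|f-mid(J_j)|$ is uniform on $[A,2A]$ and this interval meets at most $A+1$ integers, each with a window of probability $\frac{2/B}{A}$; it also runs Markov at failure probability $0.01$ rather than $0.05$, which you should do too, because otherwise adding the $\alpha+c/5$ failure probability of \eqref{eq:good_condition_1} drops you below $0.95$.
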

The intuition behind Claim~\ref{clm:probability_condition_freq} is that $G^{(j)}$ is almost 1 on only $1/B$ fraction otherwise it is almost 0. This proof is deferred to Section~\ref{sec:proof_cond_freq}. At the same time, we have the following property for every $z^{j}(t)$.

\begin{claim}\label{clm:outside_time_window}
    For any $t \notin [-1-\frac{\ell}{\alpha R},1+\frac{\ell}{\alpha R}]$, $z^{(j)}(t)=0$.
\end{claim}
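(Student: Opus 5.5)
The plan is to write $z^{(j)}$ explicitly as a convolution and use the compact support of $G$ together with the convention $w(t)=0$ outside $[-1,1]$. By Lemma~\ref{lem:property_hash_bins}, $z^{(j)}=w*G^{(j)}_{\sigma,b}$, where $G^{(j)}_{\sigma,b}(t)=G(-t/\sigma)\cdot e^{2\pi\bi(j/B-\sigma b)t/\sigma}\cdot\comb_\sigma(t)$. Hence
\[
z^{(j)}(t)=\sum_{i\in\Z} w(t-\sigma i)\, G(-i)\, e^{2\pi\bi (j/B-\sigma b) i},
\]
a discrete sum over the points $\sigma i$ of the comb, weighted by $G(-i)$.

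The first step uses property 1 of $(G,\wh{G})$: $\supp(G)\subset[-\frac{\ell B}{2\alpha},\frac{\ell B}{2\alpha}]$. So only indices with $|i|\le \frac{\ell B}{2\alpha}$ contribute, and the shifts $\sigma i$ satisfy $|\sigma i|\le \frac{\sigma \ell B}{2\alpha}$.

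The second step substitutes the range of $\sigma$. Since $\sigma\le \frac{c}{10BR}$, we get $\frac{\sigma\ell B}{2\alpha}\le \frac{c\,\ell}{20\alpha R}\le \frac{\ell}{\alpha R}$, using $c<1$. Every nonzero term therefore needs $t-\sigma i\in[-1,1]$ with $|\sigma i|\le \frac{\ell}{\alpha R}$. For $t\notin[-1-\frac{\ell}{\alpha R},1+\frac{\ell}{\alpha R}]$ this forces $|t-\sigma i|>1$, so $w(t-\sigma i)=0$ for all relevant $i$ and $z^{(j)}(t)=0$.

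There is no real obstacle here. The only care needed is with sign conventions: the comb's argument $-t/\sigma$ gives $G(-i)$, and the symmetric support of $G$ makes the sign irrelevant. We should also confirm that the claim relies on the convention, stated before Algorithm~\ref{alg:recover_cluster1}, that $w$ is zeroed outside $[-1,1]$.
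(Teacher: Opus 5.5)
Your proof is correct and takes the paper's route: write $z^{(j)}=w*G^{(j)}_{\sigma,b}$, observe that the comb restricts the shifts to $\sigma\cdot(\Z\cap\supp(G))\subseteq[-\frac{c\ell}{20\alpha R},\frac{c\ell}{20\alpha R}]$ because $\sigma\le\frac{c}{10BR}$, and conclude from the convention $w=0$ outside $[-1,1]$. Your explicit sum over the comb points is just the discrete form of the paper's convolution integral.
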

\begin{proof}
    For any $t$, let us consider \[
    z^{(j)}(t)=w*G^{(j)}_{\sigma,b} = \int_{\supp(G^{(j)}_{\sigma,b})} w(t-\tau) \cdot G^{(j)}_{\sigma,b}(\tau) \mathrm{d} \tau.\]
    Because $G^{(j)}_{\sigma,b}(t):=G(-t/\sigma) \cdot e^{2 \pi \bi (j/B-\sigma b)\cdot t/\sigma} \cdot \comb_\sigma(t)$ from Lemma~\ref{lem:property_hash_bins}, $\supp(G^{(j)}_{\sigma,b})=\sigma \cdot (\supp(G) \cap \mathbb{Z})$. Moreover, $\supp(G)=[\frac{\ell}{2} \cdot \frac{-B}{\alpha},\frac{\ell}{2} \cdot \frac{B}{\alpha}]$ from the construction of $G$ implies $\supp(G^{(j)}_{\sigma,b})=\sigma \cdot ([-\frac{\ell B}{2 \alpha},  \frac{\ell B}{2 \alpha}] \cap \mathbb{Z})$, which is in $[-\frac{\ell}{20 \alpha R}, \frac{\ell}{20 \alpha R}]$ given $\sigma \le \frac{c}{10 B \cdot R}$. 

    On the other hand, because $w(t)=0$ for any $t \notin [-1,1]$ by the definition, $w(t-\tau) \cdot G^{(j)}_{\sigma,b}(\tau)$ is not zero only if $t-\tau \in [-1,1]$ and $\tau \in [-\frac{\ell}{20 \alpha R}, \frac{\ell}{20 \alpha R}]$.

        \end{proof} 

Claim~\ref{clm:condition1_holds} and Claim~\ref{clm:probability_condition_freq} imply the first part of Lemma~\ref{lem:estimate_mid_J} about Conditions \eqref{eq:good_condition_1} and \eqref{eq:good_condition_2}.
Then we analyze $\frac{z^{(h_j)}(s_{h_j}+\beta)}{z^{(h_j)}(s_{h_j})}$ and $e^{2 \pi \bi \cdot mid(J_j) \beta}$ conditioned on these two properties for the second part of Lemma~\ref{lem:estimate_mid_J}.

For convenience, we decompose $z^{(h_j)}$ into two parts: 
\begin{align}
\wh{h}(f) & :=\wh{z^{(h_j)}}(f) \text{ for } f \in [\frac{i_j+h_j/B-1/2B-\sigma b}{\sigma},\frac{i_j+h_j/B+1/2B-\sigma b}{\sigma}] \text{ and 0 otherwise.} \\
\wh{e}(f) & = \wh{z^{(h_j)}}(f) \text{ for } f \notin [\frac{i_j+h_j/B-1/2B-\sigma b}{\sigma},\frac{i_j+h_j/B+1/2B-\sigma b}{\sigma}] \text{ and 0 otherwise.}
\end{align}
So $z^{(h_j)}(t)=h(t)+e(t)$ by the definition. A useful property of $h$ is that 
\begin{align}
\sup_t |h(t)|^2 & \le (\int_{f \in supp(\wh{h})} |\wh{h}(f)| \mathrm{d} f)^2 \notag \\    
& \le (\int_{f \in supp(\wh{h})} |\wh{h}(f)|^2 \mathrm{d} f) \cdot (\int_{f \in supp(\wh{h})} 1 \mathrm{d} f) \notag \\
& \le \|h\|_2^2 \cdot \frac{1/B}{\sigma} \le \frac{10R}{c} \cdot \|h\|_2^2. \label{eq:upper_bound_sup_h}
\end{align}

Moreover, \eqref{eq:good_condition_2} implies $\|e\|_2 \le \sqrt{c} \cdot \|z^{(h_j)}\|_2$. Because 
\[
\|z^{(h_j)}\|_2 =\|z^{(h_j)}\|_{[-1.1,1.1]}=\|h+e\|_{[-1.1,1.1]},\]
this implies 
\[
\|h\|_{[-1.1,1.1]} \ge \|z^{(h_j)}\|_2 - \|e\|_{[-1.1,1.1]} \ge \|z^{(h_j)}\|_2 - \|e\|_2 = (1-\sqrt{c}) \|z^{(h_j)}\|_2.
\]
This is at least $(1-\sqrt{c}) \cdot \|h\|_2$ by the definition of $\wh{h}$.

\begin{claim}\label{clm:guarantees_samples1}
    With probability $0.89$ over $t_1,\ldots,t_m$, $\sum_{i} |z^{(h_j)}(t_i)^2|/m \ge \frac{0.9}{2.2} \cdot \|z^{(h_j)}\|_2^2$ when $c$ is a sufficiently small constant.
\end{claim}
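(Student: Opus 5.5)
The plan is to show that the empirical average $\frac{1}{m}\sum_i |z^{(h_j)}(t_i)|^2$ concentrates around its mean. For uniform $t_i \sim [-1.1,1.1]$, that mean is $\frac{1}{2.2}\|z^{(h_j)}\|_{[-1.1,1.1]}^2$. The remaining task is to relate this mean to $\|z^{(h_j)}\|_2^2$ and to control the fluctuations.

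\emph{The mean.} By Claim~\ref{clm:outside_time_window}, $z^{(h_j)}$ vanishes outside $[-1-\frac{\ell}{\alpha R},1+\frac{\ell}{\alpha R}]$. More precisely, its support lies in $[-1-\frac{\ell}{20\alpha R},1+\frac{\ell}{20\alpha R}]$, which sits inside $[-1.1,1.1]$ once $R$ exceeds a suitable multiple of $\ell$; the text above already uses $\|z^{(h_j)}\|_2 = \|z^{(h_j)}\|_{[-1.1,1.1]}$, and we take this as given. Hence
\[
\E\Big[\tfrac1m\sum_i |z^{(h_j)}(t_i)|^2\Big]=\tfrac{1}{2.2}\|z^{(h_j)}\|_2^2 .
\]
It therefore suffices to show that the empirical average falls below $0.9$ times its mean with probability at most $0.11$.

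\emph{The fluctuations.} Write $z^{(h_j)}=h+e$ with the decomposition above. Since $|a+b|^2 \ge (1-\gamma)|a|^2-(1/\gamma-1)|b|^2$ for any $\gamma\in(0,1)$, we have
\[
\tfrac1m\sum_i|z^{(h_j)}(t_i)|^2 \;\ge\; (1-\gamma)\,\tfrac1m\sum_i|h(t_i)|^2-\tfrac1\gamma\,\tfrac1m\sum_i|e(t_i)|^2 .
\]
We handle the two terms as follows.

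\textbf{The $e$ term.} Its expectation is $\frac{1}{2.2}\|e\|_{[-1.1,1.1]}^2\le \frac{c}{2.2}\|z^{(h_j)}\|_2^2$, using \eqref{eq:good_condition_2}. By Markov's inequality, with probability $0.99$ the $e$ term is at most $\frac{100c}{2.2}\|z^{(h_j)}\|_2^2$.

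\textbf{The $h$ term.} We use the sup bound \eqref{eq:upper_bound_sup_h}: $|h(t)|^2\le \frac{10R}{c}\|h\|_2^2$. Combined with $\|h\|_{[-1.1,1.1]}\ge(1-\sqrt c)\|h\|_2$, each summand $X_i=|h(t_i)|^2$ satisfies
\[
X_i \le \tfrac{10R}{c}\|h\|_2^2 = O(R/c)\cdot \E X_i .
\]
Hence $\mathrm{Var}[X_i]\le \sup X\cdot \E X_i = O(R/c)(\E X_i)^2$, and Chebyshev's inequality gives
\[
\Pr\big[\tfrac1m\textstyle\sum_i X_i<(1-\gamma)\E X_1\big]\le \frac{O(R/c)}{\gamma^2 m}.
\]
With $m=C_R R$ and $C_R$ large relative to $1/(c\gamma^2)$, this probability is at most $0.1$. (A Chernoff bound for bounded variables would also work.) Moreover $\E X_1\ge\frac{(1-\sqrt c)^2}{2.2}\|h\|_2^2\ge \frac{(1-\sqrt c)^4}{2.2}\|z^{(h_j)}\|_2^2$, using $\|h\|_2\ge\|h\|_{[-1.1,1.1]}\ge(1-\sqrt c)\|z^{(h_j)}\|_2$.

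\emph{Combining.} Set $\gamma=c^{1/4}$. On the intersection of the two good events, which has probability at least $0.89$,
\[
\tfrac1m\sum_i |z^{(h_j)}(t_i)|^2\ge \Big[(1-\gamma)^2(1-\sqrt c)^4-\tfrac{100c}{\gamma}\Big]\tfrac{1}{2.2}\|z^{(h_j)}\|_2^2 ,
\]
and the bracket is at least $0.9$ for $c$ sufficiently small.

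The main obstacle is the variance bound for the $h$ term. It needs the ratio between the sup bound $O(R/c)\|h\|_2^2$ and the mean $\Omega(\|h\|_2^2)$ to be $O(R/c)$. This is exactly why $m=\Theta(R)$ samples suffice, and why the constant $C_R$ must be chosen depending on $c$.
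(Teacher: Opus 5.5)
Your proof is correct and follows the paper's argument. You split $z^{(h_j)}=h+e$, use the sup bound \eqref{eq:upper_bound_sup_h} to concentrate $\frac1m\sum_i|h(t_i)|^2$ with $m=C_R R$ samples, apply Markov's inequality with \eqref{eq:good_condition_2} to $\frac1m\sum_i|e(t_i)|^2$, and take a union bound. The differences are minor: you use Chebyshev where the paper uses Chernoff (so $C_R$ depends worse on $c$), and you bound the cross term with $|a+b|^2\ge(1-\gamma)|a|^2-|b|^2/\gamma$ rather than the paper's Cauchy--Schwarz.
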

\begin{proof}
    For each random sample $t_i \sim [-1.1,1.1]$ in Procedure~\textsc{GetSamples}, $\E[|h(t_i)|^2]=\|h\|_{[-1.1,1.1]}^2/2.2 \ge \frac{(1-\sqrt{c})^2}{2.2} \cdot \|h\|^2_2$. So \eqref{eq:upper_bound_sup_h} implies $\frac{\sup_t |h(t)|^2}{\E[|h(t_i)|^2]} \le \frac{10R \cdot 2.2}{c \cdot (1-\sqrt{c})^2}=O(R/c)$. Since the number of samples $m:=C_R \cdot R$ for a large constant $C_R$, the Chernoff bound implies that 
\[
\text{with probability }0.99, \frac{\sum_{i=1}^m |h(t_i)|^2}{m} \ge 0.95 \cdot \E[|h(t_i)|^2] \ge \frac{0.95 \cdot (1-\sqrt{c})^2}{2.2} \cdot \|z^{(h_j)}\|_2^2.
\] 
At the same time, \[
\E\left[ \frac{\sum_{i=1}^m |e(t_i)|^2}{m} \right]=\E[|e(t_i)|^2]=\|e\|_{[-1,1]}^2/2.2 \le \|e\|_2^2/2.2.\]
Markov's inequality implies that with probability $0.9$, $\frac{\sum_{i=1}^m |e(t_i)|^2}{m} \le 10 \cdot \|e\|_2^2/2.2 \le 10 c \cdot \|z^{(h_j)}\|_2^2/2.2$.
\begin{align*}
    \sum_i |z^{(h_j)}(t_i)|^2 & \ge \sum_i \left(|h^{(h_j)}(t_i)|^2 - 2 |h^{(h_j)}(t_i)| \cdot |e(t_i)| + |e(t_i)|^2 \right) \\
    & \ge \sum_i |h(t_i)|^2 - 2 (\sum_i |h(t_i)|^2)^{1/2} \cdot (\sum_i |e(t_i)|^2)^{1/2} \\
    & \ge m \left( \frac{0.95 \cdot (1-\sqrt{c})^2}{2.2} \cdot \|z^{(h_j)}\|_2^2 - \frac{\sqrt{0.95} \cdot (1-\sqrt{c}) \cdot \sqrt{10 c}}{2.2} \cdot \|z^{(h_j)}\|_2^2\right)\\
    & \ge m \cdot \frac{0.9}{2.2} \cdot \|z^{(h_j)}\|_2^2.
\end{align*}
\end{proof}

\begin{claim}\label{clm:diff_beta}
    For any $\beta \le c_0 \cdot B \sigma$, $g(t):=z^{(h_j)}(t) \cdot e^{2 \pi \bi \cdot mid(J_j) \beta} - z^{(h_j)}(t+\beta)$ satisfies $\|g\|_{2}^2 \le O(c_0^2 + c) \cdot \|z^{(h_j)}\|_2^2$ and $\E_{t_i}[|g(t_i)|^2] = O(c_0^2 + c) \cdot \|z^{(h_j)}\|_2^2$.

\end{claim}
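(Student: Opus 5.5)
The plan is to work entirely on the Fourier side and then apply Parseval (Theorem~\ref{thm:Plancherel_Parseval}). Since $z^{(h_j)}(t+\beta)$ has Fourier transform $\wh{z^{(h_j)}}(f)\, e^{2\pi \bi f\beta}$, we get
\[
\wh{g}(f)=\wh{z^{(h_j)}}(f)\cdot\big(e^{2\pi\bi\cdot mid(J_j)\beta}-e^{2\pi\bi f\beta}\big).
\]
I will use the decomposition $z^{(h_j)}=h+e$ introduced above. This gives $\wh{g}=\wh{h}\cdot\phi+\wh{e}\cdot\phi$ with $\phi(f):=e^{2\pi\bi\cdot mid(J_j)\beta}-e^{2\pi\bi f\beta}$. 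I then bound $\|g\|_2^2\le 2\|\wh{h}\phi\|_2^2+2\|\wh{e}\phi\|_2^2$ and treat the two terms separately.

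\emph{The $h$ part.} The support of $\wh{h}$ is the interval
\[
\Big[\tfrac{i_j+h_j/B-1/2B-\sigma b}{\sigma},\ \tfrac{i_j+h_j/B+1/2B-\sigma b}{\sigma}\Big],
\]
which has length $\frac{1}{B\sigma}$. By the definition of $i_j$ we have $i_j+h_j/B-\sigma b-\sigma\cdot mid(J_j)\in[-1/2B,1/2B]$, so $mid(J_j)$ lies in this same interval. Hence $|f-mid(J_j)|\le \frac{1}{B\sigma}$ for every $f\in\supp(\wh{h})$. The phase difference then gives
\[
|\phi(f)|\le 2\pi\beta\,|f-mid(J_j)|\le \frac{2\pi\beta}{B\sigma}\le 2\pi c_0 ,
\]
where the last step uses the stated hypothesis $\beta\le c_0\cdot B\sigma$ in the form $\beta/(B\sigma)\le c_0$. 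Therefore $\|\wh{h}\phi\|_2^2\le (2\pi c_0)^2\|h\|_2^2\le (2\pi c_0)^2\|z^{(h_j)}\|_2^2$. The last inequality holds because $\wh{h}$ is a restriction of $\wh{z^{(h_j)}}$.

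\emph{The $e$ part.} Here I only use the trivial bound $|\phi|\le 2$ together with \eqref{eq:good_condition_2}, which gives $\|e\|_2^2\le c\|z^{(h_j)}\|_2^2$. Hence $\|\wh{e}\phi\|_2^2\le 4c\|z^{(h_j)}\|_2^2$. Combining the two parts, Parseval yields
\[
\|g\|_2^2\le \big(8\pi^2c_0^2+8c\big)\|z^{(h_j)}\|_2^2=O(c_0^2+c)\,\|z^{(h_j)}\|_2^2 .
\]

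For the sampling statement, each $t_i$ is uniform on $[-1.1,1.1]$, so
\[
\E_{t_i}[|g(t_i)|^2]=\frac{\|g\|_{[-1.1,1.1]}^2}{2.2}\le \frac{\|g\|_2^2}{2.2}=O(c_0^2+c)\,\|z^{(h_j)}\|_2^2 .
\]
Here the ``$=O(\cdot)$'' in the statement is read as an upper bound. The only step needing care is locating $mid(J_j)$ inside the support window of $\wh{h}$, so that $|f-mid(J_j)|$ is at most the window length $\frac{1}{B\sigma}$. This follows directly from the choice of $i_j$, or alternatively from \eqref{eq:good_condition_1}. Everything else is a routine calculation.
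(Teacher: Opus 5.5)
Your proposal is correct and takes the same route as the paper. You write $\wh{g}=\wh{z^{(h_j)}}\cdot(e^{2\pi\bi\cdot mid(J_j)\beta}-e^{2\pi\bi f\beta})$, bound the phase factor by $O(c_0)$ on the window of length $\frac{1}{B\sigma}$ that contains $mid(J_j)$ and by $2$ outside it via \eqref{eq:good_condition_2}, then apply Parseval and average over $t_i\sim[-1.1,1.1]$. The differences are cosmetic: your constant $2\pi c_0$ is sharper than the paper's $4\pi c_0$, you locate $mid(J_j)$ through the definition of $i_j$ instead of \eqref{eq:good_condition_1}, and your factor-$2$ splitting is unnecessary because $\wh{h}$ and $\wh{e}$ have disjoint supports.
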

\begin{proof}
    Let us consider 
    \begin{equation}\label{eq:small_error}
    \wh{g}(f)=\wh{z^{(h_j)}}(f) \cdot e^{2 \pi \bi \cdot mid(J_j) \beta} - \wh{z^{(h_j)}}(f) \cdot e^{2 \pi \bi \cdot f \beta} = \wh{z^{(h_j)}}(f) \cdot (e^{2 \pi \bi \cdot mid(J_j) \beta} - e^{2 \pi \bi \cdot f \beta}).        
    \end{equation}
    Condition~\eqref{eq:good_condition_1} implies $mid(J_j) \in [\frac{i_j+h_j/B-1/2B-\sigma b}{\sigma},\frac{i_j+h_j/B+1/2B-\sigma b}{\sigma}]$ such that 
    \[
    \forall f \in [\frac{i_j+h_j/B-1/2B-\sigma b}{\sigma},\frac{i_j+h_j/B+1/2B-\sigma b}{\sigma}], |mid(J_j) \beta - f \beta| \le \frac{1/B}{\sigma} \cdot \beta \le c_0.
    \]
    This implies $|e^{2 \pi \bi \cdot mid(J_j) \beta} - e^{2 \pi \bi \cdot f \beta}| \le 4 \pi \cdot c_0$ and
 \eqref{eq:small_error} is at most   
    $\le |\wh{z^{(h_j)}}(f)| \cdot 4\pi c_0$.

    For $f$ not in this interval, we bound \eqref{eq:small_error} by $2 \cdot |\wh{z^{(h_j)}}(f)|$.    From all discussion above, 
    \begin{align*}
    \|g\|_2^2 & = \int_{f \in [\frac{i_j+h_j/B-1/2B-\sigma b}{\sigma},\frac{i_j+h_j/B+1/2B-\sigma b}{\sigma}]} |\wh{g}(f)|^2 \mathrm{d} f + \int_{f \notin [\frac{i_j+h_j/B-1/2B-\sigma b}{\sigma},\frac{i_j+h_j/B+1/2B-\sigma b}{\sigma}]} |\wh{g}(f)|^2 \mathrm{d} f \\
        & \le (4\pi c_0)^2 \cdot \|\wh{z^{(h_j)}}\|_2^2 + 4c \cdot \|\wh{z^{(h_j)}}\|_2^2 \\
        & = ((4\pi c_0)^2 + 4c) \cdot \|z^{(h_j)}\|_2^2.
    \end{align*}
    
Since $\|g\|_{[-1,1]}^2 \le \|g\|_2^2$ and $\E[|g(t_i)|^2]=\frac{\|g\|_{[-1,1]}^2}{2.2}$, $\E[|g(t_i)|^2] \le O(c_0^2 + c) \cdot \frac{\|z^{(h_j)}\|_{2}^2}{2.2}$.
    
\end{proof}

Back to the proof of Lemma~\ref{lem:estimate_mid_J}, we assume Claim~\ref{clm:guarantees_samples1} and Claim~\ref{clm:diff_beta} to bound $\frac{z^{(h_j)}(s_{h_j}+\beta)}{z^{(h_j)}(s_{h_j})} - e^{2 \pi \bi \cdot mid(J_j) \beta}$. For $s_j \sim D_j$, we bound
\begin{align}
\E_{t_1,\ldots,t_m} \left[ \E_{s_j \sim D_j}\left[ \frac{|z^{(h_j)}(s_j)e^{2 \pi \bi mid(J)\beta}-z^{(h_j)}(s_j+\beta)|^2}{|z^{(h_j)}(s_j)|^2} \right] \right] & =\E_{t_1,\ldots,t_m} \left[ \E_{s_j} \frac{|g(s_j)|^2}{|z^{(h_j)}(s_j)|^2} \right] \notag \\
& = \E_{t_1,\ldots,t_m} \left[ \sum_{i=1}^m \frac{|z^{(h_j)}(t_i)|^2}{\sum_{i'} |z^{(h_j)}(t_{i'})|^2} \cdot \frac{|g(t_i)|^2}{|z^{(h_j)}(t_i)|^2} \right] \notag \\
& = \E_{t_1,\ldots,t_m} \left[ \frac{\sum_{i=1}^m |g(t_i)|^2}{\sum_{i'} |z^{(h_j)}(t_{i'})|^2} \right]. \label{eq:exp_z}
\end{align}

Assuming Claim~\ref{clm:guarantees_samples1} that $\sum_{i'} |z^{(h_j)}(t_{i'})|^2 \ge \frac{0.9}{2.2} \|z^{(h_j)}\|_2^2$ holds (with probability 0.89), Claim~\ref{clm:diff_beta} implies that this expectation is at most 
\[
\frac{\E_{t_i}[|g(t_i)|^2]/0.89}{0.9 \cdot \|z^{(h_j)}\|_2^2/2.2} = \frac{O(c_0^2+c) \cdot \|z^{(h_j)}\|_2^2/0.89}{0.9 \cdot \|z^{(h_j)}\|_2^2/2.2} = O(c_0^2+c).
\] 
Markov's inequality implies it is at most $0.09$ with probability $4/5$ when $c_0$ and $c$ are small constants. Hence with probability $\ge 0.6$ (assuming Claim~\ref{clm:guarantees_samples1} and the Markov's inequality over \eqref{eq:exp_z}), 
\[
\frac{|z^{(h_j)}(s_j)e^{2 \pi \bi mid(J)\beta}-z^{(h_j)}(s_j+\beta)|}{|z^{(h_j)}(s_j)|}\le 0.3.
\]

\subsection{Proof of Claim~\ref{clm:probability_condition_freq}}\label{sec:proof_cond_freq}

Recall that \[
\wh{G}^{(j)}_{\sigma,b}(f) = \sum_{q\in\Z}\wh{G}(q+j/B-\sigma f-\sigma b).
\]
Then
\[
G^{(j)}_{\sigma,b}(t) = G(-t/\sigma) e^{2\pi\bi(j/B-\sigma b)t/\sigma}\comb_\sigma(t).\label{eq:Gj_comb}
\]
\begin{lemma} \label{lem:properties_hatG_periodized}
Given $j\in[B]$, $\sigma>0$, and $b\in\R$. The filter function $(\wh{G}^{(j)}_{\sigma,b}, G^{(j)}_{\sigma,b})$ has the following properties: 
\begin{enumerate}
    \item If $\min_{q\in\Z}|q+j/B-\sigma f-\sigma b|\leq\frac{1-\alpha}{2B}$, $\wh{G}^{(j)}_{\sigma,b}(f) \in[1-3\delta/n,1+3\delta/n]$.
    \item If $\min_{q\in\Z}|q+j/B-\sigma f-\sigma b| > \frac{1}{2B}$, $0\leq\wh{G}^{(j)}_{\sigma,b}(f)\leq 3\delta/n$.
    \item $\supp(G^{(j)}_{\sigma,b}) \subseteq \sigma \cdot \left(\Z\cap\left[-\frac{\ell B}{2\alpha},\frac{\ell B}{2\alpha}\right]\right)$.
\end{enumerate}
\end{lemma}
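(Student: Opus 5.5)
Item 3 follows directly from the definition, so I will dispose of it first. By Lemma~\ref{lem:property_hash_bins}, $G^{(j)}_{\sigma,b}(t)=G(-t/\sigma)\,e^{2\pi\bi(j/B-\sigma b)t/\sigma}\,\comb_\sigma(t)$. The comb restricts the support to $\sigma\Z$. The factor $G(-t/\sigma)$ vanishes unless $-t/\sigma\in\supp(G)\subset[-\frac{\ell B}{2\alpha},\frac{\ell B}{2\alpha}]$, which is property 1 of $(G,\wh{G})$ and is symmetric. Together these give $\supp(G^{(j)}_{\sigma,b})\subseteq\sigma\cdot(\Z\cap[-\frac{\ell B}{2\alpha},\frac{\ell B}{2\alpha}])$.

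For items 1 and 2, I will work directly with the periodized sum $\wh{G}^{(j)}_{\sigma,b}(f)=\sum_{q\in\Z}\wh{G}(q+\theta)$, where $\theta:=j/B-\sigma f-\sigma b$. Let $q^*$ be the integer minimizing $|q+\theta|$, and write $u:=q^*+\theta$, so $|u|\le 1/2$. The term $q=q^*$ is the main term and everything else is a tail.

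\textbf{Main term.} In item 1 we have $|u|\le\frac{1-\alpha}{2B}$, so property 2 gives $|\wh{G}(u)|\in[1-\delta/n,1]$. In item 2 we have $|u|>\frac{1}{2B}$, so property 4 gives $|\wh{G}(u)|\le\delta/n$. I also need $\wh{G}$ to be real and nonnegative, so that the bounds hold for the values themselves and not only their moduli. If $\ell$ is taken even, then $\wh{G}=b_0\,(\sinc)^{\ell}*\rect$ is a convolution of nonnegative functions and hence nonnegative. This is also what yields the lower bound $0$ in item 2.

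\textbf{Tail terms.} For $q\neq q^*$ we have $|q+\theta|\ge 1/2$, and more precisely $|q+\theta|\ge |q-q^*|-1/2$. Property 5 then gives
\[
\sum_{q\ne q^*}|\wh{G}(q+\theta)|\le 2\sum_{r\ge1} b_0\Big(\frac{2\alpha}{\pi B(r-1/2)}\Big)^{\ell}.
\]
Since $B$ is large and $\alpha=0.01$, the ratio $\frac{4\alpha}{\pi B}$ is tiny. The sum is dominated by its first term and is at most $O(b_0)\,(4\alpha/\pi B)^{\ell}$.

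\textbf{Main obstacle.} The difficult step is showing that this tail is at most $\delta/n$ (in fact $2\delta/n$ suffices), since $b_0=\Theta(\sqrt{\ell}/\alpha)$ could a priori swamp the decay. With $\ell=\Theta(\log(n/\delta))$ carrying a large enough constant, $(4\alpha/\pi B)^{\ell}\le (1/B)^{\ell}$ is at most $(\delta/n)^{2}$. This easily absorbs $b_0=O(\sqrt{\log(n/\delta)})$, using $B\ge 2$.

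Combining the main term with the tail bound, item 1 gives $\wh{G}^{(j)}_{\sigma,b}(f)\in[1-3\delta/n,\,1+3\delta/n]$ and item 2 gives $0\le\wh{G}^{(j)}_{\sigma,b}(f)\le 3\delta/n$. If nonnegativity of $\wh{G}$ fails for the chosen $\ell$, I will instead state the bounds in absolute value. The argument is otherwise unchanged.
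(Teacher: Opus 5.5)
Your proposal is correct and follows the paper's proof. Both isolate the nearest-integer term of the periodized sum and bound the remaining terms by Property~5; your estimate $|q+\theta|\ge|q-q^*|-1/2$ plays the role of the paper's $|f+q|\ge|q|/2$. You then apply Properties~2 and~4 to the main term, and read off item~3 from the comb together with $\supp(G)$, as the paper does via Claim~\ref{clm:outside_time_window}. Your remark that taking $\ell$ even makes $\wh{G}\ge 0$ fills a small point the paper glosses over: it states the properties of $\wh{G}$ only in absolute value, but then concludes signed bounds.
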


\begin{proof}
First we prove an auxiliary estimation for $\sum_{q\in\Z\setminus\{0\}} |\wh{G}(f+q)|$ when $f \in [-1/2,1/2]$.

We use Property 5 of $(G,\wh{G})$ that 
\begin{equation}
    |\wh{G}(x)| \leq b_0\left(\frac{2\alpha}{\pi B|x|}\right)^{\ell}
\end{equation}
for $|x|\geq 1/2$. 
For $f\in[-1/2,1/2]$ and $q \in \mathbb{Z} \setminus 0$, we have $|f+q|\geq |q|/2$. 
Therefore,
\[
\sum_{q\in\Z\setminus\{0\}}|\wh{G}(f+q)| \leq 2b_0\left(\frac{4\alpha}{\pi B}\right)^{\ell} \sum_{q=1}^{\infty}q^{-\ell} \leq \delta/n.
\]

Consider the integer $q_0 = \arg\min_{q\in\Z}|q+j/B-\sigma f-\sigma b|$ for arbitrary $f$. Since $q+j/B-\sigma f-\sigma b \notin (-1/2,1/2)$ for $q \ne q_0$, the terms in $\sum_{q\in\Z}\wh{G}(q+j/B-\sigma f-\sigma b)$ beside $q_0$ can only contribute $(\delta/n)^{O(1)}$ at total by the above estimation. Hence, $|\wh{G}^{(j)} - \wh{G}(q_0+j/B-\sigma f-\sigma b)| \leq \delta / n$. The first two properties follow by the properties of $\wh{G}$. 

And we have proved the third property in Claim~\ref{clm:outside_time_window}.
\end{proof}

\begin{proof}
Recall that $h_j := h_{\sigma,b}(J_j)$, $i_j$ denotes the closest integer such that $i_j+h_j/B-\sigma b - \sigma \cdot mid(J_j) \in [-1/2B,1/2B]$, and $\wh{G}^{(j)}_{\sigma,b}(f) := \sum_{i \in \Z} \wh{G}(i + j/B -\sigma f - \sigma b)$.

For convenience, let $I_{j}:=\left[ \frac{i_j+h_j/B-1/(2B)-\sigma b}{\sigma}, \frac{i_j+h_j/B+1/(2B)-\sigma b}{\sigma} \right]$. And we define
\begin{align*}
E_{in} &:=\int_{I_j}|\wh w(f) \cdot \wh{G}^{(h_j)}_{\sigma,b}(f) |^2\mathrm{d}f,\\
E_{out} &:=\int_{\R\setminus I_j}|\wh w(f) \cdot \wh{G}^{(h_j)}_{\sigma,b}(f)  |^2\mathrm{d}f.
\end{align*}
By Lemma~\ref{lem:property_hash_bins}, we have $E_{in}+E_{out}=\|z^{(h_j)}\|_2^2$. 
And it suffices to prove that $E_{in}\geq(1-c)(E_{in}+E_{out})$.

 For $f \notin I_j$, $\wh{G}^{(h_j)}_{\sigma,b}(f) \le \delta/n$ if there does not exist an integer $m$ such that $\left|m+h_j/B-\sigma b-\sigma f\right|\leq\frac1{2B}$.
The definition of $i_j$ also implies $\left|i_j+h_j/B-\sigma b-\sigma \cdot mid(J_j)\right|\leq\frac1{2B}$.
Subtracting the last two inequalities yields that for any $f \notin I_j$  with $h_{\sigma,b}(f) = h_j$,
\[
\left|\sigma(f-mid(J_j))-(m-i_j)\right|\leq \frac{1}{B}.
\]
Since $f \notin I_j$ and so $m-i_j$ is a nonzero integer,
\begin{equation} \label{eq:collision-freqency-distance-condition}
    \sigma|f-mid(J_j)| \geq 1 - \frac{1}{B}.
\end{equation}

We consider the random variable $X := \sigma |f-mid(J_j)|$ for a fixed $f \notin I_j$. 
Since $\sigma \sim [\frac{c}{20 B \cdot R},\frac{c}{10 B \cdot R}]$, $X$ is uniform on an interval $[A, 2A]$ where $A := \frac{c |f-mid(J_j)|}{20 B \cdot R}$. 
And we have $2 A \geq \sigma|f-mid(J_j)| \geq 1 - \frac{1}{B}$ by \eqref{eq:collision-freqency-distance-condition}.

The interval $[A,2A]$ contains at most $A+1$ integer.  
Hence,
\begin{align*}
\Pr[h_{\sigma,b}(f) = h_j]
\le & \Pr\Big[\min_{m \in \Z} \left|\sigma(f-mid(J_j))-(m-i_j)\right|\leq\frac{1}{B} \Big] \\
\le & \Pr\Big[\min_{m' \in \Z} \left|X - m'\right|\leq\frac{1}{B} \Big] \\
\le & (A+1) \cdot \frac{2/B}{A}
= O\Big(\frac{1}{B}\Big).
\end{align*}

Therefore, for every $f \notin I_j$,
\begin{align*}
    \E_{\sigma,b} \left[ |\wh{G}^{(h_j)}(f)|^2\right] = & \E_{\sigma,b} \left[ |\wh{G}^{(h_j)}(f)|^2 \Big| h_{\sigma,b}(f) \ne h_j \right] \Pr[h_{\sigma,b}(f) \ne h_j] \\
    & + \E_{\sigma,b} \left[ |\wh{G}^{(h_j)}(f)|^2 \Big| h_{\sigma,b}(f) = h_j \right] \Pr[h_{\sigma,b}(f) = h_j].
\end{align*}
If $h_{\sigma,b}(f) \ne h_j$, that infers $\left|i_j+h_j/B-\sigma b-\sigma f\right| \geq \frac{1}{2B}$, then $|\wh{G}^{(h_j)}(f)|^2 \le 3\delta/n$ by Lemma~\ref{lem:properties_hatG_periodized}. Meanwhile, Lemma~\ref{lem:properties_hatG_periodized} also shows $|\wh{G}^{(h_j)}(f)|^2 \le 1 + 3\delta/n$ for any $f$. Combining the two bound, we have
\begin{align*}
    \E_{\sigma,b} \left[ |\wh{G}^{(h_j)}(f)|^2\right] & \le 3(\delta/n) \cdot \Pr[h_{\sigma,b}(f) \ne h_j] + O(1) \cdot \Pr[h_{\sigma,b}(f) = h_j] \\
    & = 3(\delta/n) + O(1/B) = O(1/B).
\end{align*}
And it shows that
\begin{align*}
\E_{\sigma,b}[E_{out}]
=\int_{\R\setminus I_j}|\wh w(f)|^2 \cdot \E_{\sigma,b} |\wh{G}^{(h_j)}_{\sigma,b}(f) |^2\mathrm{d}f
=O\Big(\frac{1}{B}\Big) \cdot \|w\|_2^2.
\end{align*}
By Markov's inequality, with a proper choice $B:=O(\frac{n}{\epsilon c})$,
\begin{align*}
 \Pr \left[E_{out} > \frac{\epsilon c}{8n}\|w\|_2^2\right] \le 0.01.
\end{align*}

Meanwhile, since $J_j$ is heavy, with Condition~\eqref{eq:good_condition_1} hold,
\begin{align*}
E_{in} & \geq\int_{J_j}|\wh w(f)|^2 |\wh{G}^{(h_j)}_{\sigma,b}(f))|^2 \mathrm{d}f \\
& \geq(1-\delta^{\Omega(1)})^2\int_{J_j}|\wh w(f)|^2 \mathrm{d}f \geq\frac{\epsilon}{8n}\|w\|_2^2.
\end{align*}

Finally, a union bound implies
\[
\Pr[\text{Condition~\eqref{eq:good_condition_2} holds}] \ge 0.95.
\]
\end{proof}

\subsection{Proof of Lemma~\ref{lem:property_hash_bins}}\label{sec:hashtobins}
For convenience, let $s(t):=w(t) \cdot e^{2 \pi \bi b (t + \sigma a)} \cdot G(\frac{t + \sigma a}{\sigma})$
such that for $j \in [B]$, $u[j]:=\sum_{i \in \Z} s(\sigma(j+iB-a))$ by the definition~\eqref{eq:def_u}.

First, we verify the properties of $(\wh{G}^{(j)}_{\sigma,b},G^{(j)}_{\sigma,b})$.

\begin{claim}
Recall $\wh{G}^{(j)}_{\sigma,b}(f):=\sum_{i \in \Z} \wh{G}(i + j/B -\sigma f - \sigma b)$. Therefore, 
\[G^{(j)}_{\sigma,b}(t)= G(-t/\sigma)\cdot e^{2 \pi \bi (j/B-\sigma b)\cdot t/\sigma} \cdot \comb_{\sigma}(t). \]
\end{claim}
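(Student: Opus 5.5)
We compute the inverse Fourier transform of $\wh{G}^{(j)}_{\sigma,b}$ in two moves. First, pass from the periodization to a comb. Second, undo the affine change of variables $f \mapsto j/B-\sigma b-\sigma f$.

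\textbf{Step 1: periodization becomes sampling.} Write $\wh{G}^{(j)}_{\sigma,b}(f)=P(j/B-\sigma b-\sigma f)$ with $P(\xi):=\sum_{i\in\Z}\wh{G}(\xi+i)$. By the Poisson summation formula, the $1$-periodic function $P$ has Fourier series
\[
P(\xi)=\sum_{k\in\Z} G(k)e^{2\pi\bi k\xi}.
\]
The sum is finite because $G$ is compactly supported (Property 1), so all manipulations are justified. Equivalently, $P$ is the Fourier transform of the discrete measure $\sum_k G(-k)\delta_k$, consistent with Fact~\ref{fact:FFT_DFT}.

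\textbf{Step 2: affine change of variables.} Substitute $\xi=j/B-\sigma b-\sigma f$:
\[
\wh{G}^{(j)}_{\sigma,b}(f)=\sum_{k\in\Z} G(k)\,e^{2\pi\bi k(j/B-\sigma b)}\,e^{-2\pi\bi (k\sigma) f}.
\]
Each term $e^{-2\pi\bi (k\sigma)f}$ is the Fourier transform of $\delta_{k\sigma}$. So by linearity of the Fourier transform on tempered distributions,
\[
G^{(j)}_{\sigma,b}(t)=\sum_{k\in\Z} G(k)\,e^{2\pi\bi k(j/B-\sigma b)}\,\delta_{k\sigma}(t).
\]

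\textbf{Step 3: rewrite with $t$ and reconcile the sign.} On the support of $\delta_{k\sigma}$ we have $k=t/\sigma$. Since $\comb_\sigma(t)=\sum_k\delta_{k\sigma}(t)$, this gives
\[
G^{(j)}_{\sigma,b}(t)=G(t/\sigma)\,e^{2\pi\bi (j/B-\sigma b)t/\sigma}\,\comb_\sigma(t).
\]
The claim states $G(-t/\sigma)$ instead of $G(t/\sigma)$. These agree because $G$ is even: $\rect_{B/\alpha}$ is even, so its $\ell$-fold self-convolution is even, and $\sinc$ is even.

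\textbf{Main obstacle.} The only delicate point is this sign and orientation bookkeeping: the convention $\wh{g}(f)=\int g(t)e^{-2\pi\bi ft}\,\mathrm{d}t$ combined with the argument $-\sigma f$ produces the reflection, and the evenness of $G$ resolves it. Everything else is a finite sum, so no convergence issues arise.
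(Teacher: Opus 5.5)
Your argument is correct, but it takes a different (dual) route from the paper's. The paper substitutes $u=i+j/B-\sigma f-\sigma b$ termwise inside the inverse Fourier integral. It then pulls out $\int\wh{G}(u)e^{-2\pi\bi tu/\sigma}\,\mathrm{d}u=G(-t/\sigma)$ and is left with the divergent series $\sum_{i}e^{2\pi\bi ti/\sigma}$. It identifies that series with $\sigma\,\comb_\sigma(t)$ through the Fourier series of the comb. In effect, the paper applies Poisson summation on the time side, distributionally, after interchanging an infinite sum with an integral.

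You instead apply Poisson summation on the frequency side. The periodization of $\wh{G}$ is a trigonometric polynomial whose coefficients are integer samples of $G$, finitely many because $G$ is compactly supported, and you then invert term by term. Your route is more rigorous: every object is a finite sum of exponentials or point masses, so you need no divergent series and no sum--integral interchange. The paper's route is a more mechanical computation that reaches $G(-t/\sigma)$ without any symmetry assumption on $G$.

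One correction to your sign bookkeeping. With the convention $G(t)=\int\wh{G}(\xi)e^{2\pi\bi\xi t}\,\mathrm{d}\xi$, the $k$-th Fourier coefficient of $P$ is $\int_{\R}\wh{G}(\xi)e^{-2\pi\bi k\xi}\,\mathrm{d}\xi=G(-k)$. Poisson summation therefore gives $P(\xi)=\sum_k G(-k)e^{2\pi\bi k\xi}=\sum_k G(k)e^{-2\pi\bi k\xi}$. So $P$ is the Fourier transform of $\sum_k G(k)\delta_k$, which is what Fact~\ref{fact:FFT_DFT} says, not of $\sum_k G(-k)\delta_k$. Carrying $G(-k)$ through your Steps 2 and 3 produces $G(-t/\sigma)$ exactly as stated, with no appeal to evenness.

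Your Step 1 formula is still true because $G$ is even, and you verify evenness correctly, so the proof stands. However, your \emph{main obstacle} remark misdiagnoses the situation. The reflection is the natural output of the convention. The discrepancy you resolved with evenness was introduced by the sign slip in your own Step 1.
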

\begin{proof}
\begin{align*}
G^{(j)}_{\sigma,b}(t)
&=\int_{-\infty}^{+\infty} \wh{G}^{(j)}_{\sigma,b}(f)e^{2\pi \mathbf ift}\mathrm{d}f\\
&=\int_{-\infty}^{+\infty} \sum\limits_{i\in \mathbb Z}\wh{G}(i+j/B-\sigma f-\sigma b)e^{2\pi\mathbf ift}\mathrm{d}f.
\end{align*}

Define $u:=i+j/B-\sigma f-\sigma b$ such that:

\begin{equation}
\label{eq: substitute f for u}
f=\frac{i+j/B-\sigma b-u}{\sigma},\quad \mathrm{d}f=-\frac{\mathrm{d}u}{\sigma}.
\end{equation}

By substituting $f$ for $u$, we have:

\begin{align}
G^{(j)}_{\sigma,b}(t)
&=\int_{+\infty}^{-\infty} \sum\limits_{i\in \mathbb Z}\wh{G}(u)\exp(2\pi\mathbf i\frac{i+j/B-\sigma b-u}{\sigma}t)(-\frac{1}{\sigma})\mathrm{d} u\nonumber\\
&=\frac{1}{\sigma}\int_{-\infty}^{+\infty}\sum\limits_{i\in \mathbb Z}\wh{G}(u)\exp(2\pi\mathbf i t\frac{j/B-\sigma b}{\sigma})\exp(-2\pi\mathbf i t\frac{u}{\sigma})\exp(2\pi\mathbf i t\frac{i}{\sigma})\mathrm{d}u\nonumber\\
&=\frac{1}{\sigma}\exp\left(2\pi\mathbf i t\frac{j/B-\sigma b}{\sigma}\right)\left(\int_{-\infty}^{+\infty}\wh{G}(u)\exp(-2\pi\mathbf i t\frac{u}{\sigma})\mathrm{d}u\right)\left(\sum\limits_{i\in \mathbb Z}\exp(2\pi\mathbf i t\frac{i}{\sigma})\right)\nonumber\\
&=\frac{1}{\sigma}\exp\left(2\pi\mathbf i t\frac{j/B-\sigma b}{\sigma}\right)G(-t/\sigma)\sum\limits_{i\in \mathbb Z}\exp(2\pi\mathbf i t\frac{i}{\sigma}).\label{eq: expansion form of G}
\end{align}

By Fact~\ref{fact:FFT_DFT}, $\comb_\sigma(t)$ is a periodic function. We expand $\comb_\sigma(t)$ into Fourier series:

\begin{equation}
\label{eq: Fourier series of comb}
\comb_\sigma(t) = \sum_{n=-\infty}^{\infty} c_n \, e^{2\pi\mathbf i n \frac{t}{\sigma}}
\end{equation}

where 

\begin{equation}
\label{eq: coefficient of Fourier series of comb}
c_n = \frac{1}{\sigma} \int_{-\sigma/2}^{\sigma/2} \comb_\sigma(t) e^{-2\pi\mathbf i n \frac{t}{\sigma}}\mathrm{d}t=\frac{1}{\sigma}\int_{-\sigma/2}^{\sigma/2}\delta(t)e^{-2\pi\mathbf i n \frac{t}{\sigma}}\mathrm{d}t=\frac{1}{\sigma}e^{-2\pi \mathbf in\cdot 0}=\frac{1}{\sigma}
.
\end{equation}

Combine \eqref{eq: Fourier series of comb} and \eqref{eq: coefficient of Fourier series of comb}. We have:

\begin{equation}
\sum_{i\in\mathbb Z} \exp(2\pi\mathbf i \frac{i}{\sigma}t) =
\sigma\sum\limits_{n=-\infty}^{+\infty}\frac{1}{\sigma}\exp(2\pi\mathbf in\frac{t}{\sigma})
= \sigma\,\comb_\sigma(t).
\label{eq: connection between comb and exp}
\end{equation}

Plug \eqref{eq: connection between comb and exp} into \eqref{eq: expansion form of G}. We have:

\[
G^{(j)}_{\sigma,b}(t) = G(-t/\sigma)\, e^{2\pi\mathbf i (j/B - \sigma b)\cdot t/\sigma}\, \comb_\sigma(t).
\]
\end{proof}

We calculate $\wh{u}[j]$ in the following claim.

\begin{claim}

Recall in \eqref{eq:def_u}, $u[j]:=\sum_{i \in \Z} s(\sigma(j+iB-a))$. Therefore, 

\begin{equation}\label{eq:Fourier_coef_u}
\wh{u}[j]=\frac{1}{\sigma} \sum_{i \in \Z} \wh{s}(\frac{j/B + i}{\sigma}) \cdot e^{-2\pi\bi (j/B+i) \cdot a}.
\end{equation}
\end{claim}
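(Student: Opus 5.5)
We will prove \eqref{eq:Fourier_coef_u} by a direct computation: first rewrite $u$ as a $B$-periodic aliasing of samples of $s$, then apply Poisson summation to those samples.

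First, write $\wh{u}[j]=\sum_{r\in[B]} u[r]e^{-2\pi\bi jr/B}$. Substituting $u[r]=\sum_{i\in\Z} s(\sigma(r+iB-a))$, every integer $p$ is written uniquely as $p=r+iB$ with $r\in[B]$. Since $e^{-2\pi\bi j(r+iB)/B}=e^{-2\pi\bi jr/B}$, the double sum collapses to
\[
\wh{u}[j]=\sum_{p\in\Z} s(\sigma(p-a))\, e^{-2\pi\bi jp/B}.
\]
This is exactly the DFT-of-aliased-vector identity from Fact~\ref{fact:FFT_DFT}, with $A[p]:=s(\sigma(p-a))$ and $B[r]$ its folding.

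Second, set $v(t):=s(\sigma(t-a))$. Then $\wh{u}[j]=\sum_{p}v(p)e^{-2\pi\bi jp/B}$ is the DTFT of the integer samples of $v$, evaluated at frequency $j/B$. By Fact~\ref{fact:FFT_DFT} (Poisson summation), this equals $\sum_{i\in\Z}\wh{v}(j/B+i)$. Note that DTFT periodicity makes the restriction $f\in[0,1)$ harmless.

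Third, compute $\wh{v}$ by the scaling and shift rules. Substituting $\tau=\sigma(t-a)$ gives
\[
\wh{v}(f)=\int s(\sigma(t-a))e^{-2\pi\bi ft}\,\mathrm{d}t=\frac{1}{\sigma}\int s(\tau)e^{-2\pi\bi f(\tau/\sigma+a)}\,\mathrm{d}\tau=\frac1\sigma\,\wh{s}(f/\sigma)\,e^{-2\pi\bi fa}.
\]
Plugging in $f=j/B+i$ yields \eqref{eq:Fourier_coef_u}.

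The only real obstacle is justifying Poisson summation, that is, the interchange of sum and integral. Here it is harmless. $G$ is compactly supported, so $s$ and hence $v$ are compactly supported. $w$ is in $L^2$ on $[-1,1]$, and $G$ is a smooth compactly supported convolution power times sinc. Together these make the sum over $p$ finite, and $\wh{s}$ decays enough for $\sum_i \wh{v}(j/B+i)$ to converge. If the convergence at this stage feels delicate, we would instead verify the identity distributionally: view $\sum_p \delta_p$ as $\comb_1$, whose Fourier transform is $\comb_1$, exactly as in the comb computation \eqref{eq: connection between comb and exp} above.
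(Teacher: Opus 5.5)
Your proof is correct and is essentially the paper's Poisson-summation (aliasing) argument, run forward from the DFT rather than backward from the inverse DFT. You absorb the DFT phases to get $\wh{u}[j]=\sum_{p\in\Z}s(\sigma(p-a))e^{-2\pi\bi jp/B}$, apply unit-spacing Poisson summation as in Fact~\ref{fact:FFT_DFT}, and finish with the scaling/shift rule; the paper instead Fourier-expands $s$ inside $u[j]$, applies Poisson summation to the $\sigma B$-spaced samples to get the comb $\frac{1}{\sigma B}\sum_{k}\delta(f-\frac{k}{\sigma B})$, splits $k=qB+r$, and reads off $\wh{u}[r]$ by matching against the inverse DFT, a matching step your version avoids.

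One small caveat: $w$ is an arbitrary noisy signal set to $0$ outside $[-1,1]$, so $\wh{s}$ need not decay fast enough for absolute convergence, and your ``decays enough'' remark is too optimistic. The identity should be read distributionally, which is your stated fallback and exactly the level of rigor of the paper's own comb computation.
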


\begin{proof}

By definition of $u[j]$,

\begin{align}
u[j]&=\sum_{i \in \mathbb Z} s(\sigma(j+iB-a))\nonumber\\
&=\sum\limits_{i\in\mathbb Z}\int_{-\infty}^{+\infty}\hat{s}(f)e^{2\pi \mathbf if\sigma(j+iB-a)}\mathrm{d}f\tag{definition of Fourier transform }\\
&=\int_{-\infty}^{+\infty}\hat{s}(f)e^{2\pi \mathbf if\sigma(j-a)}\left(\sum\limits_{i\in\mathbb Z}e^{2\pi \mathbf i if\sigma B}\right)\mathrm{d}f\nonumber\\
&=\int_{-\infty}^{+\infty}\hat{s}(f)e^{2\pi \mathbf if\sigma(j-a)}\left(\frac{1}{\sigma B}\sum\limits_{k\in \mathbb Z}\delta(f-\frac{k}{\sigma B})\right) \mathrm{d}f\tag{similar proof to \eqref{eq: connection between comb and exp}}\\
&=\frac{1}{\sigma B}\sum\limits_{k\in \Z}\hat{s}\left(\frac{k}{\sigma B}\right)e^{2\pi \mathbf i(j-a)\cdot k/B}.\label{eq: u_j calculation1}\end{align}

Suppose $k=qB+r$ such that $q,j$ are integers and $r\in [B]$. By plugging $k=qB+r$ into \eqref{eq: u_j calculation1}, we have:

\begin{align}
u[j]&=\frac{1}{\sigma B}\sum\limits_{r\in [B]}\sum\limits_{q\in \mathbb Z}\hat{s}\left(\frac{q+r/B}{\sigma}\right)e^{2\pi\mathbf i(j-a)(q+r/B)}\nonumber\\
&=\frac{1}{\sigma B}\sum\limits_{r\in [B]}\sum\limits_{q\in \mathbb Z}\hat{s}\left(\frac{q+r/B}{\sigma}\right)e^{-2\pi\mathbf ia(q+r/B)}e^{2\pi\mathbf i j\cdot r/B}\tag{$e^{2\pi \mathbf i jq}=1$ as $jq$ is an integer}\\
&=\frac{1}{\sigma B}\sum\limits_{r\in [B]}\left(\sum\limits_{q\in \mathbb Z}\hat{s}\left(\frac{q+r/B}{\sigma}\right)e^{-2\pi\mathbf i(q+r/B)a}\right)e^{2\pi\mathbf i j\cdot r/B}.\label{eq: u_j calculation2}
\end{align}

Recall in Fact~\ref{fact:FFT_DFT} that 
\begin{equation}
\label{eq: DFT of u}
u[j]=\frac{1}{B}\sum_{r\in [B]}\wh{u}[r]e^{2\pi\mathbf i j\cdot r/B}.
\end{equation}

Comparing \eqref{eq: u_j calculation2} with \eqref{eq: DFT of u}, we obtain
\begin{equation}
\label{eq:hat u (in process)}
\wh{u}[r]=\frac{1}{\sigma}\sum\limits_{q\in \mathbb Z}\wh{s}\left(\frac{q+r/B}{\sigma}\right)e^{-2\pi\mathbf i(q+r/B)a}.
\end{equation}

By substituting $i,j$ for $q,r$ in \eqref{eq:hat u (in process)} respectively, we have
\[
\wh{u}[j]=\frac{1}{\sigma} \sum_{i \in \Z} \wh{s}\left(\frac{j/B + i}{\sigma}\right) \cdot e^{-2\pi\bi (j/B+i) \cdot a}.
\]
\end{proof}

\begin{proofof}{Lemma~\ref{lem:property_hash_bins}}

As $s(t)=w(t) \cdot e^{2 \pi \bi b (t + \sigma a)} \cdot G(\frac{t + \sigma a}{\sigma})$, $\wh{s}(f)=\wh{w}(f) * \wh{\bigg( e^{2 \pi \bi b(t+ \sigma a)} \cdot G(\frac{t+a \sigma}{\sigma}) \bigg)} (f)$. Because
\begin{align*}
\wh{\bigg( e^{2 \pi \bi b(t+ \sigma a)} \cdot G(\frac{t+a \sigma}{\sigma}) \bigg)} (f) & = \int e^{2 \pi \bi b(t+ \sigma a)} \cdot G(\frac{t+a \sigma}{\sigma}) e^{-2\pi\bi f t}\mathrm{d} t\\
& = \int e^{2 \pi \bi b(t+ \sigma a)} \cdot G(\frac{t+a \sigma}{\sigma}) e^{-2\pi\bi (\sigma f) \cdot \frac{t+a \sigma}{\sigma} + 2 \pi \bi \sigma f a}\mathrm{d} t \\
& = e^{2 \pi \bi f a \sigma} \cdot \int G(\frac{t+a \sigma}{\sigma}) e^{-2\pi\bi (\sigma f - \sigma b) \cdot \frac{t+a \sigma}{\sigma} }\mathrm{d} t =\sigma \cdot e^{2 \pi \bi \sigma f a} \cdot \wh{G}(\sigma f - \sigma b),
\end{align*}
$\wh{s}(f)$ becomes
\[
\int_{-\infty}^{+\infty} \wh{w}(\tau) \cdot \sigma \cdot e^{2 \pi \bi \sigma (f - \tau) a} \cdot \wh{G}(\sigma f -\sigma b - \sigma \tau ) \mathrm{d} \tau
\]
such that \eqref{eq:Fourier_coef_u} equals
\begin{align*}
    \wh{u}[j] & =\frac{1}{\sigma} \sum_{i \in \Z} \left( \int_{-\infty}^{+\infty} \wh{w}(s) \cdot \sigma e^{2 \pi \bi \sigma (\frac{j/B + i}{\sigma} - s) a }  \cdot \wh{G}(j/B + i - \sigma s -\sigma b) \mathrm{d} s \right) \cdot e^{-2\pi\bi (j/B+i) \cdot a} \\
    & = \sum_{i \in \Z} \int \wh{w}(s) \cdot \wh{G}(j/B + i - \sigma s - \sigma b) \cdot e^{2 \pi \bi \sigma (-s) a} \mathrm{d} s\\
    & = \int \wh{w}(s) \cdot \sum_{i \in \Z} \wh{G}(j/B + i - \sigma s - \sigma b) \cdot e^{ - 2 \pi \bi \sigma s a} \mathrm{d} s.
\end{align*}   
Since $\sum_{i \in \Z} \wh{G}(j/B + i - \sigma s - \sigma b)=\wh{G}^{(j)}_{\sigma,b}(s)$,
we further simplify the above equation of \eqref{eq:Fourier_coef_u} as
\[
\int \wh{w}(s) \cdot \wh{G}^{(j)}_{\sigma,b}(s) \cdot e^{ - 2 \pi \bi \sigma s a} \mathrm{d} s=\int \wh{z^{(j)}}(s) \cdot e^{ - 2 \pi \bi \sigma s a} \mathrm{d} s=z^{(j)}(-a \sigma).
\]

\end{proofof}

\section{Learning Multiband Signals}\label{sec:proof_main}
We finish the proof of Theorem~\ref{inform:learn_multiband} in this section. Recall that $mid(J)$ denotes the middle point of an interval $J$. Here is a formal restatement.

\begin{theorem}\label{thm:learn_multiband}
    Given parameters $F$, $R \ge 1$, $n$, and a small constant $\epsilon$, let $x$ be a multiband signal whose Fourier spectrum $\supp(\wh{x})=J_1 \cup J_2 \cup \cdots \cup J_n$ in $[-F,F]$ has $n$ \emph{unknown} intervals of length at most $R$. Suppose $x$ satisfies
    \begin{equation}\label{cond:multiband_recovery} 
        \int_{-1}^1 |x(t)|^2 \mathrm{d} t \ge (1-\epsilon) \int |x(t)|^2 \mathrm{d} t.
    \end{equation} 
    
     For any noisy observation $w=x+\eta$ in the time window $[-1,1]$ with $\|\eta\|_{[-1,1]}^2 \le \epsilon \cdot \|x\|_{[-1,1]}^2$,     there exists an efficient algorithm that takes $\tilde{O}(n \cdot R)$ samples and     $\tilde{O}\big( (n R )^{\omega} + n^2 R^3 \big)$ time to output a list of $\ell=\tilde{O}(n)$ frequencies $\tilde{f}_1,\ldots,\tilde{f}_\ell$ and $\tilde{x}$. The output satisfies the following properties with prob.~0.98:
    \begin{enumerate}
        \item for any $J_i$ with $\int_{J_i} |\wh{w}(f)|^2 \mathrm{d}f \ge \frac{\epsilon}{4n} \cdot \|w\|_{[-1,1]}^2$, there exists $\tilde{f}_j$ with $|mid(J_i) - \tilde{f}_j|=O(R)$.
        \item $\|x-\tilde{x}\|_{[-1,1]}^2 = O(\epsilon) \cdot \|x\|_{[-1,1]}^2$. 
    \end{enumerate}
    \end{theorem}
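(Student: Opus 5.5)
Part 1 of Theorem~\ref{thm:learn_multiband} follows directly from Theorem~\ref{thm:locations_multiband}, so the real work is Part 2. The plan is to combine Theorem~\ref{thm:locations_multiband} with the reconstruction guarantee of Theorem~\ref{thm:multiband_interpolate} applied to a candidate support built from the returned frequencies. One normalization first: the heaviness threshold in Part 1 is stated with $\|w\|_{[-1,1]}^2$, while \eqref{eq:cond_heavy} uses $\|w\|_2^2$. These coincide because $w$ is treated as zero outside $[-1,1]$, as in Algorithm~\ref{alg:recover_cluster1}. Running the procedure of Theorem~\ref{thm:locations_multiband} then yields, with probability $0.99$, $\ell=\tilde{O}(n)$ frequencies $\tilde f_1,\ldots,\tilde f_\ell$ such that every heavy $J_i$ has some $\tilde f_j$ with $|mid(J_i)-\tilde f_j|=O(R)$. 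This step costs $\tilde O(nR)$ samples and time.

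Next I will form the candidate multiband $S:=\bigcup_{j\le \ell}[\tilde f_j - C R,\tilde f_j + C R]$ for a large constant $C$. It consists of at most $\ell=\tilde O(n)$ intervals of total length $O(\ell R)=\tilde O(nR)$, and by Part 1 it contains every heavy $J_i$. I will decompose $x=x_S+x_{\bar S}$, where $\wh{x_S}:=\wh{x}\cdot 1_S$. The non-heavy bands not covered by $S$ number at most $n$, and each carries $\wh{w}$-energy below $\frac{\epsilon}{4n}\|w\|_2^2$. Hence $\|\wh{w}\cdot 1_{\bar S\cap \supp \wh x}\|_2^2\le \frac{\epsilon}{4}\|w\|_2^2$. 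Transferring this to $x$ via $\wh{x}=\wh{w}-\wh{\eta}$, with $\|\eta\|_2^2\le 2\epsilon\|x\|_2^2$ (from Condition~\eqref{cond:multiband_recovery}), gives $\|x_{\bar S}\|_2^2=O(\epsilon)\|x\|_2^2$. Using Condition~\eqref{cond:multiband_recovery} once more, $\|x\|_2^2\le \|x\|_{[-1,1]}^2/(1-\epsilon)$, so $\|x_{\bar S}\|_{[-1,1]}^2=O(\epsilon)\|x\|_{[-1,1]}^2$.

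I then treat $x_S$ as the signal and $\eta':=\eta+x_{\bar S}$ as the noise: $w=x_S+\eta'$ with $\|\eta'\|_{[-1,1]}^2=O(\epsilon)\|x\|_{[-1,1]}^2$. Since $\supp(\wh{x_S})\subseteq S$ is now known, I apply Theorem~\ref{thm:multiband_interpolate} with $T=1$ and accuracy $\epsilon$. Here $s_\epsilon=O(|S|+\ell\log\frac1\epsilon)=\tilde O(nR)$, so this step takes $\tilde O(nR)$ samples and $\tilde O(s_\epsilon^\omega+s_\epsilon^2\ell)=\tilde O((nR)^\omega+n^3R^2)$ time. Its output $\tilde x$ satisfies $\|x_S-\tilde x\|_{[-1,1]}^2\le \epsilon\|x_S\|_{[-1,1]}^2+O(\|\eta'\|_{[-1,1]}^2)$, and the triangle inequality then gives $\|x-\tilde x\|_{[-1,1]}^2=O(\epsilon)\|x\|_{[-1,1]}^2$. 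A union bound over the two randomized steps gives success probability $0.98$.

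I expect the main obstacle to be the running time bookkeeping rather than the error analysis. The term $s_\epsilon^2\cdot(\#\text{intervals})$ is naturally $\tilde O(n^3R^2)$, whereas the theorem states $n^2R^3$. To close this gap I would merge overlapping candidate intervals and possibly cap the number of intervals, or use that the dependence of the \cite{AKMMVZ18} algorithm on the interval count enters through per-interval kernel evaluations of cost $O(R)$ each. A smaller subtlety is the Plancherel transfer from $\wh w$-energy outside $S$ to $x$-energy outside $S$: $\wh{\eta}$ is not band-limited, so I would bound $\|\wh{x}1_{\bar S}\|_2\le \|\wh w 1_{\bar S\cap\supp\wh x}\|_2+\|\eta\|_2$, restricting to $\supp\wh x$ to avoid counting noise energy outside the bands.
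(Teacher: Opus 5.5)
Your proposal is correct and is essentially the paper's proof. It runs Theorem~\ref{thm:locations_multiband}, uses $O(R)$-neighborhoods of the returned frequencies as the known support for Theorem~\ref{thm:multiband_interpolate}, and shows that the uncovered non-heavy bands carry only $O(\epsilon)\|x\|_2^2$ energy; your single triangle inequality on $\bar S\cap\supp(\wh{x})$ replaces the paper's per-band dichotomy, and your explicit split $w=x_S+(\eta+x_{\bar S})$ spells out the paper's one-line appeal to the reconstruction theorem.

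The running-time mismatch you flag is not a gap on your side. The paper's proof also obtains only $\tilde{O}\big((nR)^{\omega}+n\cdot(nR)^2\big)$ from Theorem~\ref{thm:multiband_interpolate} and does nothing further, so the stated $n^2R^3$ does not follow from its argument either. The downstream uses ($R=1$, and $n=O(k)$ with $R=\tilde{\Theta}(k/\epsilon)$) only need $n^3R^2$, so your speculative merging/capping fixes are unnecessary.
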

\begin{remark}
    As discussed earlier, for $\supp(\wh{x})=I_1 \cup \cdots \cup I_{n}$ with a very long interval, instead of setting $R=\max |I_i|$, one could pick any $R$ and reset $n':=\sum_i \lceil {|I_i|}/{R} \rceil$ to apply Theorem~\ref{thm:learn_multiband}. 
    
    In particular, if one choose $R=1$, then $n' \le n+ \sum_i |I_i|$. This set of parameters recovers the statement in Theorem~\ref{inform:learn_multiband}.
\end{remark}
\begin{proofof}{Theorem~\ref{thm:learn_multiband}}
    Our algorithm assumes $w(t)=0$ for $t \notin [-1,1]$ whenever it has to query $w(t)$ outside the time window.     Although $w$ and the noise $\eta$ are defined in the time window originally, one could define $\eta(t)=-x(t)$ for any $t \notin [-1,1]$. This would increase $\|\eta\|_2^2$ to at most $2 \epsilon \|x\|_2^2$ from Condition~\eqref{cond:multiband_recovery}.
    
    The algorithm has two steps: in the first step, we apply Theorem~\ref{thm:locations_multiband} to obtain $\tilde{f}_1,\ldots,\tilde{f}_{\ell}$ for $\ell=\tilde{O}(n)$. The sample complexity and time complexity of the first step is $\tilde{O}(n R)$. Then we assume the support of $x$ is $\cup_{i=1}^{\ell} [\tilde{f}_i-O(R),\tilde{f}_i+O(R)]$ in order to apply reconstruction algorithms for multi-band signals from  Theorem~\ref{thm:multiband_interpolate} \cite{slepian1961prolate,landau1961prolate,landau1962prolate,AKMMVZ18}. The second step has sample complexity $\ell \cdot O(R + \log \frac{1}{\epsilon})=\tilde{O}(n R + n \log \frac{1}{\epsilon})$ and time complexity $\tilde{O}\big( (n R + n \log \frac{1}{\epsilon})^{\omega} + n (n R + n \log \frac{1}{\epsilon})^2 \big)$ from Theorem~\ref{thm:multiband_interpolate}.

    Then we analyze its correctness. Similar to Theorem~\ref{thm:locations_multiband}, we call an interval $J_i$ heavy only if $\int_{J_i} |\wh{w}(f)|^2 \mathrm{d}f \ge \frac{\epsilon}{4n} \cdot \|w\|_{2}^2$. Then    Theorem~\ref{thm:locations_multiband} implies that $\tilde{f}_1,\ldots,\tilde{f}_{\ell}$ in the first step satisfy that for any heavy $J_i$, there exists $\tilde{f}_j$ such that $|mid(J_i) - \tilde{f}_j|=O(R)$.

    Finally we bound $\|x-\tilde{x}\|_{[-1,1]}^2$. The key is that the contribution of non-heavy intervals is small. There are two types --- either 
    $\int_{J_i} |\wh{x}(f)|^2 \mathrm{d}f \le \frac{\epsilon}{n} \cdot \|w\|_{2}^2$ or $\int_{J_i} |\wh{\eta}(f)|^2 \mathrm{d}f \ge \frac{1}{4} \int_{J_i} |\wh{x}(f)|^2 \mathrm{d}f$.     The total contribution $\sum_{i} \int_{J_i} |\wh{x}(f)|^2 \mathrm{d}f$ of the first type is at most $\epsilon \cdot \|w\|_{2}^2 \le 2\epsilon \cdot \|x\|_2^2$ since there are at most $n$ intervals. The total contribution $\sum_{i} \int_{J_i} |\wh{x}(f)|^2 \mathrm{d}f$ of the second type is at most $8\epsilon \cdot \|x\|_{2}^2$ since $\int |\wh{\eta}(f)|^2 \mathrm{d}f \le 2\epsilon \cdot \|x\|_2^2$. So let $B:=\cup_i [\tilde{f}_i-O(R),\tilde{f}_i+O(R)]$ be the recovered support of $\wh{x}$. Then we have $\int_B |\wh{x}(f)|^2 \mathrm{d} f \ge (1-10 \epsilon) \|x\|_2^2$. Hence classical algorithms \cite{slepian1961prolate,landau1961prolate,landau1962prolate,AKMMVZ18} in Theorem~\ref{thm:multiband_interpolate} for reconstructing multiband signals provide an output $\tilde{x}$ with $\|x-\tilde{x}\|_{[-1,1]}^2 = O(\epsilon) \cdot \|x\|_{[-1,1]}^2$.
\end{proofof}

Now we prove Corollary~\ref{inform:interpolating_multiband}. Here is a restatement.
\begin{corollary}\label{cor:interpolating_multiband}
        Let $x$ be a multiband signal whose Fourier spectrum $\supp(\wh{x})=J_1 \cup J_2 \cup \cdots \cup J_n$ in $[-F,F]$ has $n$ unknown intervals of length at most $R$. For some parameters $\tau$ and $\kappa$, suppose $x$ satisfies
        \begin{enumerate}
            \item $\frac{\max_{s \in [-1,1]} |x(s)|^2}{\E_{s \sim [-1,1]} |x(s)|^2} \le \kappa.$ 
            \item $|x(t)| \le \kappa^{O(1)} \cdot |t|^{\tau} \cdot (\max_{s \in [-1,1]} 
            |x(s)|)$ for any $t \notin [-1,1]$.
        \end{enumerate}

                For any noisy observation $y=x+\eta$ in the time window $[-1,1]$ with $\|\eta\|_{[-1,1]}^2 \le \epsilon \cdot \|x\|_{[-1,1]}^2$, given those parameters $F$, $\tau$, $\kappa$, $R$, and $n$, for $R':=R+O(\tau \log \tau + \frac{\kappa}{\epsilon} \log \frac{\kappa}{\epsilon})$, there exists an efficient algorithm that takes $\tilde{O}(n \cdot R')$ samples and $(n \cdot R')^{O(1)}$ time to output $\tilde{x}$ such that $\|x-\tilde{x}\|_{[-1,1]}^2 = O(\epsilon) \cdot \|x\|_{[-1,1]}^2$.         
\end{corollary}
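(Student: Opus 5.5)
The plan is to reduce Corollary~\ref{cor:interpolating_multiband} to Theorem~\ref{thm:learn_multiband}. Its key hypothesis, Condition~\eqref{cond:multiband_recovery}, may fail for $x$ itself, since $x$ can carry a lot of energy outside $[-1,1]$. Following Chen and Price \cite{CP19_ICALP}, I will replace $x$ by a windowed signal $x' := x \cdot H$. Here $H$ is a filter that satisfies four requirements: (i) $|H(t)| \in [1-\epsilon, 1]$ on a large subinterval of $[-1,1]$; (ii) $H$ decays fast enough outside $[-1,1]$ to kill the polynomial growth $|t|^{\tau}$ of $x$; (iii) $\supp(\wh{H}) \subseteq [-\Delta, \Delta]$ with $\Delta = O(\tau \log \tau + \frac{\kappa}{\epsilon}\log\frac{\kappa}{\epsilon})$; and (iv) $\int_{-1}^{1} |x'|^2 \ge (1-O(\epsilon)) \int |x'|^2$. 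Then $\wh{x'} = \wh{x} * \wh{H}$ is supported on $\bigcup_i (J_i + [-\Delta,\Delta])$, which is $n$ intervals of length at most $R + 2\Delta = O(R')$. I will apply Theorem~\ref{thm:learn_multiband} to $x'$ with the observation $y' := y \cdot H$, using $R'$ in place of $R$.

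The construction of $H$ is the main obstacle. I would take $H$ from \cite{CP19_ICALP}: a product of a compactly-supported-in-frequency bump of width $\Theta(\tau\log\tau)$, which gives decay like $(\text{poly}\,|t|)^{-\Omega(\tau)}$ outside $[-1,1]$, with a sinc-power envelope of bandwidth $\Theta(\frac{\kappa}{\epsilon}\log\frac{\kappa}{\epsilon})$. The envelope does the following job. Condition (1) says $|x(s)|^2 \le \kappa \E|x|^2$, so removing (or damping) an $O(\epsilon/\kappa)$-fraction of $[-1,1]$ near the endpoints costs at most $O(\epsilon)\|x\|_{[-1,1]}^2$ of energy. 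A filter that is $\approx 1$ on $[-1+\epsilon/\kappa, 1-\epsilon/\kappa]$ and drops below $\epsilon$-level just outside $[-1,1]$ needs bandwidth $\tilde{\Theta}(\kappa/\epsilon)$. Together these give (i) and the window part of (iv).

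With $H$ in hand, I check (iv) by splitting $\int_{|t|>1}|x'|^2$ into dyadic shells $|t| \in [2^r, 2^{r+1}]$. On each shell, condition (2) gives $|x'(t)|^2 \le \kappa^{O(1)} 2^{2r\tau} |H(t)|^2 \max_{[-1,1]}|x|^2$, and $\max_{[-1,1]}|x|^2 \le \kappa \cdot \|x\|_{[-1,1]}^2/2$. The decay of $H$, namely $|H(t)| \le (\epsilon/\kappa)^{O(1)}\, 2^{-r\cdot\Omega(\tau)}$ for $|t|\ge 2^r$, makes the sum over shells $O(\epsilon)\|x\|_{[-1,1]}^2$. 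Inside $[-1,1]$ we have $\|x'\|_{[-1,1]}^2 \ge (1-O(\epsilon))\|x\|_{[-1,1]}^2$ by (i) and condition (1). The noise is $\eta' = \eta H$ with $|H|\le 1$, so $\|\eta'\|_{[-1,1]}^2 \le \epsilon\|x\|_{[-1,1]}^2 = O(\epsilon)\|x'\|_{[-1,1]}^2$.

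Theorem~\ref{thm:learn_multiband} applied to $x'$ then returns $\tilde{x}'$ with $\|x' - \tilde{x}'\|_{[-1,1]}^2 = O(\epsilon)\|x'\|_{[-1,1]}^2$. It uses $\tilde{O}(nR')$ samples of $y'$, each obtained from one sample of $y$ times the known value $H(t)$, and $(nR')^{O(1)}$ time. To return to $x$, I output $\tilde{x} := \tilde{x}'/H$ on the region where $|H| \ge 1/2$. On the remaining $O(\epsilon/\kappa)$-measure boundary region I output $0$, which costs $O(\epsilon)\|x\|_{[-1,1]}^2$ by condition (1). On the good region the error is at most $4\|x'-\tilde{x}'\|_{[-1,1]}^2$, so in total $\|x-\tilde{x}\|_{[-1,1]}^2 = O(\epsilon)\|x\|_{[-1,1]}^2$. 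A cleaner alternative, if needed, is to use only the recovered band locations from Theorem~\ref{thm:locations_multiband}, enlarged by $O(R')$, as the known support of $\wh{x}$. I would then run the reconstruction of Theorem~\ref{thm:multiband_interpolate} directly on $y$, arguing as in the proof of Theorem~\ref{thm:learn_multiband} that the recovered support captures all but $O(\epsilon)$ of the relevant energy.
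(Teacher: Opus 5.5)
Your main route is the paper's proof: take the filter $H$ from Theorem 4.2 of \cite{CP19_ICALP}, apply Theorem~\ref{thm:learn_multiband} to $H\cdot x$ with observation $H\cdot y$ (set to $0$ outside $[-1,1]$) and bands of length $O(R')$, then transfer the guarantee back to $x$. The differences are minor. The paper simply outputs the approximation of $Hx$ and uses $\|Hx-x\|_{[-1,1]}^2\le\epsilon\|x\|_{[-1,1]}^2$, which your plateau-plus-$\kappa$ argument already gives, so your division by $H$ is harmless but unnecessary. The paper also cites the tail-concentration property (iv) from \cite{CP19_ICALP} instead of re-deriving it, which is safer: on the shell $1\le|t|\le 2$ the factor $|t|^{2\tau}$ can reach $2^{2\tau}$ rather than the $1$ your shell bound uses, and $(\epsilon/\kappa)^{O(1)}$ does not absorb it, so that shell needs the finer decay of $H$ just beyond $|t|=1$.
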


\begin{proof}
    Given $\tau$ and $\kappa$, Theorem 4.2 in \cite{CP19_ICALP} provides a pair of filter function $(H,\wh{H})$ such that 
    \begin{enumerate}
        \item $\supp(\wh{H})=[-\Delta,\Delta]$ for $\Delta = C \frac{\kappa }{\epsilon} \log \frac{\kappa}{\epsilon} + C \tau \log \tau$ such that $\wh{H \cdot x}=\wh{H}*\wh{x}$ can be covered by $n$ intervals of length $R'$.

        \item $H(t) \le 1 + \epsilon$ for all $t \in [-1,1]$. In fact, $H(t) = 1 \pm \epsilon$ for any $t$ with $|t| \le 1 - \Omega(\frac{1}{\kappa})$.
        
        \item $\|H x - x\|^2_{[-1,1]} \le \epsilon \|x\|^2_{[-1,1]}$ and $\|H x\|_{[-1,1]}^2 \ge (1-\epsilon) \|x\|_{[-1,1]}^2$.
        
        \item $\|H  x\|^2_{[-1,1]} \ge (1-\epsilon) \cdot \|H x\|_2^2$.
    \end{enumerate}

    Recall that $y=x+\eta$ is the observation. We define $w(t)=H(t) \cdot y(t)$ for $t \in [-1,1]$ and 0 otherwise. Property 4 implies Condition~\ref{cond:multiband_recovery} for the multiband signal $H \cdot x$. 
    
    Then we bound the noise. So $w(t)=H(t) \cdot x(t) + H(t) \cdot \eta(t)$ in the time window. By the second property and the third prroperty of $(H,\wh{H})$, 
    \[
    \|H(t) \cdot \eta(t)\|_{[-1,1]}^2 \le (1+\epsilon)^2 \|\eta\|_{[-1,1]}^2 \le (1+\epsilon)^2 \epsilon \cdot \|x\|_{[-1,1]}^2 \le \frac{\epsilon (1+\epsilon)^2}{(1-\epsilon)} \cdot \|H x\|_{[-1,1]}^2.
    \]    
    Moreover, setting $w(t)=0$ for $t \notin [-1,1]$ only increase the noise by $\|H x\|_2^2 - \|H \cdot x\|^2_{[-1,1]} \le \frac{\epsilon}{1-\epsilon} \|H \cdot x\|^2_{[-1,1]}$ by the Fourth property. 

    So $w(t)=H(t) \cdot x(t) +\eta'(t)$ for a noise function $\|\eta'\|_2^2 \le 3 \epsilon \cdot \|H \cdot x\|^2_{[-1,1]}$. Theorem~\ref{thm:learn_multiband} provides an interpolation $\tilde{x}$ with $\|\tilde{x} - H x\|_{[-1,1]}^2=O(\epsilon) \|H \cdot x\|_{[-1,1]}^2$. By the third property of $H$, this implies $\|\tilde{x} - x\|_{[-1,1]}^2=O(\epsilon) \cdot \|x\|_{[-1,1]}^2$.
\end{proof}

\section{Orthogonality between Fourier-sparse Signals}\label{sec:orthogonal}
In this section, we prove Lemma~\ref{inform:almost_orthogonal_clusters} that two signals in $[-1,1]$ are almost orthogonal if their frequencies are well-separated. Moreover, we show a few important corollaries for the multiband approximation in Theorem~\ref{thm:learn_Fourier_sparse}. Here is a restatement of Lemma~\ref{inform:almost_orthogonal_clusters}.

\begin{lemma}
\label{lem:almost_orthogonal_clusters}
Let $0 < \delta \le 1/2$.
For two signals of Fourier sparsity $\ell$ and $r$ separately 
\begin{align*}
    w(t):=\sum_{j=1}^{\ell}\alpha_j e^{2\pi\bi f'_jt} \qquad \text{ and }
    \qquad
    z(t):=\sum_{j=1}^{r}\beta_j e^{2\pi\bi f_jt},
\end{align*}
if the separation between their frequencies $\min_{i,j} |f_i-f'_{j}| > C_M \frac{\ell r}{\delta} \cdot \log \frac{\ell r}{\delta}$ for a correlation parameter $\delta$ (and a universal constant $C_M$), then
\begin{equation}\label{eq:orthogonal_windows}
    |\langle w,z\rangle_{[-1,1]}|
    \le \delta \cdot \|w\|_{[-1,1]} \cdot \|z\|_{[-1,1]} .
\end{equation}
\end{lemma}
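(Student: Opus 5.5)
We plan to use a smooth window and reduce the inner product to a Fourier-side statement. Write $\langle w,z\rangle_{[-1,1]}=\int w(t)\overline{z(t)}\,\mathbb{1}_{[-1,1]}(t)\,\mathrm{d}t$. The sharp indicator has a Fourier tail decaying only like $1/|f|$, so we cannot use it directly. Instead we replace it by a filter $H$ in the spirit of Theorem 4.2 of \cite{CP19_ICALP} (as used in the proof of Corollary~\ref{cor:interpolating_multiband}). We want $H$ to satisfy: $H\approx 1$ on $[-1+\gamma,1-\gamma]$, $0\le H\le 1$, $\supp(H)\subseteq[-1,1]$, and $\wh{H}$ concentrated on $[-\Delta,\Delta]$ with a rapidly decaying tail. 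For example, take $H=\mathbb{1}_{[-1+\gamma/2,1-\gamma/2]}*\phi_\gamma$ for a bump $\phi_\gamma$ of width $\gamma/2$.

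The proof then splits into two errors.

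\textbf{Boundary error.} We bound $\int (1-H)\,w\bar z$ over $[-1,1]$, which lives on two strips of width $\gamma$ near $\pm 1$. By Cauchy--Schwarz it is at most $\|w\|_{S}\|z\|_{S}$ on the strips $S$. Here we invoke the sharp energy-concentration bound for Fourier-sparse signals: the improved extrapolation/Bernstein-type estimate of Zhang \cite{zhang2026optimalextrapolationboundssparse} (with \cite{erdelyi2016inequalities}). It gives $\sup_{|t|\le 1}|w(t)|^2\lesssim \ell\cdot\frac{1}{\sqrt{1-t^2}}$-weighted, i.e.\ $|w(t)|^2\le O\big(\ell/\sqrt{1-t^2}\big)\,\|w\|^2_{[-1,1]}$ in the interior, with an $O(\ell^2)$ endpoint bound. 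Integrating over a strip of width $\gamma$ yields $\|w\|_S^2\lesssim \ell\sqrt{\gamma}\,\|w\|^2_{[-1,1]}$, and likewise for $z$ with $r$. Hence the boundary error is $O(\sqrt{\ell r}\,\sqrt{\gamma})\,\|w\|\|z\|$. Choosing $\gamma\approx \delta^2/(\ell r)$ makes this at most $\delta/2$. The geometric mean $\sqrt{\ell r}$ is exactly what removes the $\min\{\ell,r\}^2/\delta^2$ term of \cite{CCSW}, whose argument effectively used $\ell^2\gamma$-type bounds.

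\textbf{Smoothed error.} By Theorem~\ref{thm:Plancherel_Parseval},
\[\int H w\bar z=\sum_{i,j}\alpha_i\bar\beta_j\,\wh{H}(f_j-f'_i).\]
Every argument satisfies $|f_j-f'_i|\ge D:=C_M\frac{\ell r}{\delta}\log\frac{\ell r}{\delta}$. Since $H$ has smoothness scale $\gamma$, $\wh H$ decays like $|\wh{H}(f)|\le (C/(\gamma |f|))^{m}$. With $D\gg \frac1\gamma\log$ and $m\approx \log(\ell r/\delta)$, we get $|\wh H(f)|\le (\delta/\ell r)^{O(1)}$ there. We then bound $\sum|\alpha_i||\beta_j|\le \sqrt{\ell r}\,\|\alpha\|_2\|\beta\|_2$.

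\textbf{Main obstacle.} We need $\|\alpha\|_2\lesssim \mathrm{poly}(\ell)\cdot\|w\|_{[-1,1]}$, but coefficients can blow up when frequencies are close. To avoid this, rather than expanding into individual exponentials we handle the sum as a bilinear form: we write $\int Hw\bar z$ as an integral against $\wh H$ of a shifted product and use that $wH^{1/2}$ has a controlled $L_2$ norm. Concretely, we will split $\wh H=\wh H\mathbb{1}_{|f|<D}+\text{tail}$. The first piece's inverse transform $K$ is a low-pass kernel. Then $\int Hw\bar z$ with $H$ replaced by $K$ vanishes in the limit, since $\wh{w\bar z}$ is supported on differences $\ge D$ in absolute value. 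The tail piece has an inverse transform that is uniformly small, of size $\|\text{tail}\|_{L_1}\le(\delta/\ell r)^{O(1)}$. Its contribution is then bounded by $\|\text{tail}\|_{L_1}\|w\|_{L_2(\supp)}\|z\|$, using the same Zhang growth bound to control $w$, $z$ on a slightly enlarged window. Making this last step rigorous, choosing $H$ compactly supported while keeping the Fourier tail and the extrapolation factor $\mathrm{poly}(\ell r)$ in balance, is where the $\log\frac{\ell r}{\delta}$ factor is spent and is the step I expect to be delicate.
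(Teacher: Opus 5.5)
Your route differs from the paper's, and it has a genuine gap at exactly the step you flag.

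\textbf{The smoothed term.} Because $\supp(H)\subseteq[-1,1]$, $\wh H$ cannot be compactly supported. So $\sum_{i,j}\alpha_i\bar\beta_j\wh H(f_j-f'_i)$ is only bounded by $\max_{|f|\ge D}|\wh H(f)|\cdot\sum_{i,j}|\alpha_i||\beta_j|$, and the coefficients are not controlled by $\|w\|_{[-1,1]}\|z\|_{[-1,1]}$. Your repair, splitting $\wh H$ into a low-pass part $K$ plus a tail $T$, only relocates the problem. The identity $\int K w\bar z=0$ is fine. But $T=H-K$ is no longer supported in $[-1,1]$: outside the window $T=-K$, and a nonzero function with compactly supported Fourier transform is never time-limited. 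With a sharp frequency cutoff, $K(t)$ decays only like $|\wh H(D)|/|t|$. So $\int_{\R}Tw\bar z$ involves $w,z$ on all of $\R$. There, the only bounds in terms of window norms are growth bounds like $|w(t)|\le(e(|t|+1))^{\ell}\max_{[-1,1]}|w|$, and against a $1/|t|$ tail these give nothing. ``A slightly enlarged window'' is not where $T$ lives.

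\textbf{The missing idea} is to reverse the roles, as the paper does. Take $M=M_{\ell,r,\delta}$ with \emph{compactly supported} Fourier transform, $\supp(\wh M)\subseteq[-\Delta_M,\Delta_M]$ with $2\Delta_M$ below the separation. Then $\wh{Mw}$ and $\wh{Mz}$ have disjoint supports and $\int_{\R}M^2w\bar z=0$ exactly, with no coefficients ever appearing. Hence $|\langle w,z\rangle_{[-1,1]}|\le|\int_{\R\setminus[-1,1]}M^2w\bar z|+|\int_{-1}^1(1-M^2)w\bar z|$. The price is an infinite time tail that must be engineered to beat the growth of $w\bar z$:
\begin{itemize}
\item near the window, $M(t)\le(\delta/r)^{C/2}\exp(-\frac{C}{8} r\sqrt{|t|-1})$ for $1\le|t|\le 1+1/C$, against Zhang's $\exp(3\sqrt2\, s\sqrt{|t|-1})$;
\item farther out, $M(t)\le(\delta/r)^{C/2}|\pi t/2|^{-Cr/4}$, against Erd\'elyi's $(e(|t|+1))^s$.
\end{itemize}
The paper gets this from $M=s_M\cdot(\rect_{2\alpha_M}*g_M)$, where $g_M$ is a product of even powers of $\sinc$ at the dyadic scales $C/(h^{-1}+4^i/r^2)$ with exponents $C2^i$. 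The Fourier support stays $O(\frac{\ell r}{\delta}\log\frac{r}{\delta})$. This is where Zhang's bound is actually used: just outside the window, not as an interior energy estimate.

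\textbf{The boundary estimate.} Your bound also has the wrong $\delta$-dependence for the stated separation. Using $\|w\|_S\|z\|_S\lesssim\sqrt{\ell r\gamma}$ forces $\gamma\approx\delta^2/(\ell r)$. Your tail estimate then needs $D\gg\gamma^{-1}\log\frac{\ell r}{\delta}=\frac{\ell r}{\delta^2}\log\frac{\ell r}{\delta}$. The lemma only provides $D=C_M\frac{\ell r}{\delta}\log\frac{\ell r}{\delta}$, a factor $\delta$ short, so for small $\delta$ there is no decay of $\wh H$ at the separation.

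The interior bound $|w(t)|^2\lesssim\frac{\ell}{\sqrt{1-t^2}}\|w\|_{[-1,1]}^2$ is also not Zhang's theorem, which is an extrapolation bound for $1\le|t|\le 2$. It is not needed either. The uniform bound $\sup_{[-1,1]}|w|\le\frac{\pi\ell}{2}\|w\|_{[-1,1]}$ already gives boundary error $\le|S|\sup|w|\sup|z|=O(\gamma\ell r)$, so $\gamma\approx\delta/(\ell r)$ suffices, linear in $1/\delta$. The paper does exactly this on the strips $[-1,1]\setminus[-1+\frac{2}{Ch},1-\frac{2}{Ch}]$ with $h=\frac{\ell}{\delta}\min\{r,\frac{\ell}{\delta}\}$. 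When $r>\ell/\delta$ it uses Cauchy--Schwarz with the sup bound on $w$ alone.
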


\begin{remark}
    In fact, the separation only needs to be $\Omega\left(\frac{\ell}{\delta} \cdot \min\{r,\frac{\ell}{\delta}\} \cdot \log \frac{\ell r}{\delta} + \frac{\ell r}{\delta}\right)$. However, we state $\Omega\left(\frac{\ell r}{\delta} \cdot \log \frac{\ell r}{\delta}\right)$ in Lemma~\ref{lem:almost_orthogonal_clusters} for ease of exposition.

    As mentioned earlier, a separation $\Omega({\ell r}/{\delta})$ is necessary from extreme polynomials of degree $\ell$ and $r$.
\end{remark}

Proof of Lemma~\ref{lem:almost_orthogonal_clusters} is deferred to Section~\ref{sec:almost_orthogonal_clusters}, which relies on a pair of filter functions $(M_{\ell,r,\delta},\wh{M_{\ell,r,\delta}})$. Its basic property is that $\wh{M_{\ell,r,\delta}}$ is compact with 
\begin{equation}\label{eq:def_C_H}
    \supp(\wh{M_{\ell,r,\delta}}) \subset [- C_M \frac{\ell r}{\delta}\log \frac{\ell r}{\delta}, C_M \frac{\ell r}{\delta}\log \frac{\ell r}{\delta}] \text{ for some fixed } C_M=O(1)
\end{equation} 
 and $M_{\ell,r,\delta}$ acts like a box function on the inner product $\langle M_{\ell,r,\delta} \cdot w, M_{\ell,r,\delta} \cdot z \rangle_2 \approx \langle w, z \rangle_{[-1,1]}$. Because Plancherel's identity implies $\langle M_{\ell,r,\delta} \cdot w, M_{\ell,r,\delta} \cdot z \rangle_2 = \langle \wh{M_{\ell,r,\delta} \cdot w}, \wh{M_{\ell,r,\delta} \cdot z} \rangle_2=0$, this implies $\langle w, z \rangle_{[-1,1]}$ is small.

Before we define this pair of filter functions formally in \eqref{eq:sqrt-localizer-def}, we discuss a few important corollaries. This pair would provide a filter function for Fourier-sparse signals.
\begin{lemma}\label{lem:filter_sparse}
    Given any $\ell$ and $\delta$, $M_{\ell,\delta}:=M_{\ell,\ell,\delta}$ satisfies that for any $\ell$-Fourier-sparse signal $x$ in the time window $[-1,1]$, 
    \begin{enumerate}
        \item on the time domain, $\|M_{\ell,\delta} \cdot x - x \|_{[-1,1]}^2 \le \frac{\delta}{2} \cdot \|x\|_{[-1,1]}^2$ and $\|M_{\ell,\delta} \cdot x\|_{[-1,1]}^2 \ge (1-\delta) \|M_{\ell,\delta} \cdot x\|_2^2$; 
        \item on the frequency domain, $\supp(\wh{M_{\ell,\delta} \cdot x}) \subseteq \bigcup_{f_j \in \supp(\wh{x})} [f_j - \Delta_{\ell,\delta},f_j + \Delta_{\ell,\delta}]$ for $\Delta_{\ell,\delta} \le C_M \cdot \frac{\ell^2}{\delta} \log \frac{\ell}{\delta}$.
    \end{enumerate}    
\end{lemma}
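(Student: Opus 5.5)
The plan is to read both properties of Lemma~\ref{lem:filter_sparse} off the yet-to-be-defined pair $(M_{\ell,r,\delta},\wh{M_{\ell,r,\delta}})$, specialized to $r=\ell$. I will use only the two features of $M$ that the proof of Lemma~\ref{lem:almost_orthogonal_clusters} needs anyway. First, \eqref{eq:def_C_H} gives compact Fourier support $[-\Delta,\Delta]$ with $\Delta = C_M\frac{\ell^2}{\delta}\log\frac{\ell}{\delta}$, after absorbing constants from $\log\frac{\ell^2}{\delta}\le 2\log\frac{\ell}{\delta}$. Second, $M$ behaves like the indicator of $[-1,1]$ on the class of $\ell$-Fourier-sparse signals.

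\emph{Frequency domain.} Write $x=\sum_{j}\alpha_j e^{2\pi\bi f_j t}$. Then
\[
\wh{M_{\ell,\delta}\cdot x}(f)=\sum_j \alpha_j \wh{M_{\ell,\delta}}(f-f_j).
\]
Its support is therefore contained in $\bigcup_j [f_j-\Delta,f_j+\Delta]$, which is property 2. This step is immediate from \eqref{eq:def_C_H}.

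\emph{Time domain.} I expect $M$ to be built as in the proof of Lemma~\ref{lem:almost_orthogonal_clusters}, namely as the square root of a localizer: a smoothed box, close to $1$ on most of $[-1,1]$ and decaying outside it. The natural construction convolves $\rect_2$ with a bump of width $\approx \delta/\ell^2$ and takes a square root, using a compactly supported bump with Fourier transform so that $\wh{M}$ stays compact.

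Under this construction, $1-M$ is nonzero only in boundary layers of width $O(\delta/\ell^2)$ near $\pm1$, and $M$ is bounded by $1$. Hence
\[
\|Mx-x\|_{[-1,1]}^2\le \int_{\text{layers}}|x|^2.
\]
To control this, I will use the standard bound that an $\ell$-Fourier-sparse signal satisfies $\max_{t\in[-1,1]}|x(t)|^2\le O(\ell^2)\cdot\E_{[-1,1]}|x|^2$ (see \cite{erdelyi2016inequalities,CP19_ICALP}). This bounds the layer contribution by $O(\delta/\ell^2)\cdot O(\ell^2)\|x\|_{[-1,1]}^2$, which is $\le\frac{\delta}{2}\|x\|_{[-1,1]}^2$ after adjusting constants.

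For the second half of property 1, I need $\int_{|t|>1}|Mx|^2\le \delta\,\|Mx\|_2^2$. Outside $[-1,1]$, $x$ can grow, but by Zhang's extrapolation bound \cite{zhang2026optimalextrapolationboundssparse} it grows at most like $\max_{[-1,1]}|x|\cdot (O(\ell)\,|t|)^{O(\ell)}$-type rates near the window. The tails of $M$ must decay fast enough, say $|M(t)|\le (\delta/\ell)^{\Omega(1)}$ beyond distance $\delta/\ell^2$ with superpolynomial decay thereafter, that the product's energy outside is at most $\frac{\delta}{4}\|x\|^2_{[-1,1]}$.

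Combining, $\|Mx\|_{[-1,1]}^2\ge(1-\frac{\delta}{2})\|x\|^2$ roughly, and so
\[
\|Mx\|_{[-1,1]}^2\ge(1-\delta)\|Mx\|_2^2.
\]

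\emph{Main obstacle.} The hard step is this outside-energy bound: balancing the compact frequency support $\Delta=\tilde O(\ell^2/\delta)$ against fast enough time-domain decay to beat the extrapolation growth of $x$. This is exactly where the $\log$ factor and the extrapolation bound enter, and I would defer the precise check to the definition \eqref{eq:sqrt-localizer-def}.
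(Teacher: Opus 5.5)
Your architecture matches the paper's proof. Property 2 comes from $\wh{M_{\ell,\delta}\cdot x}=\sum_j\alpha_j\wh{M_{\ell,\delta}}(\cdot-f_j)$ together with the support bound on $\wh{M_{\ell,\delta}}$. The interior estimate comes from a boundary layer of width $O(\delta/\ell^2)$ combined with Erd\'elyi's bound $\sup_{[-1,1]}|x|\le\frac{\pi\ell}{2}\|x\|_{[-1,1]}$. The exterior estimate pits the tails of $M$ against the growth of $x$.

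The gap is the exterior step. You flag it as the main obstacle but do not carry it out, and the inputs you state for it are wrong.
\begin{itemize}
\item Zhang's theorem is not a $(O(\ell)|t|)^{O(\ell)}$-type rate. It gives $|x(t)|\le 192\,\ell\,e^{3\sqrt{2}\,\ell\sqrt{|t|-1}}\|x\|_{[-1,1]}$ for $1\le|t|\le2$. The polynomial-type bound $(e(|t|+1))^{\ell}\max_{[-1,1]}|x|$ is Erd\'elyi's and is only useful once $|t|\ge1+\Omega(1)$.
\item The square-root profile is essentially forced. Chebyshev polynomials $T_{\ell-1}$ are limits of $\ell$-Fourier-sparse signals, have $\|T_{\ell-1}\|_{[-1,1]}=\Theta(1)$, and satisfy $|T_{\ell-1}(1+u)|\ge\frac12e^{(\ell-1)\sqrt{u}}$ for $0\le u\le2$.
\item Hence the tail you suggest is not enough: $(\delta/\ell)^{\Omega(1)}$ beyond distance $\delta/\ell^2$, followed by unspecified superpolynomial decay. $M$ must decay like $e^{-c\,\ell\sqrt{|t|-1}}$ near the window (with $c>3\sqrt2$ to beat Zhang's bound) and like $|t|^{-\Omega(\ell)}$ farther out. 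Meanwhile $\supp(\wh{M})$ must stay of size $\tilde O(\ell^2/\delta)$.
\item Conversely, suppose the only control on $x$ were a bound exponential in $\ell$ immediately outside $\pm1$, as your formula suggests. Then $M$ would have to fall from $\approx1$ to $e^{-\Omega(\ell)}$ within distance $O(\delta/\ell^2)$. A two-constants argument shows that needs Fourier support $\Omega(\ell^3/\delta)$. So the exact $\sqrt{|t|-1}$ rate is what reconciles property 1 with property 2.
\end{itemize}

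What closes the step in the paper are Properties~\ref{item:sqrt-localizer-near-tail} and~\ref{item:sqrt-localizer-far-tail} of Claim~\ref{claim:sqrt-localizer-bounds} with $r=\ell$:
\begin{itemize}
\item $M(t)\le(\delta/\ell)^{C/2}e^{-\frac{C}{8}\ell\sqrt{|t|-1}}$ for $1\le|t|\le1+1/C$;
\item $M(t)\le(\delta/\ell)^{C/2}|\pi t/2|^{-C\ell/4}$ for $|t|\ge1+1/C$.
\end{itemize}
These come from the multiscale factors $\sinc\bigl(Ct/(h^{-1}+4^i/\ell^2)\bigr)^{C2^i}$ in $g_M$. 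The factor with $2^i\approx\ell\sqrt{C(|t|-1)}$ contributes $\pi^{-C2^i}\le e^{-\frac{C}{8}\ell\sqrt{|t|-1}}$. Together these factors add only $O(\ell^2)$ to the Fourier support. Integrating them against Zhang's bound on $1\le|t|\le1+1/C$ and Erd\'elyi's bound beyond gives $\|Mx\|^2_{\R\setminus[-1,1]}\le\frac{\delta}{4}\|x\|^2_{[-1,1]}$, which is exactly what your final step needs.

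Two smaller corrections:
\begin{itemize}
\item Since $\wh{M}$ is compactly supported, $M$ is entire, so $1-M$ cannot vanish on the interior. Also, a square root of a band-limited bump is in general not band-limited. The paper builds $M=s_M(\rect_{2\alpha_M}*g_M)$ directly. Your boundary-layer argument only needs $|1-M(t)|\le(\delta/\ell)^{C/2}$ for $|t|\le1-2\delta/(C\ell^2)$ and $0\le M\le1+(\delta/\ell)^{C/2}$.
\item The bound $\|Mx-x\|^2_{[-1,1]}\le\frac{\delta}{2}\|x\|^2_{[-1,1]}$ alone yields only $\|Mx\|^2_{[-1,1]}\ge(1-\sqrt{\delta/2})^2\|x\|^2_{[-1,1]}$, not $(1-\frac{\delta}{2})\|x\|^2_{[-1,1]}$. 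Bound $\int_{-1}^{1}(1-M^2)|x|^2$ directly instead. Alternatively, note that the weaker bound still gives the ratio since $\delta\le1/2$.
\end{itemize}
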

The proof of Lemma~\ref{lem:filter_sparse} is deferred to Section~\ref{sec:proof_filters}. We remark that $(M_{\ell,\delta},\wh{M_{\ell,\delta}})$ plays the same role as $(H,\wh{H})$ in previous works \cite{CKPS17, CP19_ICALP, SSWZ23}. For ease of exposition, we use $M$ instead of $H$ in this work. 

Lemma~\ref{lem:filter_sparse} is extremely useful in learning Fourier-sparse signals because it extends the observation from $x$ in the time window $[-1,1]$ to $M_{k,\delta} \cdot x$ in $\mathbb{R}$ such that the Fourier transform $\wh{M_{k,\delta} \cdot x}$ provides a method to estimate the frequencies in $x$. However, the error of the frequency recovery is $\tilde{\Omega}(k^2)$ because $|\supp(\wh{M_{k,\delta}})|=\tilde{\Theta}(k^2)$. This is the bottleneck on the sample complexity of previous learning algorithms. 

In Section~\ref{sec:multiband_approx}, we show a different way to recover $\wh{M_{k,\delta} \cdot x}$ via multiband approximations. The multiband approximation of a Fourier-sparse signal relies on the following corollary of Lemma~\ref{lem:almost_orthogonal_clusters} and Lemma~\ref{lem:filter_sparse}, whose proof is deferred to Section~\ref{sec:proofs_cor_filters}.

\begin{corollary}\label{cor:filter_functions}  
\begin{enumerate}
    \item  Let $\ell$, $k$, $\epsilon$, and $\delta$ be any parameters satisfying $\ell \le k$ and $\Delta_{\ell,\delta} \le \Delta_{k,\epsilon}$. For any $\ell$-Fourier-sparse signal $w$, $M_{\ell,\delta} \cdot w$ is a good approximation of $M_{k,\epsilon} \cdot w$:
    \begin{equation}\label{eq:approx_filters}
         \| M_{\ell,\delta} \cdot w - M_{k,\epsilon} \cdot w \|_2^2 \le O(\delta) \cdot \| M_{k,\epsilon} \cdot w\|_2^2.    
    \end{equation}
    
    \item Given any $k$, let $M_{k,\epsilon}$ denote the filter function constructed in Lemma~\ref{lem:filter_sparse} whose Fourier support is $[-\Delta_{k,\epsilon},\Delta_{k,\epsilon}]$ and $C_H = 8 \cdot C_M$ for $C_M$ defined in  \eqref{eq:def_C_H}. For any $\ell$-Fourier-sparse signal $w$ and any $r$-Fourier-sparse signal $z$ with $\ell, r \le k$ whose frequencies are separated by more than $\min\Big\{C_H \cdot \frac{\ell r}{ \delta } \big(\log \frac{\ell r}{\delta} + \log \frac{k}{\epsilon} \big), 2\Delta_{k,\epsilon} \Big\}$, 
    \begin{equation}\label{eq:IP_filter_wz}
    \left| \langle M_{k,\epsilon} \cdot w, M_{k,\epsilon} \cdot z\rangle \right| \le \delta \cdot \|M_{k,\epsilon} \cdot w\|_2 \cdot \|M_{k,\epsilon} \cdot z\|_2. 
    \end{equation}
   
\end{enumerate}    
\end{corollary}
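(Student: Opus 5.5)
We prove the two parts separately. Both rest on Lemma~\ref{lem:filter_sparse}, Lemma~\ref{lem:almost_orthogonal_clusters}, and the fact that on $\mathbb{R}$ the inner product of two band-limited products can be read off in the Fourier domain via Theorem~\ref{thm:Plancherel_Parseval}.

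\textbf{Part 1.} Split $\|M_{\ell,\delta} w - M_{k,\epsilon} w\|_2^2$ into the contribution from $[-1,1]$ and the contribution from outside it.

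Inside $[-1,1]$, the first property of Lemma~\ref{lem:filter_sparse} applied to the filter $M_{\ell,\delta}$ (since $w$ is $\ell$-sparse) gives $\|M_{\ell,\delta}w - w\|_{[-1,1]}^2 \le \frac{\delta}{2}\|w\|_{[-1,1]}^2$. The same lemma applied to $M_{k,\epsilon}$ (since $w$ is also $k$-sparse) gives $\|M_{k,\epsilon} w - w\|_{[-1,1]}^2 \le \frac{\epsilon}{2}\|w\|_{[-1,1]}^2$. That second bound is in terms of $\epsilon$, not $\delta$. So I would instead use a direct pointwise property of the construction \eqref{eq:sqrt-localizer-def}: $M_{\ell,\delta}$ and $M_{k,\epsilon}$ are both close to $1$ on the bulk of $[-1,1]$. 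The condition $\Delta_{\ell,\delta}\le\Delta_{k,\epsilon}$ should mean that $M_{k,\epsilon}$ is the "sharper" box, so $|M_{\ell,\delta}-M_{k,\epsilon}|$ is dominated by $|M_{\ell,\delta}-\mathbf{1}_{[-1,1]}|$ up to constants. Outside $[-1,1]$, the second property of Lemma~\ref{lem:filter_sparse} bounds both tails by $O(\delta)$ times the in-window energy, again using the sharper filter for $M_{k,\epsilon}$.

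Combining the two regions, and using $\|M_{k,\epsilon}w\|_2^2 \ge (1-\epsilon)\|w\|_{[-1,1]}^2$ roughly, gives \eqref{eq:approx_filters}.

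\textbf{Part 2.} By Plancherel, $\langle M_{k,\epsilon}w, M_{k,\epsilon}z\rangle = \langle \wh{M_{k,\epsilon}w}, \wh{M_{k,\epsilon}z}\rangle$. The Fourier supports are the frequencies of $w$ (respectively $z$) thickened by $\Delta_{k,\epsilon}$.

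If the separation exceeds $2\Delta_{k,\epsilon}$, the supports are disjoint and the inner product is exactly $0$.

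Otherwise the separation exceeds $C_H\frac{\ell r}{\delta}(\log\frac{\ell r}{\delta}+\log\frac{k}{\epsilon})$. In this case choose $\delta' = \Theta(\delta\epsilon/k)$ or similar, so that $\Delta_{\ell,\delta'}$ and $\Delta_{r,\delta'}$ are each at most about half the separation, i.e.\ $C_M\frac{\ell^2}{\delta'}\log$ terms. This does not obviously fit, so I would use the $\ell r$ filter $M_{\ell,r,\delta}$ from Lemma~\ref{lem:almost_orthogonal_clusters} directly. Then replace $M_{k,\epsilon}w$ by $M_{\ell,\delta'}w$ and $M_{k,\epsilon}z$ by $M_{r,\delta'}z$ using Part 1, at cost $O(\sqrt{\delta'})$ times the norms.

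The replaced products have disjoint Fourier supports when the separation exceeds $\Delta_{\ell,\delta'}+\Delta_{r,\delta'}$, so their inner product vanishes. The triangle inequality plus Cauchy–Schwarz then gives $|\langle\cdot,\cdot\rangle| \le O(\sqrt{\delta'})\|M_{k,\epsilon}w\|\|M_{k,\epsilon}z\|$, and we pick $\delta'=\Theta(\delta^2)$. The case where $\Delta_{\ell,\delta'}>\Delta_{k,\epsilon}$ is where Part 1 fails. There we apply $M_{k,\epsilon}$ itself, whose thickening $\Delta_{k,\epsilon}$ is then below the separation.

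\textbf{Main obstacle.} The main obstacle is the bookkeeping. The separation budget in terms of $\ell r$ rather than $\ell^2+r^2$ must suffice. This requires running the argument of Lemma~\ref{lem:almost_orthogonal_clusters} with $M_{k,\epsilon}$ inserted: write $\langle M_{k,\epsilon}w, M_{k,\epsilon}z\rangle = \langle M_{k,\epsilon}^2 w, z\rangle$, note that $M_{k,\epsilon}^2 w$ is close to $w$ on $[-1,1]$ with small tails, and invoke Lemma~\ref{lem:almost_orthogonal_clusters} with parameter $\delta/2$ after absorbing the $\log\frac{k}{\epsilon}$ term. This last route is what I would try first.
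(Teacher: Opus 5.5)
Your skeleton matches the paper's in both parts. In each part, though, the step that turns $\epsilon$-level estimates into $\delta$-level ones is missing.

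\textbf{Part 1.} You rightly note that Lemma~\ref{lem:filter_sparse} applied to $M_{k,\epsilon}$ only gives $\epsilon$. Yet you use that lemma again for the tails, and its tail bound for $M_{k,\epsilon}$ is also $\epsilon$-level; its second property is the frequency-support statement, not a tail bound.

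Your substitute claim, that $M_{k,\epsilon}$ is the sharper box so $|M_{\ell,\delta}-M_{k,\epsilon}|\lesssim|M_{\ell,\delta}-\mathbf{1}_{[-1,1]}|$, is asserted rather than proved. It is also false pointwise: the transition layers compare only up to a factor $2$, so $M_{k,\epsilon}$ can be near $1/2$ where $M_{\ell,\delta}\approx 1$.

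The paper instead bounds each filter's deviation from $\mathbf{1}_{[-1,1]}$ separately and combines them with $(a-b)^2\le 2a^2+2b^2$. Rerunning the proof of Lemma~\ref{lem:filter_sparse} for $M_{k,\epsilon}$ on the $\ell$-sparse $w$ uses three facts: the transition layer has width $O(\epsilon/k^2)$, $\sup_{[-1,1]}|w|^2\le(\pi\ell/2)^2\|w\|_{[-1,1]}^2$, and the tails are controlled by Properties~3--4 of Claim~\ref{claim:sqrt-localizer-bounds}. This gives $O(\frac{\ell^2\epsilon}{k^2}+(\frac{\epsilon}{k})^{C/4})\|w\|_{[-1,1]}^2$.

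The hypothesis then enters only through the inequality $\frac{\ell^2\epsilon}{k^2}\le 2\delta$, which you never derive. Suppose $\frac{\ell^2\epsilon}{k^2}\ge\delta$. Then $\ell\le k$ forces $\frac{\ell}{\delta}\ge\frac{k}{\epsilon}$, so $\frac{C^2\ell^2}{2\delta}\log\frac{\ell}{\delta}\le\Delta_{\ell,\delta}\le\Delta_{k,\epsilon}\le\frac{C^2k^2}{\epsilon}\log\frac{k}{\epsilon}$ yields $\frac{\ell^2}{2\delta}\le\frac{k^2}{\epsilon}$.

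\textbf{Part 2.} Your disjoint-support case for separation $d>2\Delta_{k,\epsilon}$ is correct. Your last route---write the inner product as $\int M_{k,\epsilon}^2w\bar z$, compare with $\langle w,z\rangle_{[-1,1]}$, and apply Lemma~\ref{lem:almost_orthogonal_clusters}---is the paper's route. The gap is in bounding the comparison error by $O(\delta)$.

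Saying that $M_{k,\epsilon}^2w$ is close to $w$ on $[-1,1]$ via Lemma~\ref{lem:filter_sparse} gives an inside error of order $\sqrt{\epsilon}$. Even localizing to the boundary layer and using a one-sided Cauchy--Schwarz against $\|z\|_{[-1,1]}$ gives only $\min\{\ell,r\}\sqrt{\epsilon}/k$. This case permits $\delta$ of order $\epsilon\ell r/k^2$, and then both bounds exceed $\delta$ by a factor $\Omega(1/\sqrt{\epsilon})$. Two ingredients are missing.

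(i) The inside error must be bounded bilinearly. Off a layer of width $4\epsilon/(Ck^2)$, $|M_{k,\epsilon}^2-1|$ is negligible by Claim~\ref{claim:sqrt-localizer-bounds}. On the layer, Lemma~\ref{lemma:bounds_Fourier_sparse_signals} bounds $|w|$ and $|z|$ simultaneously, giving about $\frac{\pi^2\epsilon\ell r}{Ck^2}\|w\|_{[-1,1]}\|z\|_{[-1,1]}$. The outside term must likewise be handled directly by Properties~3--4 of Claim~\ref{claim:sqrt-localizer-bounds}.

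(ii) You then need $\frac{\epsilon\ell r}{k^2}=O(\delta/C_H)$. This is false in general (e.g.\ $\ell=r=k$, $\delta\ll\epsilon$). It holds here only because $C_H\frac{\ell r}{\delta}\log\frac{k}{\epsilon}\le d\le 2\Delta_{k,\epsilon}\le\frac{2C^2k^2}{\epsilon}\log\frac{k}{\epsilon}$. So the $\log\frac{k}{\epsilon}$ term is not slack to be absorbed. Together with the case assumption $d\le 2\Delta_{k,\epsilon}$, which your proposal never uses, it is exactly what makes the boundary error small.

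Your first route, swapping in $M_{\ell,\delta'}$ and $M_{r,\delta'}$ with $\delta'=\Theta(\delta^2)$, should be dropped. Disjointness there would need separation of order $(\ell^2+r^2)/\delta^2$, not $\ell r/\delta$. Its fallback, that $\Delta_{k,\epsilon}$ is then below the separation, does not follow from $\Delta_{\ell,\delta'}>\Delta_{k,\epsilon}$.
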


We remark that while the recovery algorithm for Fourier-sparse signals uses $M_{k,\epsilon}$, its analysis considers various $\delta$ in the approximation bound \eqref{eq:approx_filters} and the correlation bound \eqref{eq:IP_filter_wz}.

\paragraph{Constructions of $(M_{\ell,r,\delta},\wh{M_{\ell,r,\delta}})$.} Let $C \ge 64$ be a even integer in this section and
\begin{align}
    h & :=\frac{\ell}{\delta} \cdot \min \left\{r,\frac{\ell}{\delta}\right\}, \label{eq:sqrt-localizer-scale}\\
    \alpha_{M} & :=1-\frac{1}{Ch},\nonumber\\
    g_{M}(t) & :=
    \sinc(C \cdot ht)^{C \lceil \log \frac{r}{\delta} \rceil}
    \prod_{i=0}^{\lceil\log r\rceil}
    \sinc \left(
       \frac{Ct}{h^{-1}+4^i/r^2}
    \right)^{C2^i}.
    \label{eq:sqrt-localizer-kernel}
\end{align}
Then we define
\begin{equation}
    M_{\ell,r,\delta}(t) :=s_{M} \cdot \bigl(\rect_{2\alpha_{M}} * g_{M}\bigr)(t), \label{eq:sqrt-localizer-def}
\end{equation}
where $s_{M}>0$ is chosen so that $M_{\ell,r,\delta}(0)=1$. Its Fourier transform 
\[
\wh{M_{\ell,r,\delta}}(f):=s_M \cdot \sinc( 2\alpha_M \cdot f) \cdot (\rect_{C h}(f)^{* C \lceil \log \frac{r}{\delta} \rceil}) * (\rect_{\frac{C}{h^{-1}+4^i/r^2}}(f)^{*C 2^i})_{i=0,\ldots,\lceil\log_2 r\rceil} 
\]
has a compact support of size $O(h \cdot \log \frac{r}{\delta} + r \sqrt{h})$. Because $\sqrt{h} \le \frac{\ell}{\delta}$, $|\supp(\wh{M_{\ell,r,\delta}})|=O(\frac{\ell r}{\delta} \cdot \log \frac{r}{\delta})$.

\paragraph{Analyses of $(M_{\ell,r,\delta},\wh{M_{\ell,r,\delta}})$.} We state the basic properties of  $M_{\ell,r,\delta}$ to prove Lemma~\ref{lem:almost_orthogonal_clusters}, Lemma~\ref{lem:filter_sparse}, and Corollary~\ref{cor:filter_functions}.

\begin{claim}
\label{claim:sqrt-localizer-bounds}
For $M = M_{\ell,r,\delta}$ with $1 \le \ell \le r$, $0 < \delta \le 1 / 2$, the following properties hold:
\begin{enumerate}
\item \label{item:sqrt-localizer-plateau}  $|1-M(t)| \le (\delta/r)^{C/2}$ when $|t|\le 1 - 2/(Ch)$.
\item \label{item:sqrt-localizer-global}
$0 \le M(t) \le 1 + (\delta/r)^{C/2}$ for all $t \in \R$. 
\item \label{item:sqrt-localizer-near-tail} 
For $1 \le |t| \le 1 + 1/C$, $M(t) \le (\delta/r)^{C/2} \exp \bigl(-\frac{C}{8}r\sqrt{|t|-1}\bigr)$.
\item \label{item:sqrt-localizer-far-tail}
For $|t| \ge 1+1/C$, $M(t) \le (\delta/r)^{C/2} |\pi t/2|^{-Cr/4}$.
\item \label{item:sqrt-localizer-support}
$\supp(\widehat{M})\subseteq[-\Delta_{M},\Delta_{M}]$ where
\begin{align*}
    \Delta_{M} := \frac{C^2}{2} \left( h \left\lceil\log \frac r\delta\right\rceil + \sum_{i=0}^{\lceil \log r\rceil} \frac{2^i}{h^{-1}+4^i/r^2} \right)
    \le 2 C^2\frac{\ell r}{\delta} \log\frac{r}{\delta}.
\end{align*}
\end{enumerate}
\end{claim}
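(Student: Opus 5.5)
The plan is to analyze $g_M$ and the box convolution separately. Write $K:=\lceil\log r\rceil$ and normalize so that $\int g_M=1$; this is natural since $s_M$ makes $M(0)=1$. Then $M(t)=s_M\int_{t-\alpha_M}^{t+\alpha_M} g_M(u)\,\mathrm{d}u/(2\alpha_M)$ up to the constant in $\rect$. Every factor of $g_M$ is $\sinc(\cdot)^{\text{even}}\ge0$, because $C$ is even. Hence $M\ge 0$, and $M$ is an average of a nonnegative kernel over a window of half-width $\alpha_M=1-\frac1{Ch}$. All five items therefore reduce to two facts about the kernel $g_M$: it is concentrated very near $0$, and its tails decay in a specific way.

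\textbf{Support (item 5).} This is immediate. The Fourier transform of $\sinc(at)^{p}$ is $\rect_a^{*p}$, supported on $[-pa/2,pa/2]$. Convolving supports adds their half-widths, and multiplying by $\sinc(2\alpha_M f)$ does not enlarge the support. This gives exactly
\[
\Delta_M=\tfrac{C^2}{2}\Bigl(h\lceil\log\tfrac r\delta\rceil+\sum_i \tfrac{2^i}{h^{-1}+4^i/r^2}\Bigr).
\]
For the stated bound, split the sum at $4^i/r^2\approx h^{-1}$ and bound it by $O(r\sqrt h)$. Then use $\sqrt h\le \ell/\delta$ and $h\le \ell r/\delta$.

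\textbf{Kernel mass near the origin (items 1 and 2).} The first step is to show
\[
\int_{|u|>\frac1{Ch}} g_M(u)\,\mathrm{d}u\le (\delta/r)^{C/2}\int g_M .
\]
The first factor alone gives this. For $|u|\ge \frac1{Ch}$ we have $|Chu|\ge1$, so $\sinc(Chu)^{C\lceil\log(r/\delta)\rceil}\le (\pi Ch|u|)^{-C\log(r/\delta)}$. The other factors are at most $1$ in absolute value. We compare this tail with the central mass $\int_{|u|\le 1/(2Ch\cdots)} g_M\gtrsim 1/(Ch\cdot\text{poly})$, which is bounded below because every factor is at least constant on a window of width $\Theta(1/(Ch\log\cdot))$. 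The ratio is then $(\delta/r)^{\Omega(C)}$, and taking $C\ge 64$ gives room to reach the exponent $C/2$.

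With this in hand, item 1 follows. For $|t|\le 1-2/(Ch)$, the window $[t-\alpha_M,t+\alpha_M]$ contains $[-\frac1{Ch},\frac1{Ch}]$. So $M(t)$ equals $M(0)$ up to the tail mass, and both are $1\pm(\delta/r)^{C/2}$ after normalization. Item 2 holds because any window captures at most the full mass $\int g_M$, and $M(0)\ge (1-\text{tail})\int g_M$.

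\textbf{Tails of $M$ (items 3 and 4).} This is the main obstacle. For $|t|\ge1$, $M(t)$ is at most the kernel mass beyond distance $|t|-\alpha_M=|t|-1+\frac1{Ch}$. I therefore need a bound on $\int_{|u|\ge d}g_M$ for $d=\frac1{Ch}+s$, where $s=|t|-1$.

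For item 4, $s\ge 1/C$. Every one of the $K+1$ product factors then contributes $|\pi C u/(h^{-1}+4^i/r^2)|^{-C2^i}$. Because $h^{-1}+4^i/r^2\le 2$, each base is at least $\pi C u/2\ge\pi|t|/2$ roughly. The product of these factors has exponent $\sum_i C2^i\ge Cr$. Multiplying by the $(\delta/r)^{C/2}$ coming from the first factor, and integrating the power tail, gives $(\delta/r)^{C/2}|\pi t/2|^{-Cr/4}$ with slack in the exponent.

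For item 3, $s\in[0,1/C]$. I will choose the index $i$ with $4^i/r^2\approx s$, so that $2^i\approx r\sqrt s$. The factor $\sinc\bigl(Cu/(h^{-1}+4^i/r^2)\bigr)^{C2^i}$ has argument of order $C$ when $u\gtrsim h^{-1}+s$. This factor contributes $c^{C r\sqrt s}$, which is at most $\exp(-\frac C8 r\sqrt{s})$. Combining it with the first factor's $(\delta/r)^{C/2}$ gives item 3. The delicate part is making the constants $1/8$ and $1/4$ come out: the argument must be $\ge \pi$-scale rather than merely $\ge1$, and the regime $s<h^{-1}$ must also be handled. I would handle that regime by using the $i=0$ factor together with the fact that $\frac1{Ch}$ dominates the distance there.
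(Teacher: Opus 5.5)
Your outline matches the paper's proof. $M$ is a normalized window average of the nonnegative kernel $g_M$. The first sinc factor makes the mass beyond $\frac{1}{Ch}$ a $(\delta/r)^{\Omega(C)}$ fraction of the central mass, which gives items 1--2. One product factor with $2^i\approx r\sqrt{s}$ gives item 3, the product factors give item 4, and item 5 is the half-width count split at $2^i\approx r/\sqrt h$. However, one step you propose fails.

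\textbf{The gap in item 3.} You plan to treat the regime $s<h^{-1}$ separately, using the $i=0$ factor and the claim that $\frac{1}{Ch}$ dominates the distance. That can only work if $r\sqrt{s}=O(1)$ throughout the regime, i.e.\ if $h\gtrsim r^2$, which is the case $r\le \ell/\delta$. When $r>\ell/\delta$, $h=(\ell/\delta)^2$ can be far below $r^2$. For $\ell=1$, $\delta=1/2$ one has $h=4$, so every $s\in[0,1/C]$ satisfies $s<h^{-1}$, while $r\sqrt{s}$ ranges up to $r/\sqrt{C}$. The target $e^{-\frac{C}{8}r\sqrt{s}}$ is then exponentially small in $r$. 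The $i=0$ factor, with exponent $C$, gives at best decay $e^{-O(C)}$, independent of $r$. Also $\frac{1}{Ch}=\frac{1}{4C}$ does not dominate $s$.

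The fix is that no separate regime is needed. Choose $i$ with $2^i\le r\sqrt{s}<2^{i+1}$ (the paper uses $r\sqrt{Cs}$). Then for every $u\ge s+\frac{1}{Ch}$,
\[
\frac{Cu}{h^{-1}+4^i/r^2}\;\ge\;\frac{Cs+h^{-1}}{h^{-1}+s}\;\ge\;1,
\]
because $C\cdot\frac{1}{Ch}=h^{-1}$ exactly matches the $h^{-1}$ in the denominator. This is the point of choosing $\alpha_M=1-\frac{1}{Ch}$. An argument $\ge 1$ already suffices, so you do not need $\pi$-scale arguments. There $|\sinc|\le 1/\pi$, so this factor is at most $\pi^{-C2^i}\le\pi^{-Cr\sqrt{s}/2}\le e^{-\frac{C}{8}r\sqrt{s}}$. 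The only leftover case is $r\sqrt{s}<1$ (no valid $i\ge 0$). There the first-factor tail estimate, which actually yields $(\delta/r)^{C}$, gives $(\delta/r)^{C}\le(\delta/r)^{C/2}2^{-C/2}\le(\delta/r)^{C/2}e^{-C/8}$.

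\textbf{Items 1--2: the central-mass window.} You say every factor is bounded below by a constant on a window of width $\Theta(1/(Ch\log(r/\delta)))$. That window is too wide when $r\gg\ell/\delta$. The factors with $2^i\lesssim r/\sqrt{h}$ have frequency scale $\approx Ch$ and exponents summing to $\approx Cr/\sqrt{h}$. Near the edge of that window their product is $\exp(-\Omega(Cr/(\sqrt{h}\log^2(r/\delta))))$.

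The right scale is $1/\Delta_M$. Write the factors as $\sinc(a_ju)^{p_j}$, so that $\sum_j p_ja_j=2\Delta_M$. Then $\sinc(y)\ge 1-\pi^2y^2/6$ together with $\sum_j p_ja_j^2\le(\sum_j p_ja_j)^2$ gives $g_M(v)\ge e^{-16\Delta_M^2v^2}$ for $|v|\le 1/(8\Delta_M)$. Hence the central mass is $\gtrsim 1/\Delta_M$, which is still $1/(Ch\cdot\mathrm{poly}(r/\delta))$, so your conclusion survives. This is how the paper gets $s_M\le 12\Delta_M$.

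\textbf{Item 4: the top index.} The bound $h^{-1}+4^i/r^2\le 2$ fails for the top index $i=\lceil\log r\rceil$ when $r$ is not a power of two, where it can approach $4$. There the argument near $u=1/C$ is below $1.2/\pi$, so the sinc bound does not apply. Restrict to indices with $2^i\le r$: their exponents still sum to at least $Cr$ and their denominators are at most $3/2$. Alternatively, use one index with $r/4\le 2^i\le r/2$ as the paper does.
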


We use the following three bounds on Fourier-sparse signals from previous works by Zhang \cite{zhang2026optimalextrapolationboundssparse} and Erd\'elyi \cite{erdelyi2016inequalities}.
\begin{lemma}
\label{lemma:bounds_Fourier_sparse_signals}
Let $x(t)=\sum_{j=1}^{s}a_j e^{2\pi\bi f_j t}$, with arbitrary real
frequencies $f_j$. Then we have
\begin{enumerate}
    \item \label{item:uniform_bound_on_interval}
    For $|t|\le 1$, $|x(t)|\le \frac{\pi s}{2}\|x\|_{[-1,1]}$ 
    \cite[Theorem~2.3]{erdelyi2016inequalities};
    
    \item \label{item:polynomial_bound_outside_interval}
    For $|t|>1$, $|x(t)|\le \bigl(e(|t|+1)\bigr)^s \max_{u\in[-1,1]}|x(u)|$ 
    \cite[Lemma~12.2]{erdelyi2016inequalities}.

    \item \label{item:square_root_bound_outside_interval}
    For $1\le |t|\le 2$, $|x(t)|\le 192 \cdot s \exp \bigl(3 \sqrt{2} \cdot s\sqrt{|t|-1})\bigr) \|x\|_{[-1,1]}$ 
    \cite[Theorem~1.1]{zhang2026optimalextrapolationboundssparse};
\end{enumerate}
\end{lemma}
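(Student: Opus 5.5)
The plan is to treat all three items as imports from the cited works rather than reprove them. The only real work is checking that each cited statement, after the affine change of variables that puts it on the time window $[-1,1]$, gives exactly the displayed inequality. For each item I would (i) locate the cited theorem, (ii) record the interval and normalization it uses (sup norm versus $L_2$ norm, unnormalized versus averaged $L_2$), and (iii) rescale.

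For item~\ref{item:uniform_bound_on_interval}, Theorem~2.3 of \cite{erdelyi2016inequalities} is a Nikolskii-type inequality for exponential sums. The only check is whether its $L_2$ norm is the plain $\int_{-1}^1$ norm $\|x\|_{[-1,1]}$ used here. If it is the averaged norm instead, the constants differ by $\sqrt{2}$, and I would absorb that into the stated $\frac{\pi s}{2}$ or adjust accordingly.

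As a sanity check on the $O(s)$ dependence, I would compare with polynomials. A degree-$(s-1)$ polynomial is a limit of $s$-sparse exponential sums with frequencies tending to $0$. Its Legendre expansion gives
\[
|p(t)|^2 \le \sum_{d<s} \frac{2d+1}{2}\,\|p\|_{[-1,1]}^2 = \frac{s^2}{2}\|p\|_{[-1,1]}^2,
\]
which is linear in $s$ and consistent with the stated bound.

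For item~\ref{item:polynomial_bound_outside_interval}, I would use Lemma~12.2 of \cite{erdelyi2016inequalities}. It is a Chebyshev/Remez-type growth bound, coming from the fact that exponentials form a Descartes system, so extremal growth off $[-1,1]$ is dominated by the polynomial case. Here the work is only to confirm the form $(e(|t|+1))^s$, and that it is stated relative to $\max_{[-1,1]}|x|$ rather than an $L_2$ norm.

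For item~\ref{item:square_root_bound_outside_interval}, Theorem~1.1 of \cite{zhang2026optimalextrapolationboundssparse} is exactly the square-root extrapolation bound. Heuristically, it matches the Chebyshev behavior $T_s(1+u)\approx e^{s\sqrt{2u}}$. One could also combine it with item~\ref{item:uniform_bound_on_interval} to pass from a sup-norm to an $L_2$ statement, which explains the prefactor $192\, s$.

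The main obstacle is the bookkeeping of constants in item~\ref{item:square_root_bound_outside_interval}. If Zhang's theorem is phrased on a different base interval, or with $\max_{[-1,1]}|x|$, then I would chain it with item~\ref{item:uniform_bound_on_interval}. After rescaling $|t|-1$, the exponent constant may change, for instance to $3\sqrt{2}$. These constants only enter Claim~\ref{claim:sqrt-localizer-bounds} through the requirement $C\ge 64$, so any fixed loss is harmless.
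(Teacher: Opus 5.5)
Your plan matches the paper: the lemma gets no proof beyond the three citations, so locating \cite[Theorem~2.3, Lemma~12.2]{erdelyi2016inequalities} and \cite[Theorem~1.1]{zhang2026optimalextrapolationboundssparse} and checking interval and norm conventions is all that is required (an averaged-$L_2$ bound only improves the constant, so nothing needs absorbing). Two side remarks are off, though neither affects the argument: complex exponentials $e^{2\pi\bi f t}$ with real $f$ are not a Descartes (or even Chebyshev) system on $[-1,1]$, since $e^{2\pi\bi ft}-e^{-2\pi\bi ft}$ has about $4f$ zeros there, so your heuristic for item~\ref{item:polynomial_bound_outside_interval} does not apply and growth bounds of this shape instead come from Turán-type lemmas; and the constants are used in the proofs of Lemma~\ref{lem:filter_sparse}, Lemma~\ref{lem:almost_orthogonal_clusters} and Corollary~\ref{cor:filter_functions} (e.g.\ the need for $C/4>3\sqrt{2}$ in \eqref{eq:addendum-single-near-integral}), not in Claim~\ref{claim:sqrt-localizer-bounds}.
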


We state two basic properties of $\rect$ and $\sinc$ for the analysis of Claim~\ref{claim:sqrt-localizer-bounds}.
\begin{fact} \label{fact:sinc_bounds}
    We have two bounds on the $\sinc$ function:
    \begin{enumerate}
    \item For any $|x| \ge \frac{1.2}{\pi}$, $|\sinc(x)| \le \frac{1}{\pi |x|}$.
    \item For any $|x| \le \frac{1.2}{\pi}$, $\sinc(x) \in [1 - \frac{\pi^2 |x|^2}{6}, 1 - \frac{\pi^2 |x|^2}{10}]$.
    \end{enumerate}
\end{fact}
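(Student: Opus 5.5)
Both bounds follow from elementary facts about $\sin$. We write $\sinc(x)=\frac{\sin(\pi x)}{\pi x}$ with $\sinc(0)=1$, as in Fact~\ref{fact:FFT_DFT}, and substitute $u:=\pi|x|\ge 0$. Since $\sinc$ is even, it suffices to treat $x\ge 0$, and then $\sinc(x)=\frac{\sin u}{u}$.

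For the first item, we simply use $|\sin u|\le 1$. This gives $|\sinc(x)|=\frac{|\sin u|}{u}\le \frac{1}{\pi|x|}$ for every $x\neq 0$, and in particular for $|x|\ge \frac{1.2}{\pi}$. The threshold $1.2/\pi$ is only there to match the second item and plays no role in this bound.

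For the second item, $|x|\le \frac{1.2}{\pi}$ means $0\le u\le 1.2$. The case $u=0$ is trivial, since both bounds equal $1=\sinc(0)$. For $u>0$ we use the standard alternating Taylor bounds for $\sin$ on $u\ge 0$:
\[
u-\frac{u^3}{6}\;\le\;\sin u\;\le\; u-\frac{u^3}{6}+\frac{u^5}{120}.
\]
Both follow by integrating $\cos v\le 1$ (and then $\cos v\ge 1-v^2/2$, and so on) repeatedly from $0$ to $u$. Dividing by $u>0$ gives
\[
1-\frac{u^2}{6}\;\le\;\frac{\sin u}{u}\;\le\;1-\frac{u^2}{6}+\frac{u^4}{120}.
\]
The left inequality is exactly $\sinc(x)\ge 1-\frac{\pi^2|x|^2}{6}$. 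For the upper bound, it suffices to check that
\[
-\frac{u^2}{6}+\frac{u^4}{120}\le -\frac{u^2}{10}.
\]
This is equivalent to $\frac{u^4}{120}\le \frac{u^2}{15}$, that is, $u^2\le 8$. This holds because $u^2\le 1.44$. Hence $\sinc(x)\le 1-\frac{\pi^2|x|^2}{10}$.

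None of these steps is a real obstacle. The only point needing care is to justify the fifth-order upper bound on $\sin u$ through the integration argument, which is valid for all $u\ge 0$, rather than through an alternating-series remainder argument, which would require $u$ to be small.
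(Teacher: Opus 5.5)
Your proof is correct and complete. The paper states this Fact without any proof, and your verification is the standard elementary check it takes for granted: $|\sin u|\le 1$ gives the first item, and the Taylor sandwich $u-u^3/6\le \sin u\le u-u^3/6+u^5/120$ for $u\ge 0$, together with $u^2\le 1.44\le 8$, gives the second.
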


In the rest of this section, we prove Claim~\ref{claim:sqrt-localizer-bounds} in Section~\ref{sec:properties_filters}, Lemma~\ref{lem:filter_sparse} in Section~\ref{sec:proof_filters}, Lemma~\ref{lem:almost_orthogonal_clusters} in Section~\ref{sec:almost_orthogonal_clusters}, and Corollary~\ref{cor:filter_functions} in Section~\ref{sec:proofs_cor_filters}.

\subsection{Proof of Claim~\ref{claim:sqrt-localizer-bounds}}\label{sec:properties_filters}

We first verify the Fourier support. By the definition of $M$, 
\[
\wh{M}(f) = s_M \cdot \rect_{Ch}(f)^{*C\log(r/\delta)}
    * \rect_{\frac{C}{h^{-1}+r^{-2}}}(f)^{*C}
    * \cdots
    * \rect_{\frac{C}{h^{-1}+ 1}}(f)^{*C r}
    \cdot \sinc(2 \alpha_M f)
\]
Hence, the radius of the support is at most
\[
\Delta_{M} = \frac{C^2}{2}h\log\frac r\delta + \frac{C^2}{2} \cdot \sum_{i=0}^{\lceil\log  r\rceil} \frac{2^i}{h^{-1}+4^i/r^2}.
\]

It remains to estimate the summation in $\Delta_{M}$.  
For every $i$,
\begin{equation}
\frac{2^i}{h^{-1}+4^i/r^2}
\le \min \left\{2^ih,\frac{r^2}{2^i}\right\}.
\label{eq:sqrt-localizer-support-summand}
\end{equation}
If $r\le\ell/\delta$, then $h=\ell r / \delta \ge r^2$, and the second bound in \eqref{eq:sqrt-localizer-support-summand} implies
\[
\sum_{i=0}^{\lceil\log r\rceil} \frac{2^i}{h^{-1}+4^i/r^2}
\le r^2\sum_{i=0}^{\infty}2^{-i}
= 2r^2 \le\frac{2\ell r}{\delta}.
\]
If $r\ge\ell/\delta$, then $\sqrt h=\ell/\delta$.
Choose the integer $j \ge 0$ such that $2^j\le\frac{r}{\sqrt h}<2^{j+1}$.
Splitting the sum at $j$ and applying the two bounds in \eqref{eq:sqrt-localizer-support-summand} shows
\begin{align*}
\sum_{i=0}^{\lceil\log r\rceil} \frac{2^i}{h^{-1}+4^i/r^2}
\le h\sum_{i=0}^{j}2^i + r^2\sum_{i=j+1}^{\infty}2^{-i}
\le r\sqrt{h} \left( \frac{2^{j+1}\sqrt h}{r} +\frac{r}{2^j\sqrt h} \right)
\le 3 r \sqrt h,
\end{align*}
where the last inequality is by $2 x + x^{-1} \le 3$ for $x \in (1/2, 1]$.
In the both cases, we have $h \le \frac{\ell r}{\delta}$ and $\Delta_{M} \le 2 C^2\frac{\ell r}{\delta} \log\frac{r}{\delta}$,
proving Property~\ref{item:sqrt-localizer-support}. This also implies $C_M$ in \eqref{eq:def_C_H} is at most $2C^2$.

Next, we prove the properties of the filter $M$ in time domain.
Because every exponent is even, $g_{M}(t) \ge 0$.  

By the definition of $s_{M}$, $s_{M}^{-1} =  (2 \alpha_M)^{-1} \int_{-\alpha_{M}}^{\alpha_{M}}g_{M}(v) \mathrm{d}v$.
Fact~\ref{fact:sinc_bounds} and $1-u \ge \exp(-2u)$ for $0 \le u\le 1/2$ imply that, for $|v|\le 1/8\Delta_{M}$,
\begin{align*}
g_{M}(v)
\ge \exp \left[ - \frac{2 \pi^2}{6} \left(
      C\log\frac r\delta (Chv)^2
      +\sum_{i=0}^{\lceil\log r\rceil}
      C2^i\left(\frac{Cv}{h^{-1}+4^i/r^2}\right)^2
      \right) \right]
\ge \exp \big[ - 16 \Delta_{M}^2v^2 \big],
\end{align*}
where $C\log\frac r\delta (Ch)^2 + \sum_{i=0}^{\lceil\log r\rceil} C2^i\left(\frac{C}{h^{-1}+4^i/r^2}\right)^2 \le \left[C\log\frac r\delta Ch + \sum_{i=0}^{\lceil\log r\rceil} C2^i \frac{C}{h^{-1}+4^i/r^2} \right]^2 \le 4 \Delta_{M}^2$.

Since $1/8\Delta_{M}<\alpha_{M}$, it follows that
\begin{equation}
s_{M}^{-1} = \frac{1}{2 \alpha_M}\int_{-\alpha_{M}}^{\alpha_{M}}g_{M}(v) \mathrm{d}v
\ge \frac{1}{2 \alpha_M} \int_{-1/8\Delta_{M}}^{1/8\Delta_{M}} \exp(-16\Delta_{M}^2v^2) \mathrm{d}v
\ge e^{-1/4}/8\Delta_{M} 
\ge 1/12\Delta_{M}.
\label{eq:sqrt-localizer-normalization-lower}
\end{equation}

For $|v|\ge1/(Ch)$, Applying Property~1 of Fact~\ref{fact:sinc_bounds} to the first $\sinc$ factor shows
\begin{align}
\int_{|v|\ge1/(Ch)}g_{M}(v) \mathrm{d}v
\le 2 \int_{1/(Ch)}^\infty (\pi Chv)^{-C\log(r/\delta)} \mathrm{d}v
= \frac{2 \cdot \pi^{-C\log(r/\delta)}} {Ch\bigl(C\log(r/\delta)-1\bigr)}.
\label{eq:sqrt-localizer-first-tail}
\end{align}

Combining \eqref{eq:sqrt-localizer-normalization-lower} and
\eqref{eq:sqrt-localizer-first-tail}, we have
\begin{equation}
s_M \int_{|v|\ge1/(Ch)}g_{M}(v) \mathrm{d}v \le \frac{24 \Delta_{M}}{Ch\bigl(C\log(r/\delta)-1\bigr)} \pi^{-C\log(r/\delta)} \le 52 \frac{r}{\delta} \cdot \left(\frac{\delta}{r}\right)^{C \log \pi} \le \left(\frac{\delta}{r}\right)^{C}.
\label{eq:sqrt-localizer-relative-tail}
\end{equation}

If $|t| \le 1-2/(Ch)$, then $[-1/(Ch),1/(Ch)] \subseteq[t-\alpha_{M},t+\alpha_{M}]$. 
Hence, by \eqref{eq:sqrt-localizer-relative-tail},
\begin{align*}
|1-M(t)| \le s_M \int_{|v|\ge1/(Ch)}g_{M}(v) \mathrm{d}v \le \left(\frac{\delta}{r}\right)^{C},
\end{align*}
which proves Property~\ref{item:sqrt-localizer-plateau}. And Property~\ref{item:sqrt-localizer-global} follows by
\begin{align*}
0 \le M(t) \le s_M \int_{\R}g_{M}(v) \mathrm{d}v \le 1 + \left(\frac{\delta}{r}\right)^{C},
\end{align*}

By symmetry, we suppose $1 \le t \le 1+1/C$, and let $x = t-1$.
If $C r\sqrt{x} \le 1$, \eqref{eq:sqrt-localizer-relative-tail} already shows that
\begin{align*}
    M(t) \le \left(\frac\delta r\right)^C \le\left(\frac\delta r\right)^{C/2} \exp(-Cr\sqrt{t-1}/8).
\end{align*}

For $(Cr)^{-2} \le x \le 1/C$, there exists an $i \in \{0,\ldots,\lceil\log r\rceil\}$ such that $2^i\le r\sqrt{Cx} < 2^{i+1}$.
Therefore, ${4^i}/{r^2} \le Cx$ and then, for $v \ge x + 1/(Ch)$,
\begin{align*}
\frac{Cv}{h^{-1}+4^i/r^2} \ge 1.
\end{align*}
Thus the $i$-th factor in the second term of $g_{M}$ satisfies
\[
\left|\sinc \left(
\frac{Cv}{h^{-1}+4^i/r^2}\right)\right|^{C2^i}
\le\pi^{-C2^i}
\le \exp \left(-\frac{C}{8}r\sqrt{t-1}\right).
\]

With \eqref{eq:sqrt-localizer-first-tail}, we obtain
\begin{align*}
M(t) \le & s_M \cdot \int_{x+1/(Ch)}^\infty g_{M}(v) \mathrm{d}v \\
\le & s_M \cdot \exp \left(-\frac{C}{8}r\sqrt{t-1}\right) \cdot \int_{1/(Ch)}^\infty |\sinc(Chv)|^{C\log(r/\delta)} \mathrm{d}v\\
\le & (\delta/r)^{C/2} \exp \left(-\frac{C}{8}r\sqrt{t-1}\right).
\end{align*}
This proves Property~\ref{item:sqrt-localizer-near-tail}.

Finally, we prove Property~\ref{item:sqrt-localizer-far-tail}. 
Again suppose $t\ge1+1/C$ by symmetry.
If $r = 1$, then $C \log(r/\delta) \ge Cr$. For $v\ge t-1+1/(Ch)$, we have
\[
Chv = Ch(t-1)+1
\ge \begin{cases}
1+h, & 1+1/C \le t \le2,\\
Ch t/2, & t \ge 2.
\end{cases}
\]
Hence, $Chv=\Omega(t)$ and
\[
|\sinc(Chv)|^{(C/2)\log(r/\delta)} \le |\pi t/2|^{-Cr/4}
\]
by the second sinc bound in Fact~\ref{fact:sinc_bounds}. So in this case, the first sinc factor shows
\begin{align*}
M(t) \le & s_M \cdot \int_{x+1/(Ch)}^\infty \sinc(Chv)^{C\log(r/\delta)} \mathrm{d}v \\
\le & s_M \cdot |\pi t/2|^{-Cr/4} \cdot \int_{1/(Ch)}^\infty |\sinc(Chv)|^{C\log(r/\delta)/2} \mathrm{d}v\\
\le & (\delta/r)^{C/2} |\pi t/2|^{-Cr/4}.
\end{align*}

We next consider $r \ge 2$. 
Since $\delta < 1/2$, $h = \frac{\ell}{\delta} \cdot \min (\frac{\ell}{\delta}, r) \ge 4$. 
Then we choose $i$ so that $r/4 \le 2^i \le r/2$.
Then $C2^i \ge Cr/4$ and $h^{-1}+4^i/r^2 \le 1$.
For $v \ge t-\alpha_{M}=t-1+1/(Ch)$, the sinc bound implies
\[
\left|\sinc \left(\frac{Cv}{h^{-1}+4^i/r^2}\right)\right|^{C2^i}
\le \left(\pi C v\right)^{-C2^i}.
\]
If $1+1/C\le t\le2$, then $Cv \ge 1$ and $\pi^{-Cr/4} \le |\pi t/2|^{-Cr/4}$.
If $t \ge 2$, then $v \ge t-1 \ge t/2$, and $|\pi Ct|^{-Cr/4} \le |\pi t/2|^{-Cr/4}$.
Therefore,
\begin{align*}
M(t) \le & s_M \cdot \int_{x+1/(Ch)}^\infty \sinc(Chv)^{C\log(r/\delta)} \cdot \sinc \left(\frac{Cv}{h^{-1}+4^i/r^2}\right)\mathrm{d}v \\
\le & s_M \cdot |\pi t/2|^{-Cr/4} \cdot \int_{1/(Ch)}^\infty |\sinc(Chv)|^{C\log(r/\delta)/2} \mathrm{d}v\\
\le & (\delta/r)^{C/2} |\pi t/2|^{-Cr/4}.
\end{align*}
Combining the above two cases proves Property~\ref{item:sqrt-localizer-far-tail}.

\subsection{Proof of Lemma~\ref{lem:filter_sparse}}\label{sec:proof_filters}
We prove Lemma~\ref{lem:filter_sparse} in this section.
Let $M = M_{\ell,\delta}$. Since $0 < \delta \le 1/2$, the parameter $h$ in \eqref{eq:sqrt-localizer-scale} is $\ell^2/\delta$.

We first compare the energy on \([-1,1]\).  Let $I=\left[-1+\frac{2\delta}{C\ell^2}, 1-\frac{2\delta}{C\ell^2}\right]$.
By Property~\ref{item:uniform_bound_on_interval} of Lemma~\ref{lemma:bounds_Fourier_sparse_signals},
\begin{align*}
\int_{[-1,1]\setminus I}|x(t)|^2 dt
\le \frac{4\delta}{C\ell^2} \sup_{t\in[-1,1]}|x(t)|^2
= \frac{\pi^2\delta}{C}\|x\|_{[-1,1]}^2.
\end{align*}
Properties~1 and 2 of Claim~\ref{claim:sqrt-localizer-bounds} shows
\[
|M(t)-1|^2 \le
\begin{cases}
(\delta/\ell)^C, &t\in I,\\
0.1, &t\in[-1,1]\setminus I.
\end{cases}
\]
Consequently,
\begin{align}
\left| \|Mx - x\|_{[-1,1]}^2 \right| & \le \int_{-1}^1 (\delta/\ell)^{C} \cdot |x(t)|^2 \mathrm{d} t + \frac{\pi^2\delta}{C}\|x\|_{[-1,1]}^2 \\
& \le \left[ \left(\frac{\delta}{\ell}\right)^C + \frac{\pi^2 \delta}{C} \right] \cdot \|x\|_{[-1,1]}^2
\le \frac{\delta}{4} \cdot \|x\|_{[-1,1]}^2.
\label{eq:addendum-inside-energy}
\end{align}

Next, we bound the energy outside \([-1,1]\).  

For \(1\le |t|\le1+1/C\), Property~3 of Claim~\ref{claim:sqrt-localizer-bounds} and Property~\ref{item:square_root_bound_outside_interval} of Lemma~\ref{lemma:bounds_Fourier_sparse_signals} imply
\begin{align}
\int_{1\le|t|\le1+1/C}|M(t)x(t)|^2 dt 
& \le 2 (192 \ell)^2 \left(\frac{\delta}{\ell}\right)^{C}
    \int_0^{1/C} \exp \left[- \left( \frac{C}{4} - 3 \sqrt{2} \right) \ell\sqrt u\right] \mathrm{d} u \cdot \|x\|_{[-1,1]}^2 \notag \\    
& \le \left(\frac{\delta}{\ell}\right)^{C/2} \cdot \|x\|_{[-1,1]}^2.
\label{eq:addendum-single-near-integral}
\end{align}

For \(|t|\ge1+1/C\), Property~4 of Claim~\ref{claim:sqrt-localizer-bounds} and Property~\ref{item:polynomial_bound_outside_interval} of Lemma~\ref{lemma:bounds_Fourier_sparse_signals} show
\begin{align}
\int_{|t|\ge1+1/C}|M(t)x(t)|^2 dt
& \le \frac{\pi^2 \ell^2}{2} (2e)^{2\ell} \left( \frac{\pi}{2} \right)^{-C \ell /2} \left(\frac{\delta}{\ell}\right)^{C} \|x\|_{[-1,1]}^2
    \cdot \int_{1}^{\infty}t^{-\frac{C}{2} \ell+2\ell} dt
\le \left(\frac{\delta}{\ell}\right)^{C / 2} \|x\|_{[-1,1]}^2.
\label{eq:addendum-single-far-integral}
\end{align}
Hence, Equations \eqref{eq:addendum-single-near-integral} and \eqref{eq:addendum-single-far-integral} imply
\begin{equation}
\|Mx\|_{\mathbb R\setminus[-1,1]}^2 \le \frac{\delta}{4} \cdot \|x\|_{[-1,1]}^2.
\label{eq:addendum-single-tail}.
\end{equation}
Combining \eqref{eq:addendum-single-tail} and \eqref{eq:addendum-inside-energy}, we have
\begin{equation}
\left| \|Mx\|_2^2 - \|x\|_{[-1,1]}^2 \right|=\frac{\delta}{2} \cdot \|x\|_{[-1,1]}^2.
\label{eq:addendum-energy-strong}
\end{equation}

Therefore,
\[
\|Mx\|_{\mathbb R\setminus[-1,1]}^2
\le\frac{\delta/4}{1-\delta/2} \cdot \|Mx\|_2^2
\le\delta\|Mx\|_2^2,
\]
and
\[
\|Mx\|_{[-1,1]}^2
\ge \frac{1 - \delta/4}{1-\delta/2} \cdot \|Mx\|_2^2
\ge (1-\delta)\|Mx\|_2^2.
\]

\subsection{Proof of Lemma~\ref{lem:almost_orthogonal_clusters}}\label{sec:almost_orthogonal_clusters}
By symmetry, we suppose $\ell \le r$.
Let $M=M_{\ell,r,\delta}$ and $C_M := 4 C^2$.  

For $w(t)=\sum_{j=1}^{\ell}\alpha_j e^{2\pi\bi f'_jt}, z(t)=\sum_{j=1}^{r}\beta_j e^{2\pi\bi f_j t}$, we have
\begin{align}
\wh{Mw}(f)
& = \int_{\mathbb R}M(t) \sum_{j=1}^{\ell}\alpha_j e^{2\pi\bi f'_jt}e^{-2\pi\bi f t} dt \notag \\
& = \sum_{j=1}^{\ell}\alpha_j \int_{\mathbb R}M(t)e^{-2\pi\bi(f-f'_j)t} dt
= \sum_{j=1}^{\ell}\alpha_j\wh M(f-f'_j), \label{eq:addendum-direct-modulation-w}\\
\wh{Mz}(f) & = \sum_{j=1}^{r}\beta_j\wh M(f-f_j).
\label{eq:addendum-direct-modulation-z}
\end{align}
So the separation assumption and Property~\ref{item:sqrt-localizer-support} of
Claim~\ref{claim:sqrt-localizer-bounds} imply that \eqref{eq:addendum-direct-modulation-w} and \eqref{eq:addendum-direct-modulation-z} are disjoint, which also means
\[
\int_{\mathbb R}M(t)w(t)\overline{M(t) z(t)} \mathrm{d}t = \int_{\mathbb R}\wh{Mw}(f) \overline{\wh{Mz}(f)} \mathrm{d}f =0.
\]
Consequently,
\begin{align}
|\langle w,z\rangle_{[-1,1]}|
&\le \left|\int_{[-1,1]} M(t)^2 w(t)\overline{z(t)} \mathrm{d}t\right|
+ \left|\int_{-1}^{1} (1-M(t)^2)w(t)\overline{z(t)} \mathrm{d}t\right| \\
&\le \left|\int_{\mathbb R\setminus[-1,1]} M(t)^2 w(t)\overline{z(t)} \mathrm{d}t\right|
+ \left|\int_{-1}^{1} (1-M(t)^2)w(t)\overline{z(t)} \mathrm{d}t\right|.
\label{eq:strengthened-localizer-decomposition}
\end{align}

By applying Properties~\ref{item:square_root_bound_outside_interval} and
\ref{item:polynomial_bound_outside_interval} of
Lemma~\ref{lemma:bounds_Fourier_sparse_signals} to $w$ and
$z$, we have
\[
|w(t)z(t)|
\le r^{O(1)} \|w\|_{[-1,1]}\|z\|_{[-1,1]} \cdot
\begin{cases}
    \exp(O(r\sqrt{|t|-1})),&1\le|t|\le1+1/C,\\
    \bigl(e(|t|+1)\bigr)^{O(r)},&|t|\ge1+1/C.
\end{cases}
\]
Plus Properties~\ref{item:sqrt-localizer-near-tail} and~\ref{item:sqrt-localizer-far-tail} of Claim~\ref{claim:sqrt-localizer-bounds}, we can bound the first term in \eqref{eq:strengthened-localizer-decomposition} with $(\delta/r)^{C/4} \cdot \|w\|_{[-1,1]}\|z\|_{[-1,1]}$.

It remains to bound the second term in \eqref{eq:strengthened-localizer-decomposition}. 

Let $I=\left[-1+\frac{2}{Ch},1-\frac{2}{Ch}\right]$, where $h =\frac{\ell}{\delta} \cdot \min \left\{r, \frac{\ell}{\delta}\right\}$.
On $I$, Property~\ref{item:sqrt-localizer-plateau} of
Claim~\ref{claim:sqrt-localizer-bounds} and Cauchy--Schwarz inequality implies the contribution at most $(\delta/r)^{C/2} \cdot \|w\|_{[-1,1]}\|z\|_{[-1,1]}$. 

The total length of $[-1, 1] \setminus I$ is $4/(Ch)$ and $1-M(t)^2 \le 1$ in \eqref{eq:strengthened-localizer-decomposition} in this interval. We discuss the second integral in \eqref{eq:strengthened-localizer-decomposition} by two cases:
\begin{enumerate}
\item Case 1: $r\le\ell/\delta$. Then $h = \ell r / \delta$. Moreover, applying Property~\ref{item:uniform_bound_on_interval} of Lemma~\ref{lemma:bounds_Fourier_sparse_signals} to $w$ and $z$ implies
\begin{align*}
\left|\int_{[-1,1]\setminus I} (1-M(t)^2)w(t)\overline{z(t)} \mathrm{d}t\right|
\le & \Big| [-1,1]\setminus I \Big| \cdot \sup_{t \in [-1, 1]} |w(t)| \cdot \sup_{t \in [-1, 1]} |z(t)| \\
\le & \frac{\pi^2 \ell r}{Ch} \cdot \|w\|_{[-1,1]}\|z\|_{[-1,1]}
\le \frac{\pi^2 \delta}{C} \cdot \|w\|_{[-1,1]}\|z\|_{[-1,1]}.
\end{align*}
\item Case 2: $r>\ell/\delta$. We use Cauchy--Schwarz and apply Claim~\ref{claim:sqrt-localizer-bounds} to only $w$, which yields
\begin{align*}
\left|\int_{[-1,1]\setminus I} (1-M(t)^2)w(t)\overline{z(t)} \mathrm{d}t\right|
\le & \| z \|_{[-1,1]\setminus I} \cdot \sqrt{\int_{[-1,1]\setminus I} (1-M(t)^2)^2 w(t)^2 \mathrm{d}t} \cdot  \\
\le & \| z \|_{[-1,1]} \cdot \sup_{t \in [-1, 1]} |w(t)| \cdot \sqrt{\Big| [-1,1]\setminus I \Big|} \\
\le & \frac{\pi^2 \ell}{\sqrt{Ch}} \cdot \|w\|_{[-1,1]}\|z\|_{[-1,1]}
= \frac{\pi \delta}{\sqrt{C}} \cdot \|w\|_{[-1,1]}\|z\|_{[-1,1]}.
\end{align*}
\end{enumerate}

Combining these bounds in \eqref{eq:strengthened-localizer-decomposition} proves the lemma.

\subsection{Proof of Corollary~\ref{cor:filter_functions}}\label{sec:proofs_cor_filters}

In this section, let $C$ be the constant defined in \eqref{eq:sqrt-localizer-kernel}. By the definition of $M_{\ell, \delta} = M_{\ell, \ell, \delta}$, 
\[
\Delta_{\ell,\delta}
= \frac{C^2}{2} \left( 
    \frac{\ell^2}{\delta} \left\lceil\log \frac{\ell}{\delta}\right\rceil
    + \sum_{i=0}^{\lceil\log \ell\rceil} \frac{2^i}{\delta/\ell^2+4^i/\ell^2}
\right).
\]
Hence, we have
\begin{equation} \label{eq:direct-radius-small}
\frac{C^2\ell^2}{2\delta} \log \frac\ell\delta
\le \Delta_{\ell,\delta}
\le\frac{C^2\ell^2}{\delta} \log \frac\ell\delta.             
\end{equation}
Similarly,
\begin{equation} \label{eq:direct-radius-large}
\frac{C^2k^2}{2\epsilon} \log \frac k\epsilon
\le \Delta_{k,\epsilon}
\le\frac{C^2k^2}{\epsilon} \log \frac k\epsilon.             
\end{equation}

Repeating the above calculations show
\begin{align}
\|(M_{\ell,\delta}-1)w\|_{[-1,1]}^2 +\|M_{\ell,\delta}w\|_{\mathbb R\setminus[-1,1]}^2
& \le O(\delta)\|w\|_{[-1,1]}^2, \label{eq:direct-localization-small}\\
\|(M_{k,\epsilon}-1)w\|_{[-1,1]}^2 + \|M_{k,\epsilon}w\|_{\mathbb R\setminus[-1,1]}^2    
& \le O \left(\frac{\ell^2\epsilon}{k^2} +\left(\frac\epsilon k\right)^{C/4}\right) \|w\|_{[-1,1]}^2. \label{eq:direct-localization-large}
\end{align}

\begin{proof}[Proof of the first property]
We first prove that, under the condition of Corollary~\ref{cor:filter_functions}, one has
\begin{equation}
\frac{\ell^2\epsilon}{k^2}\le 2\delta. \label{eq:filter-comparison-parameter}
\end{equation}
Suppose ${\ell^2\epsilon}/{k^2} \ge \delta$, otherwise \eqref{eq:filter-comparison-parameter} already holds. With $\ell \le k$, the assumption shows $\ell / \delta \ge k / \epsilon$. In this case, the condition $\Delta_{\ell,\delta} \le \Delta_{k,\epsilon}$, with \eqref{eq:direct-radius-small} and $\eqref{eq:direct-radius-large}$, implies \eqref{eq:filter-comparison-parameter}.

Therefore,
\begin{align*}
    & \| M_{\ell,\delta} \cdot w - M_{k,\epsilon} \cdot w \|_2^2 \\
    = & \| M_{\ell,\delta} \cdot w - M_{k,\epsilon} \cdot w \|_{[-1, 1]}^2 + \| M_{\ell,\delta} \cdot w - M_{k,\epsilon} \cdot w \|_{\R \setminus [-1, 1]}^2 \\
    \le & 2 \|(M_{\ell,\delta}-1)w\|_{[-1,1]}^2 + 2 \|M_{\ell,\delta}w\|_{\mathbb R\setminus[-1,1]}^2
        + 2 \|(M_{k,\epsilon}-1)w\|_{[-1,1]}^2 + 2 \|M_{k,\epsilon}w\|_{\mathbb R\setminus[-1,1]}^2 
        \tag{by $(a-b)^2 \le 2 a^2 + 2 b^2$} \\
    \le & O(\delta) \|w\|_{[-1,1]}^2 
        + O \left(\frac{\ell^2\epsilon}{k^2} +\left(\frac\epsilon k\right)^{C/4}\right) \|w\|_{[-1,1]}^2 
        \tag{by \eqref{eq:direct-localization-small} and \eqref{eq:direct-localization-large}} \\
    \le & O(\delta) \|w\|_{[-1,1]}^2 
    \tag{by \eqref{eq:filter-comparison-parameter}} \\
    \le & O(\delta) \| M_{k,\epsilon} w\|_{[-1,1]}^2.
    \tag{by Lemma~\ref{lem:filter_sparse}}
\end{align*}

\end{proof}

\begin{proof}[Proof of the second property]
Let $C_H:=8 \cdot C_M$ for the constant $C_M$ defined in Lemma~\ref{lem:almost_orthogonal_clusters} and $d$ denote $\min_{j,j'}|f_j-f'_{j'}|$.
If $d > 2 \Delta_{k,\epsilon}$, the Fourier supports of $M_{k,\epsilon}w$ and $M_{k,\epsilon}z$ are disjoint, so Parseval's identity shows
\begin{align*}
    \langle M_{k,\epsilon}w,M_{k,\epsilon}z\rangle =\langle\wh{M_{k,\epsilon}w}, \wh{M_{k,\epsilon}z}\rangle=0
\end{align*}

Thus we suppose $2 \Delta_{k,\epsilon} \ge d \ge C_H \cdot \frac{\ell r}{ \delta } \big(\log \frac{\ell r}{\delta} + \log \frac{k}{\epsilon} \big)$ in the remaining proof. In this case, \eqref{eq:direct-radius-large} implies
\begin{equation}
\frac{\epsilon\ell r}{k^2} \le\frac{2C^2}{C_H}\delta.
\label{eq:filtered-boundary-parameter}
\end{equation}

We now compare the filtered inner product with the unfiltered one:
\begin{align}
\label{eq:filtered-comparison-decomposition}
\langle M_{k,\epsilon}w,M_{k,\epsilon}z\rangle -\langle w,z\rangle_{[-1,1]}
= \int_{-1}^{1}(M_{k,\epsilon}(t)^2-1) w(t)\overline{z(t)} dt
    + \int_{\mathbb R\setminus[-1,1]} M_{k,\epsilon}(t)^2w(t)\overline{z(t)} dt.
\end{align}

Applying Lemma~\ref{lem:almost_orthogonal_clusters} to $w$ and $z$ yields (with $C_H=8C_M$)
\begin{equation}
|\langle w,z\rangle_{[-1,1]}| \le\frac{\delta}{4} \|w\|_{[-1,1]}\|z\|_{[-1,1]}.
\label{eq:filtered-use-unfiltered}
\end{equation}

On the interval $I := [-1+2\epsilon/(Ck^2),1-2\epsilon/(Ck^2)]$,
Claim~\ref{claim:sqrt-localizer-bounds} imply
\[
|M_{k,\epsilon}^2-1| =|M_{k,\epsilon}-1||M_{k,\epsilon}+1| \le 3 \left(\frac{\epsilon}{k}\right)^C.
\]
Cauchy--Schwarz therefore shows
\begin{equation}
\int_{I}(M_{k,\epsilon}(t)^2-1) w(t)\overline{z(t)} dt
\le  3\left(\frac{\epsilon}{k}\right)^C \|w\|_{[-1,1]}\|z\|_{[-1,1]}.
\label{eq:filtered-core-integral}
\end{equation}

Meanwhile, Lemma~\ref{lemma:bounds_Fourier_sparse_signals} and Claim~\ref{claim:sqrt-localizer-bounds} yield
\begin{align}
    & \int_{[-1, 1] \setminus I}(M_{k,\epsilon}(t)^2-1) w(t)\overline{z(t)} dt \notag \\
    \le & \Big| [-1, 1] \setminus I \Big| \cdot \left( 1 + (\delta/r)^{C/2} \right) \cdot \frac{\pi\ell}{2} \|w\|_{[-1,1]} \cdot \frac{\pi r}{2} \|z\|_{[-1,1]} \notag \\
    \le & \frac{\pi^2\epsilon\ell r}{Ck^2} \|w\|_{[-1,1]}\|z\|_{[-1,1]}.
    \label{eq:filtered-boundary-integral}
\end{align}

For $\int_{\mathbb R\setminus[-1,1]} M_{k,\epsilon}(t)^2w(t)\overline{z(t)} dt$, similar to \eqref{eq:addendum-single-near-integral} and \eqref{eq:addendum-single-far-integral} in the proof of Lemma~\ref{lem:filter_sparse}, Lemma~\ref{lemma:bounds_Fourier_sparse_signals} and Claim~\ref{claim:sqrt-localizer-bounds} show it is bounded:
\begin{equation}
    \frac{\pi^2 \ell r}{C^2 k} \cdot \left(\frac{\epsilon}{k}\right)^C  \|w\|_{[-1,1]}\|z\|_{[-1,1]} 
    + \pi^2 \ell r \left(\frac{\epsilon}{k}\right)^{2C} (2e)^{\ell+r}\left(\frac{\pi}{2}\right)^{-Ck/2} \cdot \|w\|_{[-1,1]}\|z\|_{[-1,1]}
    \label{eq:filtered-far-integral}
\end{equation}

Combining \eqref{eq:filtered-use-unfiltered}, \eqref{eq:filtered-core-integral}, \eqref{eq:filtered-boundary-integral} and \eqref{eq:filtered-far-integral}, for sufficiently large $C$, we have
\[
\left| \langle M_{k,\epsilon} \cdot w, M_{k,\epsilon} \cdot z\rangle \right| \le \frac{\delta}{2} \|w\|_{[-1,1]}\|z\|_{[-1,1]} \le \delta \cdot \|M_{k,\epsilon} \cdot w\|_2 \cdot \|M_{k,\epsilon} \cdot z\|_2,
\]
where the last step uses Lemma~\ref{lem:filter_sparse}.

\end{proof}

\section{Multiband Approximations of Fourier-sparse Signals}\label{sec:multiband_approx}

\begin{theorem}\label{thm:multiband_approx}
    Given any $k$-Fourier-sparse signal $x(t)=\sum_{j=1}^k \alpha_j e^{2 \pi i f_j t}$ and $\epsilon$, there exists a multiband signal $x'$ such that
    \begin{enumerate}
        \item $\|x'\|_{[-1,1]}^2 \ge (1-\epsilon) \|x'\|_{2}^2$.
        \item $\|x-x'\|_{[-1,1]}^2 \le \epsilon \cdot \|x\|_{[-1,1]}^2$.
        \item $\supp(\wh{x'})=I_1 \cup \cdots \cup I_n$ for $n \le k$ and $|I_1|+\cdots+|I_n|=\tilde{O}(k^2/\epsilon)$.
    \end{enumerate}
\end{theorem}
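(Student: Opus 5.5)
The plan is to cluster the frequencies and then apply, to each cluster, a filter whose bandwidth depends only on the size of that cluster.

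\textbf{Clustering.} Sort $f_1,\ldots,f_k$ and group them greedily. Two consecutive frequencies go into the same cluster whenever their gap is at most a threshold. Let the clusters be $\cC_1,\ldots,\cC_n$ with sizes $k_1,\ldots,k_n$, where $\sum_i k_i = k$. Write $x_i(t):=\sum_{j\in\cC_i}\alpha_j e^{2\pi\bi f_jt}$ for the sub-signal of cluster $i$.

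The threshold must adapt to cluster sizes. Between two clusters of sizes $\ell$ and $r$ we require a gap larger than $C_H\frac{\ell r}{\delta}\log\frac{k}{\epsilon}$, with $\delta=\epsilon/k$ or similar. Inside a cluster, each gap should be $\tilde O(k\cdot k_i/\epsilon)$, so the cluster's diameter is $\tilde O(k\,k_i^2/\epsilon)$. That is too big. Instead, I would use a hierarchical merging. Start with singletons. Repeatedly merge any two clusters $\cC,\cC'$ whose distance is below $\frac{|\cC||\cC'|}{\delta}\log$. When merging, the diameter grows by at most this amount. Summing over the merge tree, the diameter of a final cluster of size $s$ satisfies
\[
D(s)\le \sum_{\text{merges}} \frac{ab}{\delta}\log \le \frac{s^2}{\delta}\log ,
\]
because $\sum ab$ over a merge tree equals $\binom{s}{2}$.

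\textbf{Defining $x'$.} Set $x':=\sum_i M_{k,\epsilon}\cdot x_i$. Each $M_{k,\epsilon}x_i$ has Fourier support inside the $\Delta_{k,\epsilon}$-neighbourhood of its cluster (Lemma~\ref{lem:filter_sparse}). That support length is $\tilde O(k^2/\epsilon)$ per cluster, which gives $\tilde O(k^3)$ in total and is too much.

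So we replace the filter by $M_{k_i,\delta}\cdot x_i$. Its Fourier support is one band of length $D(k_i)+2\Delta_{k_i,\delta}=\tilde O(k_i^2/\delta)$. With $\delta\approx\epsilon/k$ the total is $\sum_i \tilde O(k\,k_i^2/\epsilon)$, still possibly $k^3$.

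Better is $\delta_i$ proportional to $\epsilon k_i/k$, which gives length $\tilde O(k\,k_i/\epsilon)$ per cluster and $\tilde O(k^2/\epsilon)$ in total. This is the intended choice. The merging threshold between clusters $a,b$ then becomes $\tilde O\big(\frac{ab}{\epsilon}\cdot\frac{k}{\min(a,b)}\big)\le\tilde O(k\max(a,b)/\epsilon)$. The resulting diameter recursion $D(s)\le D(a)+D(b)+\tilde O(k\max(a,b)/\epsilon)$ solves to $\tilde O(ks/\epsilon)$ up to log factors, because the heavier-side charging argument gives each element a $\log k$ overhead.

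\textbf{Error and concentration.} Condition 2 follows from Lemma~\ref{lem:filter_sparse}: $\|M_{k_i,\delta_i}x_i-x_i\|_{[-1,1]}^2\le\delta_i\|x_i\|^2_{[-1,1]}$. To sum these errors we need $\sum_i\|x_i\|_{[-1,1]}^2\lesssim\|x\|_{[-1,1]}^2$. This follows from the almost-orthogonality of Lemma~\ref{lem:almost_orthogonal_clusters}, applied to each cluster against the union of all farther clusters, using the separation guaranteed by the merging rule, together with a Gram-matrix (Schur or Gershgorin type) bound over clusters. The same argument combined with Corollary~\ref{cor:filter_functions} gives $\|x'\|_2^2\approx\sum_i\|M x_i\|_2^2$. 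Condition 1 then follows from the per-cluster concentration $\|Mx_i\|_{[-1,1]}^2\ge(1-\delta_i)\|Mx_i\|_2^2$ plus the cross terms.

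\textbf{Main obstacle.} The delicate step is the near-orthogonality summation. Pairwise correlations $\delta_{ab}$ must sum to a constant times $\epsilon$ in the Gram bound, which needs $\sum_{b}\delta_{ab}\le\epsilon$ for every cluster $a$. Achieving this requires the separation threshold to include a factor of the number of clusters, and this must be balanced against the diameter budget $\tilde O(k^2/\epsilon)$. I would first try pairwise thresholds of the form $\frac{ab\,k}{\epsilon\min(a,b)}$ and check that the row sums stay bounded.
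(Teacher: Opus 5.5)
\textbf{There is a gap in the clustering rule.} This is the step you leave open. Your proposed threshold $\frac{ab\,k}{\epsilon\min(a,b)}$ equals $\tau\max(a,b)$ with $\tau=\tilde\Theta(k/\epsilon)$. The recursion $D(s)\le D(a)+D(b)+\tau\max(a,b)$ does not solve to $\tilde O(ks/\epsilon)$, because there is no ``heavier-side'' charging argument.

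Concretely, take $f_{j+1}-f_j$ slightly less than $\tau j$. At every stage the only mergeable pair is the current cluster $\{f_1,\ldots,f_j\}$ together with $f_{j+1}$. Two singletons $f_m,f_{m'}$ with $m,m'>j$ are more than $\tau$ apart, so they never merge with each other. You therefore end with one cluster of diameter about $\tau k^2/2$. Its band $\bigcup_j[f_j-\Delta_{k,\epsilon},f_j+\Delta_{k,\epsilon}]$ has length $\tilde\Theta(k^3/\epsilon)$, which is exactly the bound you set out to beat.

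Charging works only when the cost of a merge is proportional to $\min(a,b)$, since each frequency lies on the smaller side at most $\log_2 k$ times. This is why the paper merges whenever $\dist(\cC_i,\cC_j)\le\min\{d_{min}\min\{|\cC_i|,|\cC_j|\},2\Delta_{k,\epsilon}\}$ with $d_{min}=\tilde\Theta(k/\epsilon)$. That rule gives total range $\tilde O(k^2/\epsilon)$ in Claim~\ref{clm:bound_range_clusters}.

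\textbf{What the $\min$-type rule costs, and how to pay for it.} Under a $\min$-type threshold, pairwise correlations are no longer $\lesssim\epsilon\min(a,b)/k$. At the smallest allowed distance, Lemma~\ref{lem:almost_orthogonal_clusters} (tight up to logarithms) gives only $\tilde O(\epsilon\max(a,b)/k)$. This can be $\tilde\Theta(\epsilon)$ for a cluster of size $k/2$ next to a singleton. Forcing every pairwise correlation down to $\epsilon\min(a,b)/k$ so that a crude row sum works is what led you to the $\max$ rule. The missing idea is to bound the aggregate instead, using two facts:
\begin{itemize}
\item The correlation bound decays like $|\cC_i||\cC_j|/\dist(\cC_i,\cC_j)$.
\item Clusters with sizes in $[2^q,2^{q+1})$ are pairwise more than $d_{min}2^q$ apart (up to the $2\Delta_{k,\epsilon}$ cap). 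So the $x$-th such cluster on either side of $\cC_i$ is at distance at least $x\,d_{min}\min\{|\cC_i|,2^q\}$, and the sums over $x$ are harmonic.
\end{itemize}
The paper packages this as Lemma~\ref{lem:total_energy}, via the matrix $K$ with $\|K_S\|_{2\to 2}=O(\sqrt{\log k})$; this is where the $\log^{1.5}$ in $d_{min}$ comes from. A plain row-sum bound also works if $d_{min}$ carries a few more logarithms.

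\textbf{The rest of your plan.} It is sound, and a little leaner than the paper's argument, which compares to $M_{k,\epsilon}\cdot x$ through Corollary~\ref{cor:filter_functions}. With $\delta_i=\epsilon k_i/k$ (the paper's choice), Lemma~\ref{lem:filter_sparse} plus Cauchy--Schwarz give Conditions 1 and 2 directly once $\sum_i\|x_i\|_{[-1,1]}^2=O(\|x\|_{[-1,1]}^2)$. No cross terms over $\mathbb{R}$ are needed. The price is that unfiltered clusters are never exactly orthogonal, so your Gram bound must also include pairs farther apart than $2\Delta_{k,\epsilon}$. Each such pair has correlation only $\tilde O(\epsilon k_ik_j/k^2)$, so they are harmless.
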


Our main result, Theorem~\ref{thm:learn_Fourier_sparse}, is obtained by combining Theorem~\ref{thm:multiband_approx} and Theorem~\ref{thm:learn_multiband}. For completeness, we provide a proof in Section~\ref{sec:proof_learn_sparse}. 

In the rest of this section, we finish the proof of Theorem~\ref{thm:multiband_approx}. The outline of this proof is similar to the proof of Theorem 6.2 in \cite{CCSW}. However, the focus of Theorem 6.2 in \cite{CCSW} is the error of frequency recovery. While Theorem 6.2 in \cite{CCSW} shows an upper bound $\tilde{O}(k^2)$ on this error, this only provides a learning algorithm with $\tilde{O}(k^3)$ samples. The main insight of Theorem~\ref{thm:multiband_approx} is that transferring a Fourier-sparse signal to a multiband signal improves the sample complexity to $\tilde{O}(k^2)$. Moreover, we provide better bounds on several technical results (including Lemma~\ref{lem:almost_orthogonal_clusters} and Lemma~\ref{lem:total_energy}) compared to \cite{CCSW}. 

The high level idea is to approximate $M_{k,\epsilon} \cdot x$ (for the filter function $M_{k,\epsilon}$ defined in Lemma~\ref{lem:filter_sparse}) by a multiband signal $x'$. For convenience, given $k$ and $\epsilon$, let $M_{k,\epsilon}$ be the filter function constructed in Lemma~\ref{lem:filter_sparse} and $\Delta_{k,\epsilon} = O(\frac{k^2 \log k\epsilon}{\epsilon})$ denote its support size $\supp(\wh{M_{k,\epsilon}})=[-\Delta_{k,\epsilon},\Delta_{k,\epsilon}]$.
First of all, it is sufficient to approximate $M_{k,\epsilon} \cdot x$ because $\|M_{k,\epsilon} \cdot x - x\|_{[-1,1]}$ is small by Lemma~\ref{lem:filter_sparse}. Because $x$ is $k$-sparse, the Fourier transform of $M_{k,\epsilon} \cdot x=\wh{M_{k,\epsilon}} * \wh{x}$ has at most $k$ intervals of length $\Delta_{k,\epsilon}=O(k^2/\epsilon)$. This provides a multiband approximation of size $O(k^3/\epsilon)$. To reduce the size, we partition the $k$ frequencies in $\wh{x}$ into almost orthogonal clusters and apply Corollary~\ref{cor:filter_functions} to approximate each cluster. 

Here are a few notations about clusters in $\wh{x}$, introduced in \cite{CCSW}. For a fixed $k$-Fourier-sparse signal $x(t):=\sum_{j=1}^k \alpha_j e^{2 \pi \bi f_j t}$, a cluster of $\wh{x}$ (and $x$) is a subset of frequencies with coefficients $\cC \subset \{(f_1,\alpha_1),\ldots,(f_k,\alpha_k)\}$. We define its corresponding signal $x_{\cC}(t):=\sum_{(f,\alpha) \in \cC} \alpha \cdot e^{2 \pi \bi f t}$. We plan to partition the $k$ frequencies in $\wh{x}$ (with their coefficients) into as many as clusters as possible while guaranteeing every pair of clusters is almost orthogonal: $\langle x_{\cC}, x_{\cC'} \rangle_{[-1,1]} \approx 0$.  For convenience, we define $\dist(\cC,\cC'):=\min_{f \in \cC, f' \in \cC'} |f-f'|$ to be the frequency separation of these two clusters.

Given Lemma~\ref{lem:almost_orthogonal_clusters} and Corollary~\ref{cor:filter_functions} about orthogonality, a necessary condition is that the frequency separation between every pair of clusters $\cC$ and $\cC'$ is at least $\tilde{\Omega}(\frac{|\cC| \cdot |\cC'|}{\epsilon})$. We start with $k$ clusters such that each contains a distinct frequency (with its coefficient) and keep merging them if any pair violates the separation condition. We require the separation between $\cC$ and $\cC'$ to be $\tilde{\Omega}(k \cdot \min\{|\cC|,|\cC'|\})$ in order to satisfy the guaranty of the main property. We state this partition procedure in Algorithm~\ref{alg:cluster} and its main property in Lemma~\ref{lem:total_energy}.

\begin{algorithm} 
    \caption{Partition Frequencies into Clusters \label{alg:cluster}}
    \begin{algorithmic}
        \Procedure{PartitionClusters}{frequencies $f_1,\ldots,f_k$ with amplitudes $\alpha_1,\ldots,\alpha_k$}      
        \State Define $k$ clusters $\cC_i:=\{(f_i,\alpha_i)\}$ and $d_{min}:=\frac{C_H k \log^{1.5} k/\epsilon}{\epsilon}$ for the constant $C_H=O(1)$ defined in Corollary~\ref{cor:filter_functions}
        
        \While{ $\exists~\cC_i$ and $\cC_j$ such that $       
        \operatorname{dist}(\cC_i,\cC_j) \le \min\bigg\{ d_{min} \cdot \min\{|\cC_i|,|\cC_j|\}, 2 \Delta_{k,\epsilon} \bigg\}
        $}
        \State  merge all clusters whose frequencies lie between $\cC_i$ and
    $\cC_j$ into one cluster
        \EndWhile
        \State Return all remaining clusters $\cC$
    \EndProcedure
    \end{algorithmic}
\end{algorithm}

We set the frequency separation in Algorithm~\ref{alg:cluster} in order to guarantee the following lemma.
Basically, there are two terms in this separation: $d_{min} \cdot \min\{|\cC_i|,|\cC_j|\}$ and $2\Delta_{k,\epsilon}$. Since $\wh{M_{k,\epsilon}}=[-\Delta_{k,\epsilon},\Delta_{k,\epsilon}]$ and the algorithm will consider $M_{k,\epsilon} \cdot x$ whose Fourier transform is $\wh{M_{k,\epsilon}}*\wh{x}$, $M_{k,\epsilon} \cdot x_{\cC}$ and $\wh{M_{k,\epsilon}} \cdot x_{\cC'}$ are orthogonal (over $\mathbb{R}$) when their separation is $2 \Delta_{k,\epsilon}$. Otherwise, we call two clusters $\cC$ and $\cC'$ \emph{correlated} if their separation are less than $2 \Delta_{k,\epsilon}$. Another term $d_{min} \cdot \min\{|\cC_i|,|\cC_j|\}$ with $d_{min}:=\frac{C_H k \log^{1.5} k/\epsilon}{\epsilon}$ is chosen to bound the total correlation between a cluster $\cC$ and all the rest clusters.

\begin{lemma}
\label{lem:total_energy}
Let $d_{min}:=\frac{C_H k \log^{1.5} k/\epsilon}{\epsilon}$ and $\cC_1,\ldots,\cC_n$ be $n$ clusters with $\dist(\cC_i,\cC_j) \ge  \min\bigg\{ d_{min} \cdot \min\{|\cC_i|,|\cC_j|\}, 2\Delta_{k,\epsilon} \bigg\}$ for any two $\cC_i$ and $\cC_j$.
For every $S\subseteq[n]$,
\begin{align*}
    \left\|M_{k,\epsilon} \cdot\sum_{i\in S}x_{\cC_i}\right\|_2^2
    = \left(1\pm O(\epsilon)\right) \sum_{i\in S}\|M_{k,\epsilon} \cdot x_{\cC_i}\|_2^2.
\end{align*}
In particular, $
    \|M_{k,\epsilon} \cdot x\|_2^2 
    = \left(1\pm O(\epsilon)\right) \sum_{i=1}^n\|M \cdot x_{\cC_i}\|_2^2 $.
\end{lemma}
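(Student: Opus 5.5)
Write $M:=M_{k,\epsilon}$ and $y_i:=M\cdot x_{\cC_i}$. Expanding the square gives
\[
\Bigl\|\sum_{i\in S} y_i\Bigr\|_2^2=\sum_{i\in S}\|y_i\|_2^2+\sum_{i\ne j\in S}\langle y_i,y_j\rangle .
\]
So it suffices to show
\[
\sum_{i\ne j\in S}|\langle y_i,y_j\rangle|\le O(\epsilon)\sum_{i\in S}\|y_i\|_2^2 .
\]

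\textbf{Step 1: bound a single pair.} Fix a pair $i\ne j$ with sizes $\ell_i=|\cC_i|$ and $\ell_j=|\cC_j|$. There are two cases.
\begin{enumerate}
\item If $\dist(\cC_i,\cC_j)>2\Delta_{k,\epsilon}$, then $\wh{y_i}$ and $\wh{y_j}$ have disjoint supports, so $\langle y_i,y_j\rangle=0$.
\item Otherwise the separation is at least $d_{min}\min\{\ell_i,\ell_j\}$, and we apply part~2 of Corollary~\ref{cor:filter_functions} with a pair-dependent correlation parameter $\delta_{ij}$. That corollary requires a separation of $C_H\frac{\ell_i\ell_j}{\delta_{ij}}(\log\frac{\ell_i\ell_j}{\delta_{ij}}+\log\frac k\epsilon)$. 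So we choose
\[
\delta_{ij}\approx \frac{\epsilon\,\max\{\ell_i,\ell_j\}}{k\log^{0.5}(k/\epsilon)} .
\]
With this choice, $d_{min}\min\{\ell_i,\ell_j\}$ exceeds the requirement: the $\log^{1.5}$ in $d_{min}$ absorbs one $\log$ factor from the corollary, and leaves a spare $\log^{0.5}$ for later.
\end{enumerate}
This gives $|\langle y_i,y_j\rangle|\le\delta_{ij}\|y_i\|_2\|y_j\|_2$. Since $\delta_{ij}\le\frac{\epsilon\ell_i}{k}+\frac{\epsilon\ell_j}{k}$, AM--GM yields terms of the form $\frac{\epsilon\ell_i}{k}\|y_i\|\|y_j\|$.

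\textbf{Step 2: sum over pairs (the main obstacle).} Bounding $\|y_i\|\|y_j\|\le(\|y_i\|^2+\|y_j\|^2)/2$ naively gives $\sum_j\delta_{ij}$, which can be as large as $\epsilon\cdot n$ and is too much. Instead I exploit the geometry of the correlated pairs.

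Order the clusters by frequency. By the merge rule of Algorithm~\ref{alg:cluster}, clusters are intervals in frequency order. A cluster $\cC_j$ correlated with $\cC_i$ must lie within $2\Delta_{k,\epsilon}=\tilde O(k^2/\epsilon)$ of it. Moreover, consecutive correlated clusters are separated by at least $d_{min}\cdot(\text{size})$.

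I would pair each correlated $(i,j)$ with $\ell_i\le\ell_j$ and charge it as
\[
\delta_{ij}\|y_i\|\|y_j\|\le \frac{\epsilon}{2k}\Bigl(\ell_j\|y_i\|^2\frac{\ell_i}{\ell_j}+\ell_j\|y_j\|^2\Bigr)\cdot(\text{weights}),
\]
using weighted AM--GM with weight $\sqrt{\ell_j/\ell_i}$. The aim is that the charge to $\|y_i\|^2$ sums, over all $j$, to $\frac{\epsilon}{k}\sum_j\sqrt{\ell_i\ell_j}\le\frac{\epsilon}{k}\sqrt{\ell_i}\sqrt{n\sum_j\ell_j}$. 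This is $O(\epsilon)$ if we bound it via $\sum_j\ell_j\le k$ together with the $\log^{0.5}$ slack.

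If Cauchy--Schwarz is too lossy here, the fallback is a dyadic grouping of the neighbours $j$ of $i$ by the size $\ell_j$ and by distance. Within a window of width $2\Delta_{k,\epsilon}$, the separation condition caps how many clusters of size at least $2^s$ can appear near $\cC_i$. Summing the resulting geometric series over $O(\log(k/\epsilon))$ dyadic scales is exactly what the extra $\log^{0.5}$ in $d_{min}$ (together with the $\log$ in the $\delta_{ij}$ slack) should pay for.

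\textbf{Step 3: conclude.} With the cross terms bounded by $O(\epsilon)\sum_{i\in S}\|y_i\|_2^2$, both directions of $(1\pm O(\epsilon))$ follow, and the case $S=[n]$ gives the ``in particular'' statement because $x=\sum_i x_{\cC_i}$.
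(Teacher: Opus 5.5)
\textbf{There is a genuine gap, and it is in Step~1.} You apply part~2 of Corollary~\ref{cor:filter_functions} using only the \emph{minimal} guaranteed separation $d_{min}\min\{\ell_i,\ell_j\}$. So your $\delta_{ij}\approx \epsilon\max\{\ell_i,\ell_j\}/(k\sqrt{\log(k/\epsilon)})$ does not depend on the actual distance $\dist(\cC_i,\cC_j)$. Once the per-pair bounds ignore distance, no summation argument in Step~2 can give $O(\epsilon)$.

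Here is a concrete obstruction. Take one cluster $\cC_0$ with $k/2$ frequencies, and $m=\Theta(k/\sqrt{\log(k/\epsilon)})$ singletons at distances $d_{min},2d_{min},\ldots,m\,d_{min}$ to its right.
\begin{itemize}
\item This configuration satisfies the lemma's hypothesis.
\item Every pair is correlated, since $2\Delta_{k,\epsilon}/d_{min}=\Theta(k/\sqrt{\log(k/\epsilon)})$.
\item Your choice gives $\delta_{0x}\approx\epsilon/(2\sqrt{\log(k/\epsilon)})$ for every leaf $x$.
\end{itemize}
With $z_0=1$ and $z_x=1/\sqrt m$, the form $\sum_{i\ne j}\delta_{ij}z_iz_j$ is about $\epsilon\sqrt{m}/\sqrt{\log(k/\epsilon)}$ times $\|z\|_2^2$. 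That is $\epsilon\sqrt{k}$ up to polylogarithmic factors. So your claim that the charge $\frac{\epsilon}{k}\sum_j\sqrt{\ell_i\ell_j}$ to a cluster is $O(\epsilon)$ fails for the large cluster: it is polynomially large in $k$, and the spare $\log^{0.5}$ cannot absorb that. No weighted AM--GM helps either, because the quadratic form itself is large. Your dyadic fallback only counts clusters inside a window of width $2\Delta_{k,\epsilon}$, so it hits the same example. Grouping by distance cannot help while the pairwise bound you sum is independent of distance.

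\textbf{The missing idea is to feed the actual separation into the corollary.} This gives
$\delta_{ij}=\frac{4\epsilon}{\sqrt{\log k}}\cdot\frac{|\cC_i||\cC_j|\,d_{min}}{k\,\dist(\cC_i,\cC_j)}$.
It coincides with your choice at minimal distance but decays like $1/\dist$. The cross terms are then at most $\frac{4\epsilon}{\sqrt{\log k}}\,z^\top K_S z$, where $K_{ij}=|\cC_i||\cC_j|/D_{ij}$ and $D_{ij}=k\,\dist(\cC_i,\cC_j)/d_{min}$. It remains to show $\|K_S\|_{2\to2}=O(\sqrt{\log k})$:
\begin{itemize}
\item Split the clusters into dyadic size classes $\mathcal I_q$.
\item Clusters are frequency intervals. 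For $q$ at most the class of $\cC_i$, consecutive gaps within class $q$ are at least $d_{min}2^q$. Hence the $x$-th correlated class-$q$ cluster on either side of $\cC_i$ has $D\ge kx2^q$, which gives $\sum 1/D^2\le 4/(k^2 4^q)$.
\item Combined with $|\mathcal I_r|2^r\le k$, this yields $\|K^{(q,r)}\|_F^2\le 64\cdot 2^{\max\{q,r\}}/k$.
\item Aggregating the $O(\log k)\times O(\log k)$ matrix of block norms gives $\|K_S\|_{2\to2}^2=O(\log k)$.
\end{itemize}
The extra $\log^{0.5}$ in $d_{min}$ is exactly what cancels this $\sqrt{\log k}$. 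In the star example, the $1/x$ decay makes the center--leaf contribution $O(\epsilon)\|z\|_2^2$, by Cauchy--Schwarz and $\sum_x x^{-2}<\infty$.
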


For completeness, we provide a proof of Lemma~\ref{lem:total_energy} in Section~\ref{sec:proof_total_energy}, which simplifies the counterpart in \cite{CCSW} and provides better parameters.  
Finally, we state an upper bound on the size of a cluster from Claim 6.4 of \cite{CCSW}.
\begin{claim} \label{clm:bound_range_clusters}
From Algorithm~\ref{alg:cluster}, a cluster with $\ell$ frequencies has a range of length at most 
\[
r_\ell \le
\begin{cases}
d_{min}\cdot O( \ell \log \ell),   &\ell \le \frac{2\Delta_{k,\epsilon}}{d_{min}},\\
d_{min}\cdot \log \frac{2\Delta_{k,\epsilon}}{d_{min}} \cdot O( \ell ),    & \ell >\frac{2\Delta_{k,\epsilon}}{d_{min}}.
\end{cases}
\]
Moreover, $\sum_{i \in [n]} |range(\cC_i)| = \tilde{O}(k) \cdot d_{min} = \tilde{O}(\frac{k^2}{\epsilon})$. 
\end{claim}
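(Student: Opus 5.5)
The plan is to bound the range of each cluster by charging merge costs to frequencies, in the spirit of the ``smaller-into-larger'' argument.

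\textbf{Invariant.} First I will show that the clusters kept by Algorithm~\ref{alg:cluster} always occupy pairwise disjoint contiguous blocks of the sorted list $f_1\le\cdots\le f_k$. This holds initially. A merge takes $\cC_i$, $\cC_j$ and every cluster whose frequencies lie between them, so the invariant is preserved.

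\textbf{Cost of one merge.} Because of the invariant, every absorbed in-between cluster lies inside the span of $\cC_i\cup\cC_j$. Write $r(\cC)$ for the length of the range of $\cC$. The new range then satisfies
\[
r(\cC_{new})\le r(\cC_i)+r(\cC_j)+\dist(\cC_i,\cC_j)\le r(\cC_i)+r(\cC_j)+\min\{d_{min}\cdot\min\{|\cC_i|,|\cC_j|\},\,2\Delta_{k,\epsilon}\}.
\]
Unrolling over the merge tree of a final cluster with $\ell$ frequencies, its range is at most the sum of the merge costs over the internal nodes of that tree. Absorbed in-between clusters only add frequencies at no extra range cost, so they can only help.

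\textbf{Charging.} Set $L:=2\Delta_{k,\epsilon}/d_{min}$ and split the merges into two types.

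\emph{Merges with $\min\{|\cC_i|,|\cC_j|\}<L$.} Bound the cost by $d_{min}\cdot\min\{|\cC_i|,|\cC_j|\}$ and charge $d_{min}$ to each frequency in the smaller side. Whenever a frequency is charged, the size of its cluster at least doubles. It is only charged while its cluster has size below $\min\{L,\ell\}$. So each frequency is charged at most $\log_2\min\{\ell,L\}+1$ times, for a total of $O(d_{min}\,\ell\log\min\{\ell,L\})$.

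\emph{Merges with both sides of size at least $L$.} Bound the cost by $2\Delta_{k,\epsilon}=d_{min}L$. Consider the subtree of merge-tree nodes of size at least $L$. Its minimal nodes are disjoint sets of at least $L$ frequencies, so there are at most $\ell/L$ of them. Every merge of this type is a binary branching node of that subtree, so there are fewer than $\ell/L$ such merges. Their total cost is therefore at most $d_{min}\,\ell$. This type only occurs when $\ell>L$.

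Adding the two contributions gives $O(d_{min}\,\ell\log\ell)$ when $\ell\le L$, and $O(d_{min}\,\ell\log L)$ when $\ell>L$, which is the stated case split.

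\textbf{Summing over clusters.} Every frequency lies in exactly one final cluster, so $\sum_i|range(\cC_i)|\le O(d_{min}\cdot k\cdot\log\min\{k,L\})$. Here $L=\tilde O(k)$ and $d_{min}=\tilde O(k/\epsilon)$, so the sum is $\tilde O(k)\cdot d_{min}=\tilde O(k^2/\epsilon)$.

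\textbf{Main obstacle.} The delicate step is the second type of merge. There the plain doubling argument would give an extra $\log(k/L)$ factor, so one must carefully count the branching nodes among large clusters and make sure that absorbing in-between clusters does not break either counting argument. Absorption only increases sizes, which preserves the doubling property for the charged side, and it does not increase the number of minimal large nodes beyond $\ell/L$.
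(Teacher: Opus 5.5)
Your proof is correct, and it takes a different route from the paper's.

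\textbf{The paper's route.} The paper works with the worst-case quantity $r_\ell$, the maximum range of a cluster with at most $\ell$ frequencies. It writes the binary recursion $r_\ell=\max_{i\le \ell/2}\{r_i+r_{\ell-i}+\min\{d_{min}\, i,\,2\Delta_{k,\epsilon}\}\}$. It then verifies by strong induction the closed form $d_{min}C'\ell\log\ell$ for $\ell\le i_0$ and $d_{min}(C'\ell\log i_0+\ell-i_0)$ for $\ell>i_0$, where $i_0=2\Delta_{k,\epsilon}/d_{min}$ is your $L$. The induction splits into three cases according to whether $i$ and $\ell-i$ lie below or above $i_0$, and uses the entropy-type inequality $x\ln x+(1-x)\ln(1-x)\le 4\ln 2\cdot x(x-1)$. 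Finally it bounds $\sum_i|range(\cC_i)|$ by $r_k$.

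\textbf{Your route.} You amortize over the merge tree. The doubling (smaller-half) trick pays for every merge whose smaller side has size below $L$. Counting branching nodes in the upward-closed subtree of nodes of size at least $L$ pays for the remaining merges, at cost $2\Delta_{k,\epsilon}=d_{min}L$ each.

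\textbf{What each buys.} Your argument avoids the analytic case analysis. It also makes explicit two points the paper handles only implicitly. First, a merge may absorb in-between clusters; the paper's binary recursion with sizes summing exactly to $\ell$ remains valid only as an inequality, via monotonicity of $r_\ell$. Second, your per-frequency charge gives $\sum_i|range(\cC_i)|=O(d_{min}\,k\log\min\{k,L\})$ directly, with no superadditivity step. The paper's induction gives a slightly tighter explicit constant and the additive $\ell-i_0$ form. Asymptotically the two bounds coincide.

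A minor tightening: since the charged side has size at most $\ell/2$, each frequency is charged at most $\log_2\ell$ times. This removes the harmless $+1$ in your count and gives $0$ when $\ell=1$.
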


Now we are ready to finish the proof of Theorem~\ref{thm:multiband_approx}.

\begin{proofof}{Theorem~\ref{thm:multiband_approx}}
    Let $\theta_s=\Theta(k)$ be the smallest integer such that $\Delta_{\theta_s,\frac{\epsilon \cdot \theta_s}{k}} \ge \Delta_{k,\epsilon}$. Because $\Delta_{\ell,\delta}=\Theta(\frac{\ell^2}{\delta} \log \frac{\ell}{\delta})$ from Lemma~\ref{lem:filter_sparse}, $\Delta_{k,\epsilon}=\Delta_{k,\epsilon}=\Theta(\frac{k^2}{\epsilon} \log \frac{k}{\epsilon})$ and 
    \[
    \Delta_{\theta_s,\frac{\epsilon \cdot \theta_s}{k}}=\Theta(\frac{\theta_s^2}{\frac{\epsilon \cdot \theta_s}{k}} \log \frac{\theta_s}{\frac{\epsilon \cdot \theta_s}{k}})=\Theta(\frac{k \theta_s}{\epsilon} \log \frac{k}{\epsilon}).
    \]
    This implies that $\theta_s=\Theta(k)$.

    Let $\cC_1,\ldots,\cC_n$ be the remaining clusters in Algorithm~\ref{alg:cluster}. Now we consider a multiband approximation of $\wh{M_{k,\epsilon} \cdot x}=\wh{M_{k,\epsilon}}*\wh{x}=\sum_{i=1}^n \wh{M_{k,\epsilon}}*\wh{x_{\cC_i}}$:
    \[
    \wh{z}:=\sum_{i:|\cC_i|<\theta_s} \wh{M_{|\cC_i|,\frac{\epsilon |\cC_i|}{k}}}*\wh{x_{\cC_i}} + \sum_{i:|\cC_i| \ge \theta_s} \wh{M_{k,\epsilon}}*\wh{x_{\cC_i}}.
    \]
    We bound $\|\wh{z} - \wh{M_{k,\epsilon} \cdot x}\|_2=\|z-M_{k,\epsilon} \cdot x\|_2$ as follows. By the definition,
    \[
    z:=\sum_{i:|\cC_i|<\theta_s} M_{|\cC_i|,\frac{\epsilon |\cC_i|}{k}}*x_{\cC_i} + \sum_{i:|\cC_i| \ge \theta_s} M_{k,\epsilon}*x_{\cC_i}
    \]
    such that
    \begin{equation}\label{eq:sum_diff_filters}
    \|z-M_{k,\epsilon} \cdot x\|_2 = \|\sum_{i:|\cC_i|<\theta_s} (M_{|\cC_i|,\frac{\epsilon |\cC_i|}{k}} - M_{k,\epsilon}) \cdot x_{\cC_i} \|_2 \le \sum_{i:|\cC_i|<\theta_s} \| (M_{|\cC_i|,\frac{\epsilon |\cC_i|}{k}} - M_{k,\epsilon}) \cdot x_{\cC_i} \|_2.        
    \end{equation}
    Since $\Delta_{|\cC_i|,\frac{\epsilon |\cC_i|}{k}} < \Delta_{k,\epsilon}$ by the definition of $\theta_s$, the first property of Corollary~\ref{cor:filter_functions} bounds each 
    \[
    \| (M_{|\cC_i|,\frac{\epsilon |\cC_i|}{k}} - M_{k,\epsilon}) \cdot x_{\cC_i} \|_2 = \sqrt{O(\frac{\epsilon |\cC_i|}{k})} \cdot \|M_{k,\epsilon} \cdot x_{\cC_i}\|_2. \]
    We plug this to simplify \eqref{eq:sum_diff_filters} then apply the Cauchy-Schwartz inequality:
    \[
    O(\sqrt{\epsilon}) \cdot \sum_{i:|\cC_i|<\theta_s} \sqrt{\frac{|\cC_i|}{k}} \cdot \|M_{k,\epsilon} \cdot x_{\cC_i}\|_2 \le O(\sqrt{\epsilon}) \cdot (\sum_i \frac{|\cC_i|}{k})^{1/2} \cdot (\sum_i \|M_{k,\epsilon} \cdot x_{\cC_i}\|^2_2)^{1/2}.
    \]
    By Lemma~\ref{lem:total_energy}, this is at most 
    \[
    O(\sqrt{\epsilon}) \cdot \|\sum_i M_{k,\epsilon} \cdot x_{\cC_i}\|_2 = O(\sqrt{\epsilon}) \cdot \|M_{k,\epsilon} \cdot x\|_2.
    \]
    So
    $\|z - M_{k,\epsilon} x\|_{[-1,1]}^2 \le \|z - M_{k,\epsilon} x\|_{2}^2=O(\epsilon) \cdot \|M_{k,\epsilon} x\|_2^2$. Because $\|M_{k,\epsilon} x\|_{[-1,1]}^2 \ge (1-\epsilon) \cdot \|M_{k,\epsilon} x\|_{2}^2$, this implies the first desired property
    \[
    \|z\|_{[-1,1]}^2 \ge (1- \epsilon - O(\epsilon)) \|M_{k,\epsilon} x\|_{2}^2 \ge \frac{(1- \epsilon - O(\epsilon)) }{ 1 + O(\epsilon)} \|z\|_{2}^2.   \]

    Then we show the second desired property \[
    \|z-x\|_{[-1,1]}^2 \le \|z-x\|_2^2 \le 2( \| z - M_{k,\epsilon} x\|_2^2 + \| x - M_{k,\epsilon} x\|_2^2) = O(\epsilon) \cdot \|x\|_{[-1,1]}^2.
    \]

    Next, we bound the $|\supp(\wh{z})|$. This is at most
    \begin{align*}
        & \sum_i range(\cC_i) + \sum_{i: |\cC_i| \le \theta_s}2\Delta_{|\cC_i|,\frac{\epsilon \cdot |\cC_i|}{k}} + 2 \Delta_{k,\epsilon} \cdot \frac{k}{\theta_s} \\
        \le & \tilde{O}(\frac{k^2}{\epsilon}) + \sum_i O(\frac{k |\cC_i|}{\epsilon} \log \frac{k}{\epsilon}) + O(\Delta_{k,\epsilon}) = \Tilde{O}(k^2/\epsilon).
    \end{align*}

    Finally we remark that rescaling $\epsilon$ by a constant factor  and reset $x':=z$ will finish this proof.
\end{proofof}

\subsection{Proof of Theorem~\ref{thm:learn_Fourier_sparse}}\label{sec:proof_learn_sparse}

We apply Theorem~\ref{thm:multiband_approx} to approximate $x$ by $x'$, which shows the first part of Theorem~\ref{thm:learn_Fourier_sparse}. 

For the second part, we assume $w(t)=x'(t)+\eta'(t)$ for $t \in [-1,1]$ and $w(t)=0$ otherwise similar to the proof of Theorem~\ref{thm:locations_multiband}. The energy of the new noise $\eta'$ satisfies $\|\eta'\|_{[-1,1]}^2 = O(\epsilon) \cdot \|x'\|_{[-1,1]}^2$. First of all, $\|x'\|_{[-1,1]} \approx \|x\|_{[-1,1]}$ from the second property of $x'$ in Theorem~\ref{thm:multiband_approx}. By the first property of $x'$, \[
\|\eta'\|_{\mathbb{R}\setminus [-1,1]}^2 \le \epsilon \cdot \|x'\|_2^2 \le \frac{\epsilon}{1-\epsilon} \cdot \|x'\|_{[-1,1]}^2 \le 2 \epsilon \cdot \|x'\|_{[-1,1]}^2.
\]
  Moreover, $\|\eta'\|_{[-1,1]}^2 \le 2 \|\eta\|_{[-1,1]}^2 + 2\|x - x'\|_{[-1,1]}^2 = O(\epsilon) \cdot \|x'\|_{[-1,1]}^2$. 

To apply Theorem~\ref{thm:learn_multiband} to learn $x'$ and $x$, we set $R:=|\supp(\wh{x'})|/k$, which is $\tilde{\Theta}(k/\epsilon)$ from the third property of Theorem~\ref{thm:multiband_approx}. For such a parameter $R$, $\wh{x'}$ contains $n \le \frac{|\supp(\wh{x'})|}{R}+k \le 2k$ such intervals. Given the upper bound on $\|\eta'\|_2^2$, Theorem~\ref{thm:learn_multiband} outputs $\tilde{x}$ such that $\|\tilde{x}-x'\|_{[-1,1]}^2 =O(\epsilon) \cdot \|x'\|_{[-1,1]}^2$. Since $\|x-x'\|_{[-1,1]}^2 = O(\epsilon) \cdot \|x\|_{[-1,1]}^2$, this further implies $\|\tilde{x}-x\|_{[-1,1]}^2 = O(\epsilon) \cdot \|x\|_{[-1,1]}^2$ by the triangle inequality.

Because $n \le 2k$ and $R=\Tilde{\Theta}(k/\epsilon)$, the sample complexity is $\tilde{O}(k^2)$ and the time complexity is $\tilde{O}(k^5)$ from Theorem~\ref{thm:learn_multiband}.

\subsection{Proof of Lemma~\ref{lem:total_energy}}\label{sec:proof_total_energy}

In this section, let $\|v\|_2=(\sum_i |v_i|^2)^{1/2}$ for a vector $v\in \mathbb C^n$. For a matrix $A$, denote its induced Euclidean operator norm as:

\[
\|A\|_{2\to 2}:=\sup_{v\neq 0} \frac{\|Av\|_2}{\|v\|_2}=\sup_{\|v\|_2=1}\|Av\|_2,
\]

and its Frobenius norm as:

\[
\|A\|_{F}:=\left(\sum\limits_{i,j}|A_{i,j}|^2\right)^{1/2}.
\]

It should be remarked that $\|A\|_{2\to 2}\le \|A\|_F$ for any matrix $A$.

Assume that $\cC_1,\ldots,\cC_n$ are sorted by their frequencies. We define two matrices $D$ and $K\in \mathbb R^{n\times n}$ where  
\begin{align*}
    D_{i,j}:= \begin{cases}\frac{k\cdot \dist(\cC_i,\cC_j)} {d_{min}},&\dist(\cC_i,\cC_j) \le 2 \Delta_{k,\epsilon},\\
    0,&\dist(\cC_i,\cC_j) >2 \Delta_{k,\epsilon},
    \end{cases}
\end{align*}
and
\begin{equation}
\label{eq:def_K}
    K_{i,j}:= \begin{cases}\frac{|\cC_i|\cdot |\cC_j|} {D_{i,j}},&\dist(\cC_i,\cC_j) \le 2 \Delta_{k,\epsilon},\\
    0,&\dist(\cC_i,\cC_j) > 2 \Delta_{k,\epsilon}.
    \end{cases}
\end{equation}

Recall $\dist(\cC_i,\cC_j) \ge d_{min} \cdot \min\{|\cC_i|,|\cC_j|\}$. If $\cC_i$ and $\cC_j$ are correlated, we have

\begin{equation}
\label{eq:bound_D}
D_{i,j} \in \left[ k  \cdot \min\{|\cC_i|,|\cC_j|\}, k  \cdot \frac{2\Delta_{k,\epsilon}}{d_{min}} \right].
\end{equation}

Denote $q_{max}:=\lfloor \log_2 k\rfloor$. For a subset $S\subseteq [n]$, we partition the indices into classes where

\[
\mathcal I_q:=\{i:i\in S,2^q\le |\cC_i
|<2^{q+1}\},\quad 0\le q\le q_{max}.
\]

For $0\le q,r\le q_{max}$, we define
\[
K^{(q,r)}:= K[\mathcal I_q,\mathcal I_r]
\]
for the corresponding rectangular block of $K$. In particular, $K_S:=K[S,S]$.

First, we give our selection on $\delta_{i,j}$ such that the summation of $\delta_{i,j}$ can be bounded perfectly.

\begin{claim}
\label{clm:satisbility_of_delta_construction}

Let $\delta_{i,j}=0$ for uncorrelated $\cC_i$ and $\cC_j$. For correlated $\cC_i$ and $\cC_j$, let 
\begin{equation}
\label{eq:selection_on_delta}
    \delta_{i,j}:=
    \frac{4\epsilon}{\sqrt{\log k}}\cdot K_{i,j}=\frac{4\epsilon}{\sqrt{\log k}}\cdot \frac{|\cC_i|\cdot |\cC_j|\cdot d_{min}}{k\cdot \dist(\cC_i,\cC_j)}.
\end{equation}
Then for any correlated $\cC_i$ and $\cC_j$, $\delta_{i,j}$ satisfies the condition of the 2nd property \eqref{eq:IP_filter_wz} of Corollary~\ref{cor:filter_functions}: 
\begin{equation}
\label{eq:weighted_pair_correlation}
    |\langle M \cdot x_{\cC_i},M \cdot x_{\cC_j}\rangle|
    \le \delta_{i,j} \|M \cdot x_{\cC_i}\|_2\|M \cdot x_{\cC_j}\|_2 .
\end{equation}
\end{claim}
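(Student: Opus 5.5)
The plan is to apply the second property \eqref{eq:IP_filter_wz} of Corollary~\ref{cor:filter_functions} with $w:=x_{\cC_i}$ ($\ell:=|\cC_i|$), $z:=x_{\cC_j}$ ($r:=|\cC_j|$), and correlation parameter $\delta:=\delta_{i,j}$. Since $\ell,r\le k$, it suffices to check that the separation $\dist(\cC_i,\cC_j)$ exceeds
\[
\min\Big\{C_H \tfrac{\ell r}{\delta_{i,j}}\big(\log\tfrac{\ell r}{\delta_{i,j}}+\log\tfrac{k}{\epsilon}\big),\,2\Delta_{k,\epsilon}\Big\}.
\]
For correlated clusters we have $\dist(\cC_i,\cC_j)\le 2\Delta_{k,\epsilon}$. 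So the real work is to show that the first term of the minimum is at most $\dist(\cC_i,\cC_j)$. One caveat: if $\dist(\cC_i,\cC_j)$ equals the first term exactly, we adjust constants slightly.

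To verify this, substitute \eqref{eq:selection_on_delta}. Then
\[
\frac{\ell r}{\delta_{i,j}}=\frac{\sqrt{\log k}\,k\cdot \dist(\cC_i,\cC_j)}{4\epsilon\, d_{min}}=\frac{\dist(\cC_i,\cC_j)\sqrt{\log k}}{4C_H\log^{1.5}(k/\epsilon)},
\]
using $d_{min}=C_Hk\log^{1.5}(k/\epsilon)/\epsilon$. The required inequality therefore reduces to
\[
\frac{\sqrt{\log k}}{4\log^{1.5}(k/\epsilon)}\Big(\log\tfrac{\ell r}{\delta_{i,j}}+\log\tfrac{k}{\epsilon}\Big)\le 1,
\]
so the key step is to bound the logarithm $\log\frac{\ell r}{\delta_{i,j}}$ by $O(\log\frac{k}{\epsilon})$. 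From the expression above, $\frac{\ell r}{\delta_{i,j}}\le \dist(\cC_i,\cC_j)\le 2\Delta_{k,\epsilon}=O(\frac{k^2}{\epsilon}\log\frac{k}{\epsilon})$, which gives $\log\frac{\ell r}{\delta_{i,j}}\le 3\log\frac{k}{\epsilon}$ for $k/\epsilon$ large. The left side is then at most
\[
\frac{\sqrt{\log k}\cdot 4\log(k/\epsilon)}{4\log^{1.5}(k/\epsilon)}=\sqrt{\tfrac{\log k}{\log(k/\epsilon)}}\le 1,
\]
as needed.

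Two further checks are required before invoking the corollary.

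\textbf{Range of $\delta_{i,j}$.} Corollary~\ref{cor:filter_functions} rests on Lemma~\ref{lem:almost_orthogonal_clusters}, which needs $\delta_{i,j}\le 1/2$. Using \eqref{eq:bound_D}, $D_{i,j}\ge k\min\{\ell,r\}$, so $K_{i,j}\le \max\{\ell,r\}/k\le 1$. Hence $\delta_{i,j}\le 4\epsilon/\sqrt{\log k}$, which is at most $1/2$ for small $\epsilon$. This is also where the minimum separation assumption $\dist(\cC_i,\cC_j)\ge d_{min}\min\{|\cC_i|,|\cC_j|\}$ is used. If $\delta_{i,j}$ happened to exceed $1/2$, the bound would hold trivially whenever $\delta_{i,j}\ge 1$ by Cauchy--Schwarz.

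\textbf{Constants.} I expect the main obstacle to be bookkeeping: ensuring the $C_H$ in $d_{min}$ matches the $C_H$ in the corollary, and that the $\log\log$-type slack (the factor $\log^{1.5}$ versus $\sqrt{\log k}\cdot\log$) absorbs the constant $4$ and the factor $3$ from bounding the logarithm. If the constants do not close directly, the fallback is to enlarge the constant inside $d_{min}$ by an absolute factor. This only changes $d_{min}$ by $O(1)$ and does not affect Claim~\ref{clm:bound_range_clusters} or Theorem~\ref{thm:multiband_approx}.
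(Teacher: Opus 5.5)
Your proposal is correct and follows essentially the same route as the paper. You substitute $\delta_{i,j}$ into the separation requirement of the second property of Corollary~\ref{cor:filter_functions} (the $2\Delta_{k,\epsilon}$ branch is irrelevant for correlated clusters), bound $\log\frac{|\cC_i||\cC_j|}{\delta_{i,j}}\le 3\log\frac{k}{\epsilon}$ via $\dist(\cC_i,\cC_j)\le 2\Delta_{k,\epsilon}$ (which the paper phrases as $D_{i,j}\le 2k\Delta_{k,\epsilon}/d_{min}$ with $\Delta_{k,\epsilon}/d_{min}=O(k)$), and let the $\log^{1.5}(k/\epsilon)$ in $d_{min}$ absorb the factor $4$ and the $\sqrt{\log k}$. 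Your extra check that $\delta_{i,j}\le 4\epsilon/\sqrt{\log k}\le 1/2$, via $K_{i,j}\le\max\{|\cC_i|,|\cC_j|\}/k$, is a detail the paper leaves implicit.
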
 

This proof relies on the following fact: Because $\Delta_{k,\epsilon} \le \frac{C_{M} \cdot k^2 \cdot \log(k/\epsilon)}{\epsilon}$ for any $k$ and $\epsilon$ and $d_{min}:=\frac{ C_H k \log^{1.5} (k/\epsilon)}{\epsilon}$,

\begin{equation}
\label{eq:max D in correlation}
\frac{\Delta_{k,\epsilon}}{d_{min}}\le \frac{C_M  k^2\log(k/\epsilon)/\epsilon}{C_H k\log^{1.5}(k/\epsilon)/\epsilon}\le \frac{C_M}{C_H} k=O(k).
\end{equation}

\begin{proof}

Applying \eqref{eq:IP_filter_wz} to $\dist(\cC_i,\cC_j)$, it is clear that any $\delta$ satisfies the inequality below would meet the condition of \eqref{eq:IP_filter_wz}:

\begin{equation}
\label{eq: limit_on_delta}
\dist(\cC_i,\cC_j)=D_{i,j}\cdot \frac{d_{min}}{k} \ge C_H\cdot \frac{|\cC_i|\cdot |\cC_j|}{\delta}\cdot\left(\log \frac{|\cC_i|\cdot |\cC_j|}{\delta}+\log \frac{k}{\epsilon}\right).
\end{equation}

We show that $\delta_{i,j}$ defined in \eqref{eq:selection_on_delta} satisfies \eqref{eq: limit_on_delta}. We begin by bounding the $\log\left(|\cC_i||\cC_j|/\delta_{i,j}\right)$ term in \eqref{eq: limit_on_delta} as follows: 

\begin{align}
\log\left(\frac{|\cC_i|\cdot |\cC_j|}{\delta_{i,j}}\right)
&=\log\left( \frac{D_{i,j}\cdot \log k}{4\epsilon}\right)\tag{by \eqref{eq:selection_on_delta}}\\
&\le \log\left(\frac{k\log^2 k}{4\epsilon}\cdot \frac{2\Delta_{k,\epsilon}}{d_{min}}\right)\tag{by the upper bound of \eqref{eq:bound_D}}\\
&\le \log\left(\frac{2k\log ^2 k}{4\epsilon}\cdot \frac{C_M}{C_H} k\right)\tag{by \eqref{eq:max D in correlation}}\\
&\le \log\left(\frac{C_M}{2C_H}\cdot \frac{k^2 \log^2 k}{\epsilon}\right)\nonumber\\
&\le \log\left(\frac{k^3}{\epsilon^3}\right)\tag{suppose $\frac{C_M\log ^2k}{2C_H}\le \frac{k}{\epsilon^2}$}\\
&=3\log (k/\epsilon).\label{eq:upper_bound_partial_term}
\end{align}

Now we are ready to finish the proof of \eqref{eq: limit_on_delta}:

\begin{align*}
C_H\cdot\frac{|\cC_i|\cdot |\cC_j|}{\delta_{i,j}}\cdot\left(\log \frac{|\cC_i|\cdot|\cC_j|}{\delta_{i,j}}+\log  k/\epsilon\right)
&\le 4C_H\cdot \frac{|\cC_i|\cdot |\cC_j|}{\delta_{i,j}}\cdot \log k/\epsilon \tag{by \eqref{eq:upper_bound_partial_term}}\\
&\le 4C_H\cdot\frac{\sqrt{\log k}}{4\epsilon}\cdot D_{i,j}\cdot \log k/\epsilon\tag{plug definition of $\delta_{i,j}$}\\
&\le \frac{d_{min}}{k}\cdot D_{i,j}.\tag{plug $d_{min}:=\frac{C_H k \log^{1.5} (k/\epsilon)}{\epsilon}$}
\end{align*}

\end{proof}

\begin{lemma}
For any $q,r\in [0,q_{max}]$,

\begin{equation}
\label{eq:block_bound}
\left\|K^{(q,r)}\right\|_{2\to 2}^2\le \frac{64}{k}\cdot 2^{\max\{q,r\}}.
\end{equation}
\end{lemma}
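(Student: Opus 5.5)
We bound the operator norm through the Frobenius norm, $\|K^{(q,r)}\|_{2\to2}^2\le\|K^{(q,r)}\|_F^2=\sum_{i\in\mathcal I_q,j\in\mathcal I_r}K_{i,j}^2$. By symmetry of $K$ we may assume $q\le r$, so $\max\{q,r\}=r$. For correlated $i\in\mathcal I_q$ and $j\in\mathcal I_r$ we have $K_{i,j}=|\cC_i||\cC_j|/D_{i,j}$ and $|\cC_i||\cC_j|\le 4\cdot 2^{q+r}$. The task is therefore to control $\sum_{j}D_{i,j}^{-2}$ for a fixed row $i$, summed over the correlated $j\in\mathcal I_r$.

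The key structural input is the separation condition. For $j,j'\in\mathcal I_r$ we have $\dist(\cC_j,\cC_{j'})\ge d_{min}\cdot 2^r$ in the correlated regime. Moreover, any two distinct clusters are disjoint in frequency range, since Algorithm~\ref{alg:cluster} merges everything lying between two merged clusters; so the clusters are sorted intervals. On each side of $\cC_i$ (say to the right), order the $j\in\mathcal I_r$ by distance from $\cC_i$.

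The spacing gives the following bounds.
\begin{itemize}
\item The $m$-th such cluster satisfies $\dist(\cC_i,\cC_j)\ge (m-1)\,d_{min}2^r$.
\item The first one satisfies $\dist(\cC_i,\cC_j)\ge d_{min}\min\{|\cC_i|,|\cC_j|\}\ge d_{min}2^q$.
\end{itemize}
Consequently $D_{i,j}\ge k\max\{2^q,(m-1)2^r\}$. Summing over both sides,
\[
\sum_{j\in\mathcal I_r}D_{i,j}^{-2}\le \frac{2}{k^2}\Big(2^{-2q}+\sum_{m\ge1}m^{-2}2^{-2r}\Big)\le \frac{2}{k^2}\big(2^{-2q}+2\cdot 2^{-2r}\big)\le\frac{6}{k^2}2^{-2q}.
\]
Hence each row contributes at most $16\cdot 2^{2q+2r}\cdot 6\,k^{-2}2^{-2q}=96\,k^{-2}2^{2r}$.

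Summing over rows $i\in\mathcal I_q$ uses $|\mathcal I_q|\le k/2^q$, because the clusters are disjoint and there are $k$ frequencies in total. This yields $\|K^{(q,r)}\|_F^2\lesssim k^{-1}2^{2r-q}$, which is a factor $2^{r-q}$ too big. The main obstacle is exactly this: the plain Frobenius bound loses when $r\gg q$.

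To fix it, I would instead sum over columns $j\in\mathcal I_r$, of which there are at most $k/2^r$, and bound $\sum_{i\in\mathcal I_q}D_{i,j}^{-2}$ for fixed $j$. The small clusters $i$ are spaced by $\ge d_{min}2^q$ among themselves. Their distance to $\cC_j$ is at least $d_{min}2^q$ (the min-size rule), so the $m$-th one on a side has $D_{i,j}\ge k\,m2^q$. This gives $\sum_i D_{i,j}^{-2}\le \frac{2}{k^2}2^{-2q}\cdot\frac{\pi^2}{6}\le\frac{4}{k^2}2^{-2q}$. Therefore
\[
\|K^{(q,r)}\|_F^2\le \frac{k}{2^r}\cdot 4\cdot 2^{2q+2r}\cdot\frac{4}{k^2}2^{-2q}=\frac{16}{k}2^{r},
\]
which is within the claimed $\frac{64}{k}2^{\max\{q,r\}}$; the slack absorbs the constants from both sides and from $|\cC|<2^{q+1}$.

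If the column-wise spacing among the small clusters $\mathcal I_q$ fails to hold in the correlated regime (pairs may instead be separated by $2\Delta_{k,\epsilon}$), I would note that uncorrelated pairs have $K_{i,j}=0$ anyway. For correlated pairs, the mutual distance of any two members of $\mathcal I_q$ within range is still $\ge d_{min}2^q$ or $\ge2\Delta_{k,\epsilon}$. In either case, \eqref{eq:bound_D} shows $D\ge k2^q$, so the harmonic-type sum still converges.
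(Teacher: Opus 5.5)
Your final column-wise argument is the paper's proof. Like the paper, you bound the operator norm by the Frobenius norm, fix a cluster from the larger size class $\mathcal I_{\max\{q,r\}}$, and bound $\sum D_{i,j}^{-2}$ over the correlated clusters of the smaller class by $O(k^{-2}2^{-2\min\{q,r\}})$; that step uses the consecutive-gap/triangle-inequality packing from \eqref{eq:bound_D}, and you then multiply by $|\mathcal I_{\max\{q,r\}}|\le k/2^{\max\{q,r\}}$, so your row-wise first attempt is only a discarded detour. One arithmetic slip: $(|\cC_i||\cC_j|)^2<16\cdot 2^{2q+2r}$, not $4\cdot 2^{2q+2r}$, so the correct count is $\frac{k}{2^r}\cdot 16\cdot 2^{2q+2r}\cdot\frac{4}{k^2 2^{2q}}=\frac{64}{k}2^r$. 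This is exactly the claimed bound, not a factor $4$ below it, so the slack you invoke is not actually there.
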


\begin{proof}

First, we plan to show that for any cluster $\cC_i$ with $q\le r= \lfloor\log_2 |\cC_i|\rfloor$.
\begin{equation}
\label{eq:inverse_square_packing}
\sum\limits_{\substack{i\neq j,j\in \mathcal I_q\\ \cC_j\text{ correlates with } \cC_i}}\frac{1}{D^2_{i,j}}\le\frac{4}{k^2 2^{2q}}
\end{equation}

Consider the indices on the right side of $\cC_i$. The indices with $j\in \mathcal I_q$ and $\cC_j$ being correlated with $\cC_i$ are denoted by $j_1,j_2,\cdots,j_p$ ordered by their original indices.

As lower bound of \eqref{eq:bound_D} shows, all $D_{i,j_1},D_{j_1,j_2},\ldots,D_{j_{p-1},j_p}$ are no smaller than $k 2^q$. By triangular inequality, $D_{i,j_x}\ge kx 2^q$ for any $x\in [1,p]$.

Therefore,

\begin{align}
\sum\limits_{x=1}^p\frac{1}{D^2_{i,j_x}}
&\le \sum\limits_{x=1}^p \frac{1}{(kx2^q)^2}\nonumber\\
&\le \frac{1}{k^2 2^{2q}}\sum\limits_{x=1}^{p} \frac{1}{x^2}\nonumber\\
&\le \frac{2}{k^2 2^{2q}}.\label{eq:one_sided_weight_sum_weighted1}
\end{align}

The same bound holds on the left side of $i$. 

Since $|\cC_j|\le 2^{q+1}$ for any $j\in \mathcal I_q$ and $|\cC_i|\le 2^{r+1}$, \eqref{eq:inverse_square_packing} implies

\begin{equation}
\label{eq:column_square_bound}
\sum\limits_{j\in \mathcal I_q}K_{i,j}^2=\sum\limits_{\substack{j\in \mathcal I_q\\ \cC_j\text{ correlates with } \cC_i}}\frac{|\cC_i|^2\cdot |\cC_j|^2}{D^2_{i,j}}\le \frac{4}{k^22^{2q}}\cdot 2^{2(q+1)}\cdot 2^{2(r+1)}=\frac{64}{k^2}2^{2r}.
\end{equation}

Furthermore, $|\mathcal I_{r}| 2^{r}\le \sum_{i\in\mathcal I_{r}} |\cC_i|\le k$. \eqref{eq:column_square_bound} implies

\[
\left\|K^{(q,r)}\right\|_{2\to 2}^2 \le \left\|K^{(q,r)}\right\|_F^2\le \frac{64k 2^r}{k^2}=\frac{64}{k}\cdot 2^r.
\]

As $K^{(q,r)}=K^{(r,q)}$, the case $q>r$ works similar, which finishes the proof.

\end{proof}

For $q,r\in [0,q_{max}]$, we define a matrix $B \in \mathbb{R}^{[q_{max}+1] \times [q_{max}+1]}$ where
\[
B_{q,r}:=\sqrt{\frac{64}{k}\cdot 2^r}
\]
such that $\|K^{(q,r)}\|_{2\to 2}^2\le B_{q,r}$.

\begin{lemma}
\label{lem:bound_K_euclidean}
The submatrix $K_S=K[S,S]$ satisfies

\[
\|K_S\|_{2\to 2}=O\left(\sqrt{\log k}\right).
\]
\end{lemma}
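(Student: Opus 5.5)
We bound $\|K_S\|_{2\to 2}$ by splitting $K_S$ into the blocks $K^{(q,r)}$ indexed by the size classes $\mathcal I_0,\ldots,\mathcal I_{q_{max}}$, with $q_{max}+1=O(\log k)$. For any unit vector $v\in\mathbb C^{S}$, write $v=\sum_q v_q$ with $v_q$ supported on $\mathcal I_q$. Then
\[
\|K_S v\|_2^2=\sum_q \Big\|\sum_r K^{(q,r)} v_r\Big\|_2^2 \le \sum_q\Big(\sum_r \|K^{(q,r)}\|_{2\to 2}\,\|v_r\|_2\Big)^2 .
\]
So $\|K_S\|_{2\to2}\le \|A\|_{2\to2}$ for the $(q_{max}+1)\times(q_{max}+1)$ nonnegative matrix $A_{q,r}:=\|K^{(q,r)}\|_{2\to 2}$. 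This standard block-norm comparison reduces the problem to a small matrix.

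The previous lemma \eqref{eq:block_bound} gives $A_{q,r}\le \sqrt{64\cdot 2^{\max\{q,r\}}/k}$. The naive route is $\|A\|_{2\to2}\le\|A\|_F$, which gives $\|A\|_F^2\le \sum_{q,r}\frac{64}{k}2^{\max\{q,r\}}\le \frac{64}{k}\cdot O(q_{max})\cdot 2^{q_{max}+1}=O(\log k)$, since $2^{q_{max}}\le k$. Hence $\|K_S\|_{2\to2}\le \|A\|_F=O(\sqrt{\log k})$, exactly the claimed bound.

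In more detail, for fixed $m=\max\{q,r\}$ there are at most $2m+1$ pairs $(q,r)$, so
\[
\sum_{q,r}2^{\max\{q,r\}}\le \sum_{m\le q_{max}}(2m+1)2^m\le (2q_{max}+1)2^{q_{max}+1}.
\]
This is $O(k\log k)$, which yields $O(\log k)$ after dividing by $k$.

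The only delicate point is the block-decomposition inequality; it follows from the triangle inequality and the operator-norm bounds on each block. No further cancellation is needed, because the factor $2^{\max\{q,r\}}/k$ already decays geometrically away from the top class, so only the $O(\log k)$ count of classes contributes. If one wanted to avoid the Frobenius step, a Schur test with weights $2^{q/2}$ would give the same $O(\sqrt{\log k})$.
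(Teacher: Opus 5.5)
Your proof is correct and is essentially the paper's argument: the same block decomposition over the size classes $\mathcal I_q$, the same reduction to the $(q_{max}+1)\times(q_{max}+1)$ majorant matrix (you correctly use $2^{\max\{q,r\}}$, where the paper's definition of $B_{q,r}$ has a typo), and the same Frobenius estimate $\frac{64}{k}\sum_{m}(2m+1)2^m=O(\log k)$. Your closing aside is wrong and should be dropped or repaired. A Schur test with weights $p_q=2^{q/2}$ only yields $O(\sqrt{k})$, because row $q=0$ gives $\frac{8}{\sqrt{k}}\sum_{r}2^{r}=\Theta(\sqrt{k})$ while $p_0=1$. Weights such as $p_q=(q_{max}-q+1)^{-1/2}$ would be needed to recover $O(\sqrt{\log k})$.
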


\begin{proof}

For any vector $v\in\mathbb C^n$, let $v^{(r)}$ be the subvector corresponds to $\mathcal I_r$ and $w_r:=\|v^{(r)}\|_2$.

For each $q\in [0,q_{max}]$, we have:

\begin{align*}
\left\|(K_Sv)^{(q)}\right\|_2
&=\left\|\sum_{r=0}^{q_{\max}}K^{(q,r)}v^{(r)}\right\|_2\\
&\le \sum\limits_{r=0}^{q_{max}}\left\|K^{(q,r)}\right\|_{2\to 2} \left\|v^{(r)}\right\|_2\\
&\le \sum\limits_{r=0}^{q_{max}} B_{q,r}w_r.
\end{align*}

Therefore,

\begin{equation}
\label{eq:block_aggregation}
\|K_S v\|_2^2\le \|Bw\|_2^2
\le\|B\|_{2\to 2}^2\|w\|_2^2=\|B\|_{2\to 2}^2\|v\|_2^2.
\end{equation}

Taking the supremum over $v\neq 0$ gives
$\|K\|_{2\to 2}\le \|B\|_{2\to 2}$.  Finally,

\begin{align}
\|B\|_{2\to2}^2
&\le\|B\|_F^2\notag\\
&\le\frac{64}{k}\sum_{q,r=0}^{q_{\max}}2^{\max(q,r)}\notag\\
&=\frac{64}{k}\sum_{m=0}^{q_{\max}}(2m+1)2^m\notag\\
&\le\frac{64}{k}\cdot 6 q_{\max}2^{q_{\max}}
=O(\log k).\label{eq:B_frobenius_bound}
\end{align}

Combining \eqref{eq:block_aggregation} and
\eqref{eq:B_frobenius_bound} and taking the square root of \eqref{eq:B_frobenius_bound} prove the lemma.

\end{proof}

\begin{proofof}{lemma~\ref{lem:total_energy}}

Fix $S$ and denote $z_i=\|M\cdot x_{\cC_i}\|_2$ for $i\in S$. By Lemma~\ref{lem:bound_K_euclidean}, we have
\begin{align*}
\left|\left\|M \cdot \sum_{i\in S} x_{\cC_i} \right\|_2^2-\sum_{i\in S} \|M \cdot x_{\cC_i} \|_2^2\right|
&=\left|\left\| \sum_{i\in S} M \cdot x_{\cC_i} \right\|_2^2 -\sum_{i\in S} \|M \cdot x_{\cC_i} \|_2^2\right|\\
&=\sum_{\substack{i,j\in S\\ i \neq j}} \langle M \cdot x_{\cC_i}, M \cdot x_{\cC_j}\rangle \\
& \le \sum_{\substack{i,j\in S\\ i \neq j}} \delta_{i,j} \cdot \|M \cdot x_{\cC_i}\|_2\|M \cdot x_{\cC_j}\|_2 \tag{by Claim~\ref{clm:satisbility_of_delta_construction}}\\
&\le \frac{4\epsilon}{\sqrt{\log k}}z^\top K_S z \tag{by \eqref{eq:selection_on_delta} and the definition of $K_S$}\\
&\le \frac{4\epsilon}{\sqrt{\log k}}\|K_S\|_{2\to 2}\|z\|_2^2\\
&=O(\epsilon)\cdot \sum_{i\in S} \|M \cdot x_{\cC_i} \|_2^2.
\end{align*}

\end{proofof}

\subsection{Proof of Claim~\ref{clm:bound_range_clusters}}

Let $r_\ell$ denote the maximum possible length of the range of any cluster with at most $\ell$ frequencies. Since a cluster is generated by merging two smaller clusters, we have
\[
r_\ell=\max\limits_{i=1}^{\ell-1} \left\{r_i+r_{\ell-i}+\min\left\{ d_{min}\cdot \min\{i,\ell-i\}, 2\Delta_{k,\epsilon}\right\}\right\}
\]
for any $\ell \ge 2$. Based on symmetry, the upper bound of $i$ can be replaced with $\frac{\ell}{2}$.

Let $i_0:= \frac{2\Delta_{k,\epsilon}}{d_{min}}$ such that $d_{min}\cdot i_0 = 2\Delta_{k,\epsilon}$, and $C' := \frac{1}{2\log 2}$.

We prove the following hypothesis by induction in $\ell$:
\[
r_\ell \le
\begin{cases}
d_{min}\cdot C' \ell\log \ell,   &\ell\le i_0,\\
d_{min}\cdot (C' \ell\log i_0 +\ell-i_0),    & \ell>i_0.
\end{cases}
\]

The base case $\ell=1$ follows from $r_{\ell}=0$. 

Let $p_{\ell,i}:=\left(r_i+r_{\ell-i}+\min\big\{d_{min}\cdot \min\{i,\ell-i\}, 2\Delta_{k,\epsilon}\}\right) / d_{min}$. It suffices to bound different cases of $p_{\ell,i}$ for the inductive step of $r_\ell$.

\begin{enumerate}
\item If $0<i \le  \ell -i \le i_0$, we have $\ell \le 2i_0$. Thus,
\begin{align*}
p_{\ell,i}
&\le C' i\log i + C'(\ell -i)\log(\ell -i)+ i
=C' \ell \log \ell. 
\end{align*}

Since $x\ln x +(1-x)\ln(1-x)\le 4\ln2\cdot x(x-1)$ for $0<x<1$ and $2x(x-1)+x = 2x(x-1/2)\le 0$ for $0< x <1/2$, the last inequality holds.

Since $1-\frac{1}{x} \ge \frac{\log_2 {x}}{2}$ for $1\le x\le 2$, we have $C'\ell\log i_0 + \ell - i_0\ge C'\ell\log \ell$ for $i_0\le \ell\le 2i_0$, then
$$
p_{\ell,i} \le
\begin{cases}
C'\ell \log \ell, &\ell \le i_0,\\
C'\ell\log i_0 +\ell  -i _0, &\ell > i_0.
\end{cases}
$$
    
\item If $0< i\le i_0 < \ell -i$, we have $\ell > i_0$. Thus, 
\begin{align*}
p_{\ell,i}\le C' i\log i + (C'(\ell-i)\log i_0 + (\ell - i) - i_0) + i\le C'\ell \log i_0 +\ell - i_0.
\end{align*}    

\item If $i_0 < i \le \ell -i $, we have $\ell > i_0$. Thus,
\begin{align*}
p_{\ell,i}
\le  (C'i\log i_0 + i - i_0) +  (C'(\ell-i)\log i_0 + (\ell - i) - i_0)  + i_0= C'\ell \log i_0 + \ell - i_0.
\end{align*}
\end{enumerate}
Combining all of the above cases with $r_\ell=d_{min} \cdot \max\limits_{i=1}^{\ell/2}p_{\ell,i}$ finishes the proof of $r_{\ell}$.

Same as Claim~\ref{clm:bound_range_clusters}, $\sum_{i \in [n]} |range(\cC_i)|$ is upper bounded by $r_{k}$. Combining $d_{min}:=\frac{C_H k \log^{1.5} k/\epsilon}{\epsilon} $ and \eqref{eq:max D in correlation} indicate a bound of $r_k$ and $\sum_{i \in [n]} |range(\cC_i)|$ as follows:
\begin{align*}
\sum_{i \in [n]} |range(\cC_i)| &\le r_k \le d_{min}\cdot(\frac{k}{2\log 2} \cdot  \log \frac{2\Delta_{k,\epsilon}}{d_{min}}+ k) \\
&= O \left(\frac{k^2\log^{1.5} k}{\epsilon} 
\cdot \log k\right) = \tilde{O}(\frac{k^{2}}{\epsilon}).
\end{align*}

\section*{Acknowledgements}
We use GPT to simplify the proof of Lemma~\ref{lem:total_energy}. GPT was not used in any part of the exposition. The authors take responsibility for all contents.

\bibliographystyle{alpha} 
\bibliography{bibFFT}

\appendix

\end{document}